\newcommand{\N}{\mathbb{N}}
\newcommand{\R}{\mathbb{R}}
\newcommand{\C}{\mathbb{C}}
\newcommand{\Z}{\mathbb{Z}}
 \newcommand{\e}{\varepsilon}
\newtheorem{theorem}{Theorem}[section]
\newtheorem{corollary}[theorem]{Corollary}
\theoremstyle{definition}\newtheorem{definition}[theorem]{Definition}
\theoremstyle{definition}
\theoremstyle{definition}\newtheorem{remark}[theorem]{Remark}
\theoremstyle{definition}\newtheorem{example}[theorem]{Example}
\theoremstyle{definition}
\theoremstyle{definition}
\newtheorem{assumption}{Assumption}
\newtheorem{lemma}[theorem]{Lemma}
\title{Stability of the gapless pure point spectrum of self-adjoint operators}
\author{Paolo Facchi}
\address[Paolo Facchi]{Dipartimento di Fisica, Universit\`a degli Studi di Bari, I-70126 Bari, Italy, and INFN, Sezione di Bari, I-70126 Bari, Italy}
\author[Marilena Ligab\`o]{Marilena Ligab\`o$^*$}
\address[Marilena Ligab\`o]{Dipartimento di Matematica, Universit\`a degli Studi di Bari, I-70125 Bari, Italy}
\email{marilena.ligabo@uniba.it}
\thanks{$^*$ corresponding author}
\date{10/11/2022}
\begin{document}
\begin{abstract}
We consider a self-adjoint operator $T$ on a separable Hilbert space, with  pure-point and simple spectrum with accumulations at finite points. Explicit conditions are stated  on the eigenvalues of $T$ and on the bounded perturbation $V$ ensuring the global  stability  of the spectral nature of $T+V$.
\end{abstract}
\maketitle

\section{Introduction}
Let  $T$  be a self-adjoint operator on  a separable Hilbert space  ${\mathcal H}$ with domain $D(T)$.  Assume  that the spectrum of $T$ is pure point, i.e. $\sigma(T)=\sigma_{\textrm{pp}}(T)$, and simple.   Denote $(\tau_n)_{n\in\N}$ the  eigenvalues,   and $(e_n)_{n\in\N}\subset {\mathcal H}$ the corresponding orthonormal eigenvectors. 
 
This paper deals with the global stability of  $\sigma(T)$ under   bounded perturbations $V\in  {\mathcal L}({\mathcal H})$. We are interested in situations where the eigenvalues have accumulations at finite points, as in the extreme case of a dense pure point spectrum.   It will be shown, through  a quantum KAM iterative scheme  that, if $\e \in \R$ is small enough, $\sigma(T+ \e V)$ is globally stable, i.e. pure-point and simple,  for a large class of 
perturbations $V$, provided the following four assumptions are fulfilled:
 \begin{enumerate}
 \item[ (A.1)]  There exist $c \geq 1$ and $ \gamma >0 $ such that  for all $n, m \in \N, n\neq m$:
 \begin{equation}
 \label{A.1}
\frac{1}{|\tau_{m}-\tau_n|} \leq c \; |m-n|^{\gamma};
\end{equation}
 \item[(A.2)]  For all $n, m \in \N,  m > n$: 
 \begin{equation}
  \frac{1}{|\tau_{m}-\tau_n|} \leq  c \left( 1+ \frac{1}{\left|\tau_{m-n}-\tau_0 \right|}\right);  \label{A.1ter}
\end{equation}
\item[(A.3)]  For all $n, m \in \N,  m > n$, and for all $\ell \in \N^\ast=\N\setminus\{0\}$:
 \begin{equation}  \label{A.2}
 \left| \frac{1}{\tau_\ell-\tau_0}\left(\frac{1}{\tau_{m+\ell}-\tau_{n+\ell} } - \frac{1}{\tau_{m}-\tau_n} \right) \right| \leq  c \left( 1+ \frac{1}{\left|\tau_{m-n}-\tau_0 \right|}\right); 
\end{equation}
\item[ (A.4)] 
There exist $C, \alpha>0$ such that for all $k \in \N$:    
\begin{equation}
\label{A.3}
\sup_{m \in \N} |V_{m,m+k}|+\sup_{m \in \N, \,\ell \in \N^\ast} \left|\frac{V_{m+\ell,m+\ell+k}-V_{m,m+k}}{\tau_\ell -\tau_0}\right| \leq C e^{-\alpha k},
\end{equation}
where $V_{m,n}:=\langle  e_m,V e_n\rangle$, $m,n \in\N $.  
 \end{enumerate}

 \begin{remark} 
  The choice of  $\tau_0$ within $\sigma(T)$ in (A.2)--(A.4) is arbitrary. It can be replaced by  any other $\tau_p\in \sigma(T)$, up to the corresponding replacements:   $\tau_{m-n}-\tau_0\to \tau_{p+m-n}- \tau_p$ in formulas~\eqref{A.1ter} and~\eqref{A.2}, and $\tau_\ell-\tau_0\to \tau_{p+ \ell}- \tau_p$ in formula~\eqref{A.3}. 
  \end{remark}
 \begin{theorem}[Stability of pure point spectrum]
 \label{mainth}
Let $T(\varepsilon)=T+\varepsilon  V$ be the family of operators defined on $D(T)$, $\forall\,\varepsilon\in\R$.  Assume the validity of conditions (A.1)--(A.4). Then there exists $\e^*>0$ such that $\sigma(T(\e))$ is purely point and simple for all $\e\in\R$ with  $|\e| \leq \e^*$.
\end{theorem}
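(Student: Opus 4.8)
The plan is to implement a quantum KAM iterative scheme, conjugating $T(\e)$ by a sequence of unitaries $U_j = e^{\e^{?} A_j}$ (with anti-self-adjoint $A_j$) so that, after the $j$-th step, the off-diagonal part of the transformed operator is of order $\e^{2^j}$ in a suitable norm, while the diagonal part remains a small perturbation of $T$. Concretely, writing $T(\e) = D_0 + \e V_0$ with $D_0 = T$ diagonal in the basis $(e_n)$ and $V_0 = V$, one step of the scheme seeks $A$ solving the homological equation $[D_0, A] = \mathcal{O}(V_0) - (\text{diagonal part of } V_0)$; in the eigenbasis this means $A_{m,n} = (V_0)_{m,n} / (\tau_m - \tau_n)$ for $m \neq n$ and $A_{m,m} = 0$. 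Then $e^{-\e A}(D_0 + \e V_0)e^{\e A} = D_1 + \e^2 V_1$ where $D_1 = D_0 + \e\,\mathrm{diag}(V_0)$ and $V_1$ collects the commutator remainders. Conditions (A.1)--(A.4) are precisely what is needed to control the small divisors $(\tau_m - \tau_n)^{-1}$ appearing in $A$ and in the iterated commutators: (A.1) gives a polynomial bound ensuring $A$ is bounded, while (A.2), (A.3) and the two-part estimate in (A.4) are tailored to propagate a norm of the form "$\sup$ over diagonals $k = n - m$ of an exponentially weighted quantity, together with its discrete derivative along diagonals" — i.e. the norm appearing on the left-hand side of \eqref{A.3} — through the nonlinear step.

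The key steps, in order, would be: (i) Fix the Banach space of operators measured by a norm $\|W\| = \sum_{k} e^{\alpha' |k|}\big(\sup_m |W_{m,m+k}| + \sup_{m,\ell} |(W_{m+\ell,m+\ell+k} - W_{m,m+k})/(\tau_\ell - \tau_0)|\big)$ for a decreasing sequence of exponents $\alpha' = \alpha_j$, and verify that (A.4) says $\|V_0\| < \infty$ in the initial norm. (ii) Prove the solvability of the homological equation: the operator $W \mapsto A$ with $A_{m,n} = W_{m,n}/(\tau_m - \tau_n)$ (off-diagonal) maps the space with exponent $\alpha_j$ boundedly into the space with exponent $\alpha_{j+1} < \alpha_j$, with a bound that loses a factor like $(\alpha_j - \alpha_{j+1})^{-\gamma-1}$; here (A.1) handles the $\sup|A_{m,m+k}|$ part and (A.2)--(A.3) handle the discrete-derivative part, the telescoping identity for $1/(\tau_{m+\ell} - \tau_{n+\ell}) - 1/(\tau_m - \tau_n)$ being exactly the structure of (A.3). (iii) Establish a quadratic estimate: if $\|\e V_j\|_{\alpha_j} = \e_j$ then the new off-diagonal remainder satisfies $\|\e^2 V_{j+1}\|_{\alpha_{j+1}} \leq C (\alpha_j - \alpha_{j+1})^{-2(\gamma+1)} \e_j^2$, using the algebra property of these weighted norms (the exponential weight is sub-multiplicative over diagonals) to bound products and commutators. (iv) Choose $\alpha_j = \alpha(1 - \sum_{i\le j} 2^{-i-1})$ so the losses $(\alpha_j - \alpha_{j+1})^{-2(\gamma+1)} \sim (C' 4^j)^{\gamma+1}$ grow only geometrically while the errors square; a standard super-exponential estimate then shows $\e_j \to 0$ provided $\e_0 = |\e| \cdot \|V\|_{\alpha} \leq \e^*$ for a suitable threshold. (v) Show the infinite product $U = \prod_j U_j$ converges strongly to a unitary preserving $D(T)$, that $U^{-1}T(\e)U = D_\infty$ with $D_\infty$ diagonal in a basis $(U^{-1} e_n)$ and $D_\infty - T$ bounded with summable diagonal corrections, hence $D_\infty$ has pure point simple spectrum; finally conclude $\sigma(T(\e)) = \sigma(D_\infty)$ is pure point and simple since unitary conjugation preserves the spectral type.

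The main obstacle I expect is step (iii)--(iv): making the quadratic iteration actually close despite the accumulation points of the spectrum, which forbid any uniform spectral gap and force the small divisors to be controlled only in the averaged/weighted sense of (A.1)--(A.3). One must check that the weighted norm with the "derivative along diagonals" term is genuinely a Banach algebra under operator multiplication — the cross term $(WW')_{m+\ell,m+\ell+k} - (WW')_{m,m+k}$ expands, by discrete Leibniz, into a sum of two pieces each carrying one derivative factor and one sup factor, which is what forces the norm to include that second term in the first place — and that composition with the divisor map $W\mapsto A$ does not destroy this structure; this is precisely where (A.3), with its peculiar combination $\frac{1}{\tau_\ell-\tau_0}\big(\frac{1}{\tau_{m+\ell}-\tau_{n+\ell}} - \frac{1}{\tau_m-\tau_n}\big)$, is used, and verifying the bookkeeping of exponents and combinatorial constants through the commutator expansions is the technical heart of the argument. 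A secondary subtlety is the domain question: since $V$ is bounded, $D(T(\e)) = D(T)$ for all $\e$, and one should check that each $U_j$ and the limit $U$ map $D(T)$ onto itself so that the intertwining identity $U^{-1}T(\e)U = D_\infty$ is meaningful on $D(T)$; boundedness of the $A_j$ makes this routine but it must be stated.
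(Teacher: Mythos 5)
Your road map captures the broad KAM architecture of the paper --- the weighted norm carrying a ``discrete derivative along diagonals'' (this is exactly the paper's $\mathcal{M}_T$ and $\|\cdot\|_{\alpha}$ norms), the Banach-algebra and divisor-map estimates with losses in the exponential weight, the geometric choice $\alpha_j-\alpha_{j+1}\sim 2^{-j}$, the quadratic iteration, and the Cauchy argument for the infinite conjugation. But there is one structural gap, and it is precisely the step the paper singles out as the hardest.

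In your step (ii) the homological equation is solved with the \emph{unperturbed} divisors, $A_{m,n}=W_{m,n}/(\tau_m-\tau_n)$, and the same map $W\mapsto A$ is invoked at every iteration. This cannot give quadratic convergence. After $\ell$ steps the transformed diagonal is $T^{(\ell+1)}=T+\sum_{j\le\ell}\e_j\,[V^{(j)}]$, with eigenvalues $\lambda^{(\ell+1)}_n=\lambda_n+\sum_{j\le\ell}\e_j V^{(j)}_{n,n}$. If at step $\ell$ you solve $[T,A]=\mathrm{off}(V^{(\ell)})$ rather than $[T^{(\ell+1)},A]=\mathrm{off}(V^{(\ell)})$, the conjugation leaves a cross term of size $\|[T^{(\ell+1)}-T,A]\|\sim\e\cdot\e_\ell$, which dominates the intended $\e_\ell^2$ remainder; the iteration is then only geometrically convergent ($\e_{\ell+1}\sim\e\,\e_\ell$), and a geometric rate cannot beat the product of analytic losses $\prod_j(\alpha_j-\alpha_{j+1})^{-O(\gamma+d)}\sim\prod_j 4^{jO(\gamma+d)}$. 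To close quadratically you must use the updated divisors $\lambda^{(\ell+1)}_{m+k}-\lambda^{(\ell+1)}_{m}$ --- and then you are obliged to show that these \emph{perturbed} eigenvalues still satisfy a Diophantine condition with a constant that does not degrade as $\ell\to\infty$. That is the content of the paper's Theorem~\ref{thm:propdiof} (``stability of small denominators''): if $v\in\mathcal{M}_T$ with $\|v\|_T\le 1/(4c)$, then $(\Theta_k\tilde\lambda-\tilde\lambda)^{-1}\in\mathcal{M}_T$ with $\|(\Theta_k\tilde\lambda-\tilde\lambda)^{-1}\|_T\le 12c^2|k|^{2\gamma}$ \emph{uniformly} in $v$. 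This is where (A.2) and (A.3) are actually consumed --- not in showing the fixed divisor map preserves the algebra structure, as in your reading, but in proving that (A.1) is stable (with the degraded exponent $|k|^{2\gamma}$) under any $\mathcal{M}_T$-small shift of the eigenvalues, which is what lets every iteration reuse one and the same divisor bound. Without this lemma the scheme you describe does not close.

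A secondary point: you parametrize the conjugations as $e^{\e A}$ with $A$ anti-self-adjoint, hence unitary, but Theorem~\ref{mainth} does not assume $V^*=V$, so $A^*=-A$ fails in general. The paper constructs merely invertible $W^{(\ell)}$ with $W^{(\ell)}_{m,m}=1$, $W^{(\ell)}_{m,m+k}=\e_\ell V^{(\ell)}_{m,m+k}/(\lambda^{(\ell+1)}_{m+k}-\lambda^{(\ell+1)}_m)$, bounds the remainder by the explicit product ${W^{(\ell)}}^{-1}(\e_\ell V^{(\ell)}-\e_\ell[V^{(\ell)}])(W^{(\ell)}-I)$ (which avoids the BCH expansion entirely), and recovers unitarity in the self-adjoint case only at the very end via polar decomposition $\tilde U=U(U^*U)^{-1/2}$. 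For the pure-point conclusion in Theorem~\ref{mainth}, invertibility suffices, so your unitarity claim should be dropped or restricted.
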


\begin{remark}
The {\it   local}  stability (in the sense of Kato~\cite{ref:Kato}) of the simple point spectrum follows from standard perturbation theory. Let $\tau$ be an isolated eigenvalue of $T$. The  geometrical condition  
\begin{equation}
\label{1.2}
\| \e V\| < \frac{d_\tau}{2} , \quad  d_\tau:=\mathrm{dist}(\tau, \; \sigma(T) \setminus \{\tau\}),
\end{equation}
entails the existence of  a simple isolated eigenvalue of $T+ \e V$  near $\tau$. 
The  {\it global} spectral stability is in turn  implied by  the {\it uniform} bound  
\begin{equation}
\label{1.3}
\| \e V\| < \frac{\delta}{2}, \quad \delta :=  \inf_{\tau\in\sigma(T)}d_\tau>0.
\end{equation}
Here, however, $\delta=0$. Condition (A.4)  takes care of of this  aspect. 
In the simplest situation,  namely with $V$  and $T$ admitting the same eigenvectors,  it reduces to
$$
\|V\| + \sup_{m \in \N, \,\ell \in \N^\ast} \left|\frac{V_{m+\ell,m+\ell}-V_{m,m}}{\tau_\ell -\tau_0}\right| \leq  C,
$$
which is a Lipschitz condition on $f$, with $f(\tau_m):=V_{m,m}$. 
In particular, if  a subsequence   $(\tau_{\ell_j})_{j \in \N}\subset \sigma(T)$ has a limit point for $j \to +\infty$,  the same must be true for the  corresponding sequence  $(V_{m+\ell_j})_{j \in \N}$ for any fixed $m$. In other words,  the nature of $\sigma(T)$ is preserved if  $V$ is \emph{shaped} like~$T$.

\end{remark}

 This concept  can be  extended to the  general case:
 
\begin{definition}
\label{def:shapeL}
Let $V\in  {\mathcal L}({\mathcal H})$ and $V_{m,k}=\langle  e_m,V e_k\rangle$ be its matrix elements in the eigenbasis of $T$. 
We say that $V$ is $T$-shaped 
if for all $k \in \N$:
$$
\sup_{n \in \N, \,\ell \in \N^\ast} \left|\frac{V_{n+\ell,n+\ell+k}-V_{n,n+k}}{\tau_\ell -\tau_0}\right| <+\infty
$$
and $(C,\alpha)$-exponentially decreasing $T$-shaped if in addition (A.4) above holds. 
\end{definition}

In this way Theorem~\ref{mainth}  
 can be reformulated as follows:
 \par\noindent 
 \textit{The operator  $T(\e)=T+  \e V$, $\e \in \R$, has a simple pure-point spectrum provided $\sigma(T)$ fulfills  conditions (A.1)--(A.3),
  $V$ is  $(C,\alpha)$-exponentially decreasing $T$-shaped, and  $\e$ is small enough. } 
 \par\noindent
 
The $T$-shaping of the perturbation $V$ allows to apply a variant of the KAM scheme~\cite{ref:KAM}: the small denominators generated by the procedure are controlled through a propagation of the  Diophantine conditions (A.1)  made possible by the appropriate shaping (A.4) of the perturbation according to  $\sigma(T)$. 

The Article is organized as follows: In Sec.~\ref{sect:examples} we state Theorem~\ref{thm:mainth2}, that is a more general formulation of Theorem~\ref{mainth}, and we discuss the advances of our result with respect to the literature, while in Sec.~\ref{sectionex} we exhibit some concrete examples. 
As a preliminary step, in  Sect.~\ref{stabilityden}  we prove that Assumption~(A.4)  implies  stability, under the perturbation,  of the Diophantine conditions~(A.1) on the eigenvalues $(\tau_n)_{n\in \N}$. In Sec.~\ref{sect:proof} we prove Theorem~\ref{thm:mainth2} using  a quantum KAM scheme. Finally, in Appendix~\ref{sect:lemmas} we prove some supplementary technical Lemmas.

\section{Stability of the pure point spectrum of operators on $\ell^2(\Z^d)$}  \label{sect:examples}
Theorem~\ref{mainth} is a  straightforward corollary of the following theorem on the stability of the pure point spectrum of operators on $\ell^2(\Z^d)$, $d\geq 1$. Let  $T=T^\ast$ be a multiplication operator  on $\ell^2(\Z^d)$ with purely point spectrum and simple  eigenvalues,  denoted by $(\lambda_n)_{n\in\Z^d} \subset \R$, corresponding to the canonical basis $(e_n )_{n \in \Z^d}$ of $\ell^2(\Z^d)$, with $e_n=(\delta_{m,n})_{m \in \Z^d}$, so that
\begin{equation}
 D(T)=\left\{v=(v_n)_{n \in \Z^d}\in \ell^2(\Z^d)\,:\,\sum_{n\in\Z^d} \lambda_n^2 |v_n|^2<+\infty\right\}, \qquad T v = \sum_{n\in\Z^d} \lambda_n v_n,	
\end{equation}
and let  $V \in {\mathcal L\left(\ell^2(\Z^d)\right)}$.   Assume that: 
\begin{assumption}\label{A.1new}
There exist $c \geq 1$ and $ \gamma >0 $ such that  for all $k \in \Z^d_0:= \Z^d \setminus \{0\}$:
 \begin{equation}\label{A.1d}
\sup_{n \in \Z^d}  \frac{1}{|\lambda_{n+k}-\lambda_n|} \leq c \; |k|^{\gamma},
\end{equation}
where $|k|:=|k_1|+\dots+|k_d|$;
\end{assumption}
\begin{assumption}\label{A.2new}
For all $k \in \Z^d_0$:
\begin{equation}\label{A.2.1d}
\sup_{n \in \Z^d} \frac{1}{|\lambda_{n+k}-\lambda_n|} \leq
  c \left( 1+ \frac{1}{|\lambda_k-\lambda_0|}\right);  
\end{equation}
\end{assumption}
\begin{assumption} \label{A.3new}
For all $k \in \Z^d_0$:
 \begin{equation}\label{A.2.2d}
 \sup_{ n\in \Z^d, \, j \in \Z^d_0} \left| \frac{1}{\lambda_j-\lambda_0}\left(\frac{1}{\lambda_{n+j+k}-\lambda_{n+j} } - \frac{1}{\lambda_{n+k}-\lambda_n} \right) \right| \leq
 c \left( 1+ \frac{1}{|\lambda_k-\lambda_0|}\right);  
\end{equation}
\end{assumption}
\begin{assumption}\label{A.4new}
There is an $\alpha>0$ such that
\begin{equation}
\label{A.3d}
\sup_{k\in\Z^d}e^{\alpha |k|}\left(\sup_{m \in \Z^d} |V_{m,m+k}|+\sup_{m \in \Z^d, \,j \in \Z^d_0} \left|\frac{V_{m+j,m+j+k}-V_{m,m+k}}{\lambda_j -\lambda_0}\right|\right)  < +\infty ,
\end{equation}
where $V_{m,n}=\langle  e_m,V e_n\rangle$, $m,n \in\Z^d $ are  the matrix elements of $V$ in the  canonical basis.  
\end{assumption}
Our main result is the following.
 \begin{theorem}[Stability of pure point spectrum for operators on $\ell^2(\Z^d)$]
 \label{thm:mainth2}
Let $T(\varepsilon)=T+ \e V$ be the family of  operators defined on $D(T)$, $\forall\,\varepsilon\in\R$.  If Assumptions~\ref{A.1new}--\ref{A.4new}  hold, then there exists $\e^*>0$ such that  if $|\e| \leq \e^*$  then:
\begin{itemize}
\item The spectrum of $T(\e)$ is pure point and simple and its  eigenvalues $(\lambda_n(\e))_{ n\in\Z^d}$ fulfill the Diophantine condition
\begin{equation}
\label{Diofanto3}
\frac{1}{|\lambda_{m}(\e)-\lambda_n(\e)|} \leq 12c^2 \, |m-n|^{2\gamma}, \qquad \forall m,n \in \Z^d, m\neq n;
\end{equation}
\item There exists an invertible $U \in \mathcal{L}\left(\ell^2(\Z^d)\right)$  such that
\begin{equation}
U^{-1}(T+\e V)U=T^{(\infty)},
\end{equation}
where $T^{(\infty)} e_n=\lambda_n(\e) e_n$, for all $n \in \Z^d$;
\item Moreover, if $V^*=V$, then $U$ is unitary and $T(\e)$ has a complete set of exponentially localized orthonormal eigenvectors $(u_n(\e))_{n \in \Z^d}$ with localization length $(\alpha-\sigma)^{-1}$, where $\sigma= \min \{1,\alpha/2\}$, i.e.  for all $n \in \Z^d$ there is $C_n >0$ such that:
\begin{equation}\label{eqn:exploc}
|\langle e_j, u_n(\e) \rangle| \leq C_n e^{-(\alpha-\sigma)|j-n|}, \qquad \textrm{for all $j \in \Z^d$}.
\end{equation}
\end{itemize}
\end{theorem}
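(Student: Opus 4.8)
The plan is to run a quantum KAM (Newton) scheme conjugating $T(\e)=T+\e V$ to a diagonal operator by an infinite product of near-identity transformations, while propagating through the iteration three pieces of data: a Diophantine constant for the current diagonal part, an exponential decay rate, and the size of the $T$-shaped norm of the current perturbation. Write $\|A\|_\beta:=\sup_{k\in\Z^d}e^{\beta|k|}\big(\sup_{m}|A_{m,m+k}|+\sup_{m,\,j\in\Z^d_0}|(A_{m+j,m+j+k}-A_{m,m+k})/(\lambda_j-\lambda_0)|\big)$ for the norm appearing in Assumption~\ref{A.4new}, so that $\eta_0:=\|V\|_\alpha<+\infty$. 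After $s$ steps we will have an operator $T_s=D_s+\e_s V_s$ with $D_s e_n=\lambda_n^{(s)}e_n$ diagonal, $V_s\in\L(\ell^2(\Z^d))$, $D_0=T$, $V_0=V$, $\e_0=\e$. One Newton step: split $V_s=V_s^{\mathrm d}+V_s^{\mathrm{od}}$ into diagonal and off-diagonal parts, solve the homological equation $[D_s,W_s]+\e_s V_s^{\mathrm{od}}=0$, i.e.
\[
(W_s)_{m,n}=\frac{\e_s\,(V_s)_{m,n}}{\lambda_n^{(s)}-\lambda_m^{(s)}}\quad(m\neq n),\qquad (W_s)_{n,n}=0,
\]
and set $T_{s+1}:=e^{-W_s}T_s e^{W_s}$. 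The $O(\e_s)$ off-diagonal terms cancel by construction, leaving $T_{s+1}=D_{s+1}+\e_{s+1}V_{s+1}$ with $\lambda_n^{(s+1)}=\lambda_n^{(s)}+\e_s (V_s)_{n,n}$ and $\e_{s+1}V_{s+1}$ a remainder built out of iterated commutators of $W_s$ with $\e_s V_s$, hence of order $O(\e_s^2)$; one may take $\e_{s+1}=\e_s^2$.

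\emph{The inductive estimates (the crux).} For the iteration to close I would carry the hypotheses: (i) $(\lambda_n^{(s)})_n$ obeys the Diophantine bound $|\lambda_m^{(s)}-\lambda_n^{(s)}|^{-1}\le c_s|m-n|^{2\gamma}$ together with analogues of Assumptions~\ref{A.2new}--\ref{A.3new}, all with constants bounded uniformly in $s$; (ii) $V_s$ is $T$-shaped with $\|V_s\|_{\alpha_s}\le\eta_s$. The crucial point for (i), which is exactly the mechanism of the preliminary stability result of Sect.~\ref{stabilityden}, is to compare the drifted eigenvalues \emph{directly} with the unperturbed ones rather than step by step: the total drift $\lambda_n^{(s)}-\lambda_n=\sum_{r<s}\e_r(V_r)_{n,n}$ is dominated by its first term, hence $O(\e)$ uniformly in $n$ and $s$, and, using the $T$-shaping and the boundedness $\|V_r\|\le\eta_r$ of the $V_r$, one gets $|\lambda_{n+k}^{(s)}-\lambda_n^{(s)}-(\lambda_{n+k}-\lambda_n)|\le O(\e)\min\{2,|\lambda_k-\lambda_0|\}$; since by Assumption~\ref{A.2new} $|\lambda_{n+k}-\lambda_n|\gtrsim\min\{1,|\lambda_k-\lambda_0|\}$ uniformly in $n$, this drift is $\le\tfrac12|\lambda_{n+k}-\lambda_n|$ once $\e$ is small, so $|\lambda_{n+k}^{(s)}-\lambda_n^{(s)}|\ge\tfrac12|\lambda_{n+k}-\lambda_n|$ with a bound uniform in $s$; the analogues of Assumptions~\ref{A.2new}--\ref{A.3new} are propagated in the same spirit. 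Propagating (ii) rests on two facts: first, $W_s$ is again $T$-shaped, because the discrete Lipschitz difference of $(V_s)_{m,n}/(\lambda_n^{(s)}-\lambda_m^{(s)})$ along $(m,n)\mapsto(m+j,n+j)$ splits, by a Leibniz rule, into a term controlled by $\|V_s\|_{\alpha_s}$ and a term controlled by the analogue of Assumption~\ref{A.3new}; second, $\|\cdot\|_\beta$ is submultiplicative up to an arbitrarily small loss of decay rate, the only polynomially large small denominators produced — of order $|k|^{2\gamma}$ — being absorbed into $e^{-\delta_s|k|}$. Hence $V_{s+1}$ is $T$-shaped with $\alpha_{s+1}=\alpha_s-\sigma_s$, $\sum_s\sigma_s\le\sigma=\min\{1,\alpha/2\}$, and $\eta_{s+1}=\|V_{s+1}\|_{\alpha_{s+1}}\le K(\sigma_s)\,\eta_s^2$. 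For $\e^\ast$ small enough, $\e_s\to0$ super-exponentially, $c_s$ and $\eta_s$ stay bounded, and $\alpha_s\downarrow\alpha-\sigma$.

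\emph{Passage to the limit.} Since $\|W_s\|=O(\e_s)\to0$ super-exponentially, the partial products $U_N:=e^{W_0}e^{W_1}\cdots e^{W_N}$ converge in $\L(\ell^2(\Z^d))$ to an invertible $U$; the diagonal parts converge, $D_s\to T^{(\infty)}$ with $T^{(\infty)}e_n=\lambda_n(\e)e_n$, $\lambda_n(\e):=\lim_s\lambda_n^{(s)}$ (the corrections $\e_s(V_s)_{n,n}$ being summable), and passing to the limit in $U_N^{-1}T_0 U_N=T_{N+1}$ gives $U^{-1}(T+\e V)U=T^{(\infty)}$. The uniform Diophantine bounds pass to the limit, yielding~\eqref{Diofanto3}, so $\sigma(T(\e))=\overline{\{\lambda_n(\e):n\in\Z^d\}}$ consists of the simple eigenvalues $\lambda_n(\e)$: the spectrum is pure point and simple. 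If $V=V^\ast$ then every $V_s$ is self-adjoint, hence every $W_s$ anti-self-adjoint and every $e^{W_s}$ unitary, so $U$ is unitary and $u_n(\e):=Ue_n$ form a complete orthonormal system of eigenvectors of $T(\e)$; finally, since each factor $e^{W_s}$ has matrix elements decaying like $e^{-\alpha_s|\cdot|}$ and the total loss of exponential rate along the infinite product is kept $\le\sigma=\min\{1,\alpha/2\}$, the matrix elements $\langle e_j,u_n(\e)\rangle=U_{j,n}$ decay like $e^{-(\alpha-\sigma)|j-n|}$, which is~\eqref{eqn:exploc}.

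\emph{Main obstacle.} The heart of the matter is the inductive step: showing that the class ``Diophantine diagonal part $+$ exponentially decreasing $T$-shaped perturbation'' is quasi-invariant under one Newton step, with all constants $(c_s,\alpha_s,\eta_s)$ controlled uniformly in $s$. Assumptions~\ref{A.2new} and~\ref{A.3new} are exactly what keeps the small denominators under control — both those dividing $V_s$ to build $W_s$ and those measuring the $T$-shaping — while the diagonal part drifts, and the exponential decay of Assumption~\ref{A.4new} is what prevents these (only polynomially large) denominators from destroying the summability that drives the iteration.
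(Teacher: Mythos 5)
Your proposal is correct and follows essentially the same route as the paper: the same KAM iteration with the norm of Assumption~\ref{A.4new} propagated along the scheme, and the same key mechanism --- the paper's Theorem~\ref{thm:propdiof} --- of comparing the drifted denominators directly with the unperturbed ones, using the $T$-shaping of the accumulated diagonal correction together with Assumptions~\ref{A.2new}--\ref{A.3new}, and absorbing the resulting $|k|^{2\gamma}$ losses into a geometrically summable loss of exponential rate. The only (harmless) variants are that you conjugate by $e^{W_s}$ and get unitarity of each step for free, whereas the paper uses the affine solution $W^{(\ell)}=I+(\text{off-diagonal})$ with the exact remainder identity ${W^{(\ell)}}^{-1}(\e_\ell V^{(\ell)}-\e_\ell[V^{(\ell)}])(W^{(\ell)}-I)$ and recovers unitarity a posteriori via the polar decomposition $\tilde U=U(U^*U)^{-1/2}$.
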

\begin{remark}[Explicit bounds]\label{rem:xieps}
The proof of Theorem~\ref{thm:mainth2}  is iterative and provides a value for $\e^*$, see~(\ref{eqn:hypeps}):
\begin{equation}\label{epstar}
\e^* = \frac{A \min\left\{ 1, \frac{\alpha}{2}\right\}^{4d+2\gamma}}{\|V\|_{\alpha}},
\end{equation}
where 
\begin{equation}
A = \frac{1}{ 12c^2 \left(\frac{2\gamma}{e }\right)^{2\gamma} 4^{4d+2\gamma}}\min\left\{ \frac{2^{4(2d+\gamma)} 3 c \left(\frac{2\gamma}{e }\right)^{2\gamma}  }{1+2^{4(2d+\gamma)} 3 c \left(\frac{2\gamma}{e }\right)^{2\gamma} }, 1-\frac{3^d}{2^{6d-1}}\right\} \label{Acost}
\end{equation}
and 
\begin{equation}
	\|V\|_{\alpha}= \sup_{k\in\Z^d}e^{\alpha |k|}\left(\sup_{m \in \Z^d} |V_{m,m+k}|+\sup_{m \in \Z^d, \,j \in \Z^d_0} \left|\frac{V_{m+j,m+j+k}-V_{m,m+k}}{\lambda_j -\lambda_0}\right|\right) \label{eqn:xi}
\end{equation}
is the left hand side of Assumption~\ref{A.4new}.
Moreover, at the end of the proof of Theorem~\ref{thm:mainth2}  we obtain also an explicit value for the constant $C_n$ in~(\ref{eqn:exploc}), see~(\ref{eqn:Cn}). Notice that the value of $\e^*$ in~(\ref{epstar}) could be optimized by using the technique of approximation functions in~\cite{ref:Russmann}, see the next Remark.
\end{remark}
\begin{remark}[Approximation functions]
Assumption~\ref{A.1new} can be replaced with the following, more general one: there exist $c \geq 1$ and a continuous function $\Omega:[0,+\infty) \to [1,+\infty)$  such that
 \begin{equation}\label{A.1Omega}
\sup_{n \in \Z^d}  \frac{1}{|\lambda_{n+k}-\lambda_n|} \leq c \; \Omega(|k|).
\end{equation}
The function $\Omega$, called ``approximation function''~\cite{ref:Russmann}, must satisfy the following property: both these functions
$$
\Phi(x)= x^{-4d} \sup_{r \geq 0}\Omega(r)e^{-x r}
$$
and 
$$
\Psi(x) = \inf_{\mathcal{S}_{x}} \prod_{j=0}^{+\infty}\Phi(x_j)^{\frac{1}{2^{j+1}}},
$$
are finite for $x>0$, where $\mathcal{S}_{x}$ is the set of all non-negative decreasing sequences $(x_j)_{j\in \N}$ such that $\sum_{j}x_j \leq x$. The case of Assumption~\ref{A.1new} corresponds to $\Omega(r)=|r|^\gamma$. If we replace Assumption~\ref{A.1new} with condition~(\ref{A.1Omega}), Theorem~\ref{thm:mainth2} is still valid, with the replacement of~(\ref{Diofanto3}) with
\begin{equation}
\frac{1}{|\lambda_{m}(\e)-\lambda_n(\e)|} \leq 12c^2 \, \Omega(|m-n|)^{2}, \qquad \forall m,n \in \Z^d, m\neq n.
\end{equation}
\end{remark}

An interesting application of Theorem~\ref{thm:mainth2} is to the family of  discrete Schr\"{o}dinger operators.
 \begin{corollary}[Discrete Schr\"odinger operators]
 \label{thm:corollary}
Let $\Delta$ be the discrete Laplacian on $\ell^2(\Z^d)$, i.e. $\langle e_m, \Delta e_n\rangle= \delta_{|m-n|,1}$,  for all $m,n \in \Z^d$, and let $\e \in \R$. If Assumptions~\ref{A.1new}--\ref{A.3new}  hold and
\begin{equation}\label{hypeps}
 |\e|\leq  Ae^{-2},
\end{equation}
with $A$  as in~(\ref{Acost}), then  the spectrum of the discrete Schr\"{o}dinger operator $T+ \e \Delta$ is pure point and simple and its  eigenvalues $(\lambda_n(\e))_{ n\in\Z^d}$ fulfill the Diophantine condition
\begin{equation}
\label{Diofanto3D}
\frac{1}{|\lambda_{m}(\e)-\lambda_n(\e)|} \leq 12c^2 \, |m-n|^{2\gamma}, \qquad \forall m,n \in \Z^d, m\neq n.
\end{equation}
Moreover, $T+ \e \Delta$ has a complete set of exponentially localized orthonormal eigenvectors $(u_n(\e))_{n \in \Z^d}$, i.e.  for all $n \in \Z^d$ there is $C_n >0$ such that:
\begin{equation}\label{eqn:explocD}
|\langle e_j, u_n(\e) \rangle| \leq C_n e^{-(\alpha-1)|j-n|}, \qquad \textrm{for all $j \in \Z^d$},
\end{equation}
where $\alpha= \log (A/|\e|)\geq 2$. 
\end{corollary}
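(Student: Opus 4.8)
The plan is to obtain Corollary~\ref{thm:corollary} as an application of Theorem~\ref{thm:mainth2}: one checks that the discrete Laplacian satisfies Assumption~\ref{A.4new} for \emph{every} decay rate $\alpha>0$, and then one chooses $\alpha$ as a function of $\e$ so that the smallness threshold~\eqref{epstar} accommodates the hypothesis $|\e|\le Ae^{-2}$. First I would record the elementary properties of $\Delta$: it is bounded, $\|\Delta\|\le 2d$, and self-adjoint, and its matrix elements $\Delta_{m,n}=\delta_{|m-n|,1}$ are translation invariant, $\Delta_{m+j,n+j}=\Delta_{m,n}$ for all $m,n,j\in\Z^d$. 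Translation invariance makes the difference quotient in~\eqref{eqn:xi} vanish identically, so that for every $\alpha>0$
\[
\|\Delta\|_{\alpha}=\sup_{k\in\Z^d}e^{\alpha|k|}\,\delta_{|k|,1}=e^{\alpha}<+\infty .
\]
Hence Assumption~\ref{A.4new} holds with an arbitrary $\alpha>0$, and only the choice of $\alpha$ remains.

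The crux is the trade-off, inside~\eqref{epstar}, between the exponential growth $\|\Delta\|_\alpha=e^{\alpha}$ and the factor $\min\{1,\alpha/2\}^{4d+2\gamma}$: a larger $\alpha$ yields a shorter localization length $(\alpha-\sigma)^{-1}$ but a smaller admissible $\e$. If $\e=0$ the assertions are immediate, with the $e_n$ as eigenvectors. If $\e\ne 0$, put $\alpha:=\log(A/|\e|)$, which by $|\e|\le Ae^{-2}$ satisfies $\alpha\ge 2$; then $\min\{1,\alpha/2\}=1$ and the threshold in~\eqref{epstar} with $V=\Delta$ becomes
\[
\e^{*}=\frac{A\,\min\{1,\alpha/2\}^{4d+2\gamma}}{\|\Delta\|_{\alpha}}=\frac{A}{e^{\alpha}}=|\e| .
\]
Thus $|\e|\le\e^{*}$, and since Assumptions~\ref{A.1new}--\ref{A.3new} are in force and Assumption~\ref{A.4new} holds with this $\alpha$, Theorem~\ref{thm:mainth2} applies to $T+\e\Delta$.

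It then only remains to read off the conclusions. Theorem~\ref{thm:mainth2} gives that $\sigma(T+\e\Delta)$ is pure point and simple and provides the Diophantine bound~\eqref{Diofanto3}, which is~\eqref{Diofanto3D}. Since $\Delta^{*}=\Delta$, its last assertion yields a complete orthonormal family $(u_n(\e))_{n\in\Z^d}$ of eigenvectors with localization length $(\alpha-\sigma)^{-1}$, and $\alpha\ge 2$ forces $\sigma=\min\{1,\alpha/2\}=1$, so this is precisely~\eqref{eqn:explocD} with $\alpha=\log(A/|\e|)\ge 2$. I do not anticipate any genuine obstacle: the only substantive idea is the tuning $\alpha=\log(A/|\e|)$, i.e. the largest rate keeping $\e$ admissible and hence the sharpest localization, while the only points deserving an explicit line are the vanishing of the difference quotient in $\|\Delta\|_\alpha$ and the separate, trivial case $\e=0$.
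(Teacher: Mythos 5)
Your proof is correct and follows essentially the same route as the paper: both compute $\|\Delta\|_{\alpha}=e^{\alpha}$ from translation invariance, recognize that Assumption~\ref{A.4new} holds for every $\alpha>0$, and then tune $\alpha=\log(A/|\e|)\ge 2$ so that $\e^{*}=|\e|$ and $\sigma=1$, yielding the claimed localization rate $\alpha-1$. The only stylistic difference is that the paper phrases the choice as maximizing $\e^*(\alpha)$ at $\alpha=2$ and then taking the largest admissible $\alpha$, whereas you plug in the optimal $\alpha$ directly and also note the trivial $\e=0$ case explicitly.
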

\begin{proof}
Notice that $\Delta$ satisfies Assumption~\ref{A.4new} for all $\alpha >0$, since an easy computation gives $\|\Delta\|_{\alpha}=e^{\alpha}$, where $\|\Delta\|_{\alpha}$ is defined in~(\ref{eqn:xi}). Moreover, Eq.~\eqref{epstar} reads
$$
\e^* = A e^{-\alpha} \min\left\{ 1, \frac{\alpha}{2}\right\}^{4d+2\gamma},
$$
whose maximum value, obtained at $\alpha=2$, is $\e^*=A e^{-2}$, that is the right hand side of~(\ref{hypeps}). Then the proof follows immediately by Theorem~\ref{thm:mainth2}, by choosing the maximum possible value of $\alpha$.
\end{proof}

\begin{remark}[Quasi-periodic Schr\"odinger operators]
	
There are several results on the perturbation of pure point spectrum for the discrete Schr\"{o}dinger operators with quasiperiodic potentials, see e.g.~\cite{ref:Jitomirskayabook, ref:Goldstein} for a review and the reference therein. These operators acting on $\ell^2(\Z)$ are defined as follows:  
\begin{equation}\label{eqn:Mathieu}
H(\omega, \lambda, \theta) e_n= \lambda g (\omega n+ \theta)e_n+e_{n-1}+e_{n+1}, \quad n \in \Z,
\end{equation}
where $\lambda, \omega, \theta \in \R$  and $g$ is an analytic $1$-periodic function.
The case $g(x)=\cos (2 \pi x)$, corresponding to the almost Mathieu operator (also known as Harper's operator), has been extensively studied~\cite{ref:Dinaburg, ref:Lima, ref:Sinai, ref:Spencer, ref:Eliasson, ref:Jitomirska}. In particular, in~\cite{ref:Lima} the authors proved the existence of some pure point spectrum for $\lambda$ very large (and of some absolutely continuous spectrum for $\lambda$ very small) for a.e. $\omega$. In~\cite{ref:Sinai, ref:Spencer}  there are the first proofs of complete localization (pure point spectrum with exponentially decaying eigenfunctions) for $\lambda$ large (a.e. $\omega, \theta$). Later, alternative KAM-type arguments for $\lambda$ large were developed in~\cite{ref:Eliasson}.
In~\cite{ref:Jitomirska, ref:Gordon}, the authors  prove, in a non-perturbative way, that for almost every $\omega, \theta \in \R$ the almost Mathieu operator $H(\omega, \lambda, \theta)$ has:
\begin{itemize}
\item For $\lambda >2$, only pure point spectrum with exponentially decaying eigenvectors;
\item For $\lambda=2$, purely singular-continuous spectrum;
\item For $\lambda<2$, purely absolutely continuous spectrum.
\end{itemize}
In~\cite{ref:Bourgain} the authors prove that for any analytic $1$-periodic function $g$ and any Diophantine number $\omega \in \R$, i.e. there are $b >0$ and $\zeta > 1$ such that for all $j \in \Z$ and $m\in \Z$:
$$
|\omega \cdot m+j| \geq \frac{b}{|m|^\zeta}, 
$$
and a.e. $\omega \in \R$, there exists  a constant $\lambda(g)>0$ such that the operator in~(\ref{eqn:Mathieu}) has purely absolutely continuous spectrum for $\lambda > \lambda(g)$.

A different approach is used in~\cite{ref:Poeschel, ref:Bellissard} where the authors mainly consider the operator $H(\lambda)$ on $\ell^2(\Z^d)$: 
\begin{equation}\label{eqn:discreteS}
H(\lambda)e_n=\lambda f(\omega \cdot n) e_n + \sum_{|k|=1} u_{n+k}, \quad n \in \Z^d,
\end{equation}
where $f$ is a $1$-periodic meromorphic  function on $\{z \in \C: |\mathrm{Im}z| < R \}$, for some $R>0$ , and $\omega \in \R^d$ a Diophantine vector, i.e. there are $b >0$ and $\zeta > d$ such that for all $j \in \Z$ and $m=(m_1, \dots, m_d) \in \Z^d$ such that
$$
|\omega \cdot m+j| \geq \frac{b}{|m|^\zeta}.
$$
 Under suitable assumptions on $f$, they prove, using a KAM scheme, that there exists $\lambda(f)>0$ such that for all $\lambda > \lambda(f)$ the operator in ~(\ref{eqn:discreteS}) has purely point spectrum with exponentially localized eigenvectors. More precisely, the result in~\cite{ref:Poeschel} is stated at an abstract level and it is concretely applied only to the case $f(x)= \tan(\pi x)$.
In~\cite{ref:Bellissard} the result is actually stated and proved for a more general class of operators
\begin{equation}\label{eqn:discreteSg}
H(\lambda)e_n=\lambda f(\omega \cdot n) e_n + V e_n, \quad n \in \Z^d,
\end{equation} 
with $V \in \mathcal{L}(\ell^2(\Z^d))$ satisfying suitable assumptions. 

\end{remark}

Notice that in all the results stated above the eigenvalues of the unperturbed operators have the same algebraic structure: they are always obtained by sampling periodic functions (with some regularity in the complex plane) on points $x_{n}=\omega \cdot n+ \theta$, $n \in \Z^d, \omega \in \R^d, \theta \in \R$.

The operator family $T(\e)=T+ \e V$ considered in Theorem~\ref{thm:mainth2} includes the operators in~(\ref{eqn:Mathieu}) and~(\ref{eqn:discreteSg}), but it is more general because the eigenvalues of $T$ do not obey to the aforementioned algebraic structure. Moreover, the class of possible perturbations $V$ is explicitly expressed in terms of the eigenvalues of $T$ and does not require any regularity in the complex plane, see Sec.~\ref{sectionex}.

\section{Examples}\label{sectionex}
In this Section we consider  some examples of eigenvalues $(\lambda_n)_{n \in \Z^d}$ satisfying Assumptions~\ref{A.1new}--\ref{A.3new}, and then  several examples of perturbations $V$ satisfying Assumption~\ref{A.4new}.  
\begin{example}\label{ex:dirich}
Let $d \in \N$, $d \geq 2$,   $\omega:=(\omega_1, \dots,\omega_d) \in \R^d$ such that $\omega \cdot n \neq 0$ for all $ n \in \Z^d_0$. 
The function
\begin{equation}
\label{eq:mundef}
n\in\Z^d \mapsto \mu_n := \omega\cdot n \in \R
\end{equation}
is  injective, with $\mu_0=0$.  Assume that there exist $C>0$ and  $\gamma >0$ such  that 
\begin{equation}
 \label{eq:mundiof}
\frac{1}{|\mu_k|} = \frac{1}{|\omega\cdot k|}\leq  C |k|^\gamma, \quad  \forall\,k \in \Z^d_0 .
\end{equation}
Since $\mu_{n+k} = \mu_n + \mu_k$ for all $n, k \in \Z^d$, we have that 
\begin{eqnarray*}
&&
\frac{1}{|\mu_{n+k}-\mu_n|} = \frac{1}{|\mu_k|}\leq  C |k|^\gamma,
\\
&&
\frac{1}{\mu_j-\mu_0}\left(\frac{1}{\mu_{n+j+k}-\mu_{n+j} } - \frac{1}{\mu_{n+k}-\mu_n} \right) = 0,
\end{eqnarray*}
for all $n, k, j \in \Z^d$, $k,j  \neq 0$.
Hence  Assumptions~\ref{A.1new}--\ref{A.3new} are satisfied with $c=\max\{C,1\}$.
\end{example}

This example generates several others by the following procedure: let $h:\R\to\R$ and denote with  $S_\omega :=\{ \omega \cdot n\, :\, n \in \Z^d\}$ the range of the function in~(\ref{eq:mundef}). Define 
\begin{equation}
\label{lambdan}
\lambda_n := h(\mu_n), \qquad n\in\Z^d,
\end{equation}
and {\it assume} $h|_{S_\omega}$  injective, so that  $\lambda_n\neq \lambda_m$ if $n\neq m$.
In particular, injectivity is ensured by either one of the following conditions:
\begin{itemize}
\item $h:\R\to\R$ is injective;
\item $h:\R \setminus \left(\Z+\frac{1}{2}\right)
 \to\R$ is $1$-periodic, i.e. $h(x+1)=h(x)$  for all $x \in \R \setminus \left(\Z+\frac{1}{2}\right)$, $h|_{\left(-\frac{1}{2},\frac{1}{2}\right)}$ is injective, and $(\omega, 1) \in \R^d \times \R$ is Diophantine: 
 \begin{equation}\label{eqn:diofnew}
\min_{j \in \Z}\, \left| \omega \cdot k  +j \right|\geq \frac{1}{C |k|^\gamma}, \quad \forall\, k \in \Z^d_0.
\end{equation}
\end{itemize}
The following result states some sufficient conditions under which  the sequence $(\lambda_n)_{n \in \Z^d}$ in~(\ref{lambdan})  satisfies  Assumptions~\ref{A.1new}--\ref{A.3new}.   
\begin{lemma} Let $a, b >0$.
\label{lemma1}
\begin{enumerate}
\item Let  $h \in C^{1}(\R)$ be such that
\begin{equation}
\inf_{x\in\R} |h'(x)| = a >0,
\label{eq:minfirst}
\end{equation}
and
\begin{equation}
\label{eq:maxsecond}
\left| \frac{\mathrm{d}}{\mathrm{d} x}\left(\frac{1}{h(x+y) - h(x) }\right)\right|  \leq b \left(1 + \frac{1}{|y|}\right),
\end{equation}
for $y\neq 0$, and let $\omega \in \R^d$ satisfy condition~(\ref{eq:mundiof}).
Then the sequence $(\lambda_n)_{n\in\Z}$  given by~\eqref{lambdan} fulfills Assumptions~\ref{A.1new}--\ref{A.3new}.
\item Let $h:\R \setminus\left(\Z+\frac{1}{2}\right)
\to\R$ be $1$-periodic, and $h|_{\left(-\frac{1}{2},\frac{1}{2}\right)} \in C^1\left(-\frac{1}{2}, \frac{1}{2}\right)$,  such that 
\begin{equation}
\inf_{x\in \left( -\frac{1}{2}, \frac{1}{2} \right)} |h'(x)| = a >0,
\label{eq:minfirst1}
\end{equation}
and
\begin{equation}
\label{eq:maxsecond1}
\left| \frac{\mathrm{d}}{\mathrm{d} x}\left(\frac{1}{h(x+y) - h(x) }\right)\right|  < b \left(1 + \frac{1}{|y|}\right),
\end{equation}
for $x, y \in \left( -\frac{1}{2}, \frac{1}{2} \right)$, with $y \neq 0$ and $|x+y| \neq \frac{1}{2}$, and let $\omega \in \R^d$ satisfy condition~(\ref{eqn:diofnew}).
Then the sequence $(\lambda_n)_{n\in\Z}$  given by~\eqref{lambdan} fulfills Assumptions~\ref{A.1new}--\ref{A.3new}.
\end{enumerate}
\end{lemma}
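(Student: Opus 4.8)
The plan is to reduce everything to checking Assumptions~\ref{A.1new}--\ref{A.3new} for the sequence $\lambda_n=h(\mu_n)$, using the structural identity $\mu_{n+k}=\mu_n+\mu_k$ inherited from Example~\ref{ex:dirich} together with the $C^1$-bounds on $h$ and the mean value theorem. In both cases~(1) and~(2) the argument is essentially the same; the only difference is that in case~(2) one works on the fundamental domain $(-\tfrac12,\tfrac12)$ and uses the Diophantine condition~\eqref{eqn:diofnew} to keep the arguments $\omega\cdot k \bmod 1$ away from $\tfrac12$ and bounded below in absolute value, whereas in case~(1) one works directly on $\R$.

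First I would verify Assumption~\ref{A.1new}. Writing $\lambda_{n+k}-\lambda_n=h(\mu_n+\mu_k)-h(\mu_n)$, the mean value theorem and~\eqref{eq:minfirst} give $|\lambda_{n+k}-\lambda_n|\geq a\,|\mu_k|$, hence, by~\eqref{eq:mundiof}, $1/|\lambda_{n+k}-\lambda_n|\leq (C/a)\,|k|^\gamma$ uniformly in $n$, so~\eqref{A.1d} holds with $c$ proportional to $\max\{C/a,1\}$. For case~(2) one first notes that, because $h$ is $1$-periodic, $\lambda_{n+k}-\lambda_n=h(\{\mu_n+\mu_k\})-h(\{\mu_n\})$ depends only on the fractional parts, and the Diophantine condition~\eqref{eqn:diofnew} yields a lower bound $\min_{j\in\Z}|\omega\cdot k+j|\geq 1/(C|k|^\gamma)$ which plays the role of $|\mu_k|$; one must also check that the relevant arguments stay in the domain where $h'$ is defined and bounded below, which is where the hypothesis $|x+y|\neq\tfrac12$ and a little case analysis on which points are excluded enters.

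Next I would verify Assumptions~\ref{A.2new} and~\ref{A.3new} simultaneously. Since $\mu_k=\mu_k-\mu_0$, the quantity $1/|\lambda_k-\lambda_0|$ appearing on the right-hand side of~\eqref{A.2.1d} and~\eqref{A.2.2d} equals $1/|h(\mu_k)-h(0)|$, and the left-hand side of~\eqref{A.2.1d} is already controlled by Assumption~\ref{A.1new}, so the bound~\eqref{A.2.1d} is immediate (indeed the $1$ on the right already dominates after using $|h(\mu_{n+k})-h(\mu_n)|\geq a|\mu_k|$ and $|h(\mu_k)-h(0)|\leq \|h'\|_\infty|\mu_k|$ — one compares $1/|\mu_k|$ with $1+1/|\mu_k|$). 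For Assumption~\ref{A.3new}, set $g(x):=1/(h(x+\mu_k)-h(x))$; then
\[
\frac{1}{\lambda_{n+j+k}-\lambda_{n+j}}-\frac{1}{\lambda_{n+k}-\lambda_n}=g(\mu_n+\mu_j)-g(\mu_n)=\mu_j\,g'(\xi)
\]
for some $\xi$ between $\mu_n$ and $\mu_n+\mu_j$, by the mean value theorem (here I use $\mu_{n+j}=\mu_n+\mu_j$). Dividing by $\lambda_j-\lambda_0=h(\mu_j)-h(0)$, whose absolute value is $\geq a|\mu_j|$, the left-hand side of~\eqref{A.2.2d} is bounded by $(1/a)\,|g'(\xi)|$, and the hypothesis~\eqref{eq:maxsecond} (with $y=\mu_k$) gives $|g'(\xi)|\leq b(1+1/|\mu_k|)\leq b\,c\,(1+1/|\lambda_k-\lambda_0|)$ after once more comparing $1/|\mu_k|$ with $1/|h(\mu_k)-h(0)|$ via the lower bound $|h(\mu_k)-h(0)|\geq a|\mu_k|$. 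This yields~\eqref{A.2.2d} with a constant absorbed into $c$.

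The main obstacle is the periodic case~(2): one must be careful that reducing $\mu_n$ and $\mu_n+\mu_j$ modulo $1$ into $(-\tfrac12,\tfrac12)$ does not destroy the telescoping identity $\mu_{n+j}=\mu_n+\mu_j$ on which the mean value arguments rest, and one must handle the measure-zero set of $n,j$ for which a fractional part lands exactly at $\pm\tfrac12$ (these are excluded by the injectivity hypothesis on $h|_{S_\omega}$ combined with $\omega\cdot n\neq$ half-integer, which follows from~\eqref{eqn:diofnew}). Concretely, $g$ in~\eqref{eq:maxsecond1} is only assumed differentiable for $x,y\in(-\tfrac12,\tfrac12)$ with $|x+y|\neq\tfrac12$, so the mean value theorem must be applied on subintervals avoiding the bad points, and then one passes to the bound by continuity/supremum; the strict inequality in~\eqref{eq:maxsecond1} versus the non-strict one in~\eqref{eq:maxsecond} is precisely what allows this limiting step. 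Everything else is a direct computation with the constants collected at the end into a single $c=c(a,b,C)\geq 1$.
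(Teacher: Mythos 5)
Your verification of Assumption~\ref{A.1new} and the overall architecture (reduce everything to comparing $1/|\mu_k|$ with $1+1/|\lambda_k-\lambda_0|$, then run a mean-value/FTC argument in the variable $\mu_j$ for Assumption~\ref{A.3new}) match the paper. But the comparison step itself --- which is where the actual content of Assumptions~\ref{A.2new} and~\ref{A.3new} lies --- is wrong in both places you carry it out. For Assumption~\ref{A.2new} you invoke $|h(\mu_k)-h(0)|\le\|h'\|_{\infty}|\mu_k|$, but $\|h'\|_\infty$ is not assumed finite and is infinite in the paper's own examples ($h(x)=x+\beta x^3$ in case (1); $h=\tan(\pi\,\cdot)$ in case (2), whose derivative is unbounded even on the fundamental domain). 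For Assumption~\ref{A.3new} you instead cite ``the lower bound $|h(\mu_k)-h(0)|\ge a|\mu_k|$'', which gives $1/|\lambda_k-\lambda_0|\le 1/(a|\mu_k|)$ --- the reverse of the inequality you need. The correct argument, inequality~\eqref{eq:majmuk} in the paper, is a dichotomy: if $|\mu_k|\le 1$ the mean-value point $\theta$ lies in $[-1,1]$, so $1/|\mu_k|=|h'(\theta)|/|\lambda_k-\lambda_0|\le \max_{|x|\le1}|h'(x)|\,/\,|\lambda_k-\lambda_0|$ with a \emph{local} (hence finite) constant; if $|\mu_k|>1$ then $1/|\mu_k|<1$ and the additive $1$ on the right-hand side absorbs it. In case (2) the same dichotomy is run with $w(\mu_k)\in\left(-\frac12,\frac12\right)$ and a threshold $\delta<\frac12$.

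In case (2) there is a second gap: your plan to ``apply the MVT on subintervals avoiding the bad points and pass to the limit'' for Assumption~\ref{A.3new} does not work, because $x\mapsto 1/(h(x+z)-h(x))$ need not be continuous across the excluded set $\Z+\frac12$ (for the sawtooth $h(x)=x+j_x$ it has genuine jumps, and the bound~\eqref{eq:maxsecond1} is only assumed inside the fundamental domain), so a mean-value estimate cannot be telescoped across those points. The paper avoids the issue entirely by splitting on the size of $y_j=w(\mu_j)$: for $|y_j|\le\delta_1$ it uses the $y\to0$ limit of the difference quotient together with the \emph{strict} inequality in~\eqref{eq:maxsecond1} to choose $\delta_1$; for $|y_j|>\delta_1$ it abandons the MVT altogether and bounds the two reciprocals separately via Assumption~\ref{A.2new}, dividing by $|h(y_j)-h(0)|\ge a\delta_1$. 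Your sketch does not contain this case split, and without it the argument fails already for the paper's model examples.
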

\begin{proof}(i)
For all $n,k \in \Z^d$, $k \neq 0$, we can write, by~\eqref{eq:minfirst}
\begin{equation}
\label{eq:infhp}
|\lambda_{n+k} - \lambda_n| = |h(\mu_{n+k}) - h(\mu_n)|  = |h(\mu_{n}+\mu_k) - h(\mu_n)|  \geq |\mu_k| \inf_{x\in\R}|h^\prime(x)| = a |\mu_k|.
\end{equation}
Thus, by~\eqref{eq:mundiof}
$$
\frac{1}{|\lambda_{n+k}-\lambda_n|} \leq \frac{C}{a} |k|^{\gamma},
$$
that is Assumption~\ref{A.1new}. Moreover, one gets
$$
|\lambda_{k}-\lambda_0|= |h'(\theta)| |\mu_k|,
$$
for some $|\theta|\leq |\mu_k|$. 
Hence,
\begin{eqnarray}
&&
\nonumber
\frac{1}{|\mu_k|} = \frac{|h'(\theta)|}{|\lambda_k-\lambda_0|}
\leq \frac{\max_{|x|\leq 1} |h'(x)|}{|\lambda_k-\lambda_0|} , \quad\textrm{if $ |\mu_k|\leq 1$}
\\
&&
\nonumber
\frac{1}{|\mu_k|} < 1 \leq \left( 1 + \frac{1}{|\lambda_k-\lambda_0|}\right), \quad\textrm{if $ |\mu_k|> 1$},
\end{eqnarray}
that is,
\begin{equation}
\label{eq:majmuk}
\frac{1}{|\mu_k|}  \leq \Big(1+ \max_{|x|\leq 1} |h'(x)|\Big)  \left( 1 + \frac{1}{|\lambda_k-\lambda_0|}\right).
\end{equation}
Finally, by~\eqref{eq:infhp} and~\eqref{eq:majmuk}, we have:
\begin{equation}
\frac{1}{|\lambda_{n+k}-\lambda_n|} \leq \frac{1}{a |\mu_k|} 
 \leq \frac{1}{a} \Big(1+ \max_{|x|\leq 1} |h'(x)|\Big)  \left( 1 + \frac{1}{|\lambda_k-\lambda_0|}\right), 
\label{eq:A.2.1dp}
\end{equation}
that is Assumption~\ref{A.2new}.
Concerning Assumption~\ref{A.3new}, for all $n,j,k \in \Z^d$, with $k \neq 0$, we  have
\begin{eqnarray*}
R_{n,j,k} &=&  \frac{1}{\lambda_j-\lambda_0}\left(\frac{1}{\lambda_{n+j+k}-\lambda_{n+j} } - \frac{1}{\lambda_{n+k}-\lambda_n} \right)
\\
 &=&  \frac{1}{h(\mu_j)-h(0)} \left(\frac{1}{h(\mu_n+\mu_j +\mu_k)-h(\mu_n+\mu_j ) } - \frac{1}{h(\mu_n+\mu_k)-h(\mu_n)} \right)
\\
&=&  \frac{1}{h(\mu_j)-h(0)} \int_{\mu_n}^{\mu_n + \mu_j} \frac{\mathrm{d}}{\mathrm{d}x} \left(\frac{1}{h(x+\mu_k)-h(x) }  \right) \mathrm{d}x.
\end{eqnarray*}
Therefore, \eqref{eq:infhp},   \eqref{eq:maxsecond}, and~\eqref{eq:majmuk} yield
$$
|R_{n,j,k}| \leq   \frac{b}{a}  \left(1+ \frac{1}{|\mu_k|}\right) \leq 
  \frac{b}{a} \Big(2+ \max_{|x|\leq 1} |h'(x)|\Big)  \left( 1 + \frac{1}{|\lambda_k-\lambda_0|}\right).
$$
Hence Assumptions~\ref{A.1new}--\ref{A.3new} hold with the constant 
$$
c= \frac{1+ b + C}{a} 
 \Big(2 + \max_{|x|\leq 1} |h'(x)|\Big).
$$

(ii) First observe that for all $x \in \R$ there is a unique $j_x \in \Z$ such that $x +j_x \in \left( -\frac{1}{2}, \frac{1}{2}\right]$ and we denote $w(x):= x+j_x$. Let $k \in \Z^d_0$. We observe that
$$
\frac{1}{|\lambda_k-\lambda_0|}=\frac{1}{|h (\mu_k)- h(\mu_0)|}=\frac{1}{|h (w(\mu_k))- h(0)|}=\frac{1}{|w(\mu_k)h'(\eta)|},
$$
with $|\eta| \leq |w(\mu_k )|$. Moreover for all $n \in \Z^d$:
$$
\frac{1}{|\lambda_{n+k}-\lambda_n|}=\frac{1}{|h (\mu_n+\mu_k)- h(\mu_n)|}=\frac{1}{|h (w(\mu_n+\mu_k))- h(w(\mu_n))|}=\frac{1}{|(w(\mu_k)+m)h'(\eta_1)|}
$$
for some $\eta_1 \in \left(-\frac{1}{2}, \frac{1}{2}\right)$ and $m \in\Z$, with $|m| \leq 1$. Thus by~(\ref{eqn:diofnew}) we obtain
$$
\frac{1}{|\lambda_{n+k}-\lambda_n|}=\frac{1}{|(w(\mu_k)+m)h'(\eta_1)|}\leq \frac{1}{\min_{j \in \Z}|\omega \cdot k + j||h'(\eta_1)|}\leq \frac{C}{a}|k|^\gamma,
$$
that is Assumption~\ref{A.1new}.
Moreover, since $|w(\mu_{k})|< \frac{1}{2}$ we have that
$$
\frac{1}{|w(\mu_k)+m|} \leq \frac{1}{|w(\mu_k)|}.
$$
Now, for all $0< \delta <\frac{1}{2}$ we define 
$$
a_{\delta}:= \max_{|x|\leq \delta} |h'(x)|
$$
and distinguish two cases: if $|w(\mu_k)|> \delta$ then 
$$
\frac{1}{|w(\mu_k)|} < \frac{1}{\delta},
$$
while if $|w(\mu_k)| \leq \delta$ then 
$$
\frac{1}{|w(\mu_k)|} = \frac{|h'(\eta)|}{|\lambda_k-\lambda_0|}\leq \frac{a_\delta}{|\lambda_k-\lambda_0|},
$$
so that
\begin{equation}
	\frac{1}{|w(\mu_k)|} < A_\delta \left( 1+ \frac{1}{|\lambda_k-\lambda_0|}\right),
	\label{eq:3.12}
\end{equation}
where 
$$
A_\delta:= \max \left\{ \frac{1}{\delta }, a_\delta \right\}.
$$
Therefore,
$$
\frac{1}{|\lambda_{n+k}-\lambda_n|}=\frac{1}{|(w(\mu_k)+m)h'(\eta_1)|}\leq\frac{1}{|w(\mu_k)h'(\eta_1)|} \leq \frac{A_\delta}{a}\left( 1+ \frac{1}{|\lambda_k-\lambda_0|}\right),
$$
that is Assumption~\ref{A.2new}. 

Concerning Assumption~\ref{A.3new}, let $n,k, j \in \Z^d$, $j,k \neq 0$, and consider
\begin{eqnarray}
&&\left| \frac{1}{\lambda_j -\lambda_0}\left( \frac{1}{\lambda_{n+j+k}- \lambda_{n+j}} - \frac{1}{\lambda_{n+k}- \lambda_{n}}\right) \right| \nonumber \\
&& = \left| \frac{1}{h(\mu_j) -h(\mu_0)}\left( \frac{1}{h(\mu_n+\mu_j+\mu_k)- h(\mu_n+\mu_j)} - \frac{1}{h(\mu_n+\mu_k)- h(\mu_n)}\right) \right| \nonumber \\
&& = \left| \frac{1}{h(y_j)-h(0)}\left( \frac{1}{h(x_n+y_j+z_k)- h(x_n+y_j)} - \frac{1}{h(x_n+z_k)- h(x_n)}\right) \right| \nonumber 
\end{eqnarray}
where $x_n=w(\mu_n), y_j=w(\mu_j), z_k=w(\mu_k)$. Notice that $x_n,y_j, z_k \in \left( -\frac{1}{2}, \frac{1}{2}\right)$, $y_j,z_k \neq 0$ and $|w(x_n+y_j)|, |w(x_n+y_j+z_k)|, |w(x_n+z_k)| \neq \frac{1}{2} $. Consider the function
$$
G(x,y,z)=\frac{1}{h(y)-h(0)}\left( \frac{1}{h(x+y+z)- h(x+y)} - \frac{1}{h(x+z)- h(x)}\right),
$$
on its domain. By~(\ref{eq:maxsecond1}) we have
$$
\lim_{y \to 0}|G(x,y,z)|=\left| \frac{1}{h'(0)} \frac{d}{d x}\left( \frac{1}{h(x+z)-h(x)}\right)\right| < \frac{b}{a}\left(1+\frac{1}{|z|}\right).
$$
Thus there is a $0<\delta_1 <\frac{1}{2}$ such that if $|y_j| \leq \delta_1$ then 
$$
|G(x,y,z)| \leq \frac{b}{a}\left(1+\frac{1}{|z|}\right) .
$$
Thus if $|y_j| \leq \delta_1$, then 
$$
|G(x_n,y_j,z_k)| \leq \frac{b}{a}\left(1+\frac{1}{|z_k|}\right) = \frac{b}{a}\left(1+\frac{1}{|w(\mu_k)|}\right) \leq \frac{b}{a}\left( A_{\delta_1}+1 \right)\left(1+\frac{1}{|\lambda_k-\lambda_0|}\right) ,
$$
by~\eqref{eq:3.12}. Moreover, since 
$$
\frac{1}{|\lambda_{n+j+k}-\lambda_{n+j}|}\leq \frac{A_{\delta_1}}{a}\left(1+\frac{1}{|\lambda_k-\lambda_0|}\right)
$$
and
$$
\frac{1}{|\lambda_{n+k}-\lambda_{n}|}\leq\frac{A_{\delta_1}}{a}\left(1+\frac{1}{|\lambda_k-\lambda_0|}\right),
$$
we have that if $|y_j| > \delta_1$ then $|G(x_n,y_j,z_k)|$ can be estimated as follows:
\begin{eqnarray}
|G(x_n,y_j,z_k)|&=&\frac{1}{|y_j| |h'(\zeta)|}\left|\frac{1}{\lambda_{n+j+k}-\lambda_{n+j}}- \frac{1}{\lambda_{n+k}-\lambda_{n}}\right| \nonumber  \\
              & \leq & \frac{2 A_{\delta_1}}{\delta_1 a^2}\left(1+\frac{1}{|\lambda_k-\lambda_0|}\right),
\end{eqnarray}
where $|\zeta| \leq |y_j|$.
Therefore we have that
$$
\left| \frac{1}{\lambda_j -\lambda_0}\left( \frac{1}{\lambda_{n+j+k}- \lambda_{n+j}} - \frac{1}{\lambda_{n+k}- \lambda_{n}}\right) \right| \leq  \max\left\{ \frac{b(A_{\delta_1}+1)}{a}, \frac{2 A_{\delta_1}}{\delta_1a^2}\right\}\left(1+\frac{1}{|\lambda_k-\lambda_0|}\right).
$$

We conclude that $(\lambda_n )_{n \in \Z^d}$ satisfies Assumptions~\ref{A.1new}--\ref{A.3new} with the constant 
$$
c=\max\left\{\frac{A_{\delta_1}}{a}, \frac{b(A_{\delta_1}+1)}{a}, \frac{2 A_{\delta_1}}{\delta_1a^2},\frac{C}{a}\right\}.
$$
\end{proof}
\begin{example}  Examples generated by~\eqref{lambdan}  through Lemma~\ref{lemma1} (elementary verification of~\eqref{eq:minfirst},  \eqref{eq:maxsecond},  \eqref{eq:minfirst1},  \eqref{eq:maxsecond1}  omitted).
\begin{enumerate}
\item 
$h(x) = x + \beta x^3$, with $\beta\geq 0$;
\item 
$h(x)= \tan( \pi x)$;  
\item For all $x \in \R$ there is a unique $j_x \in \Z$ such that $x +j_x \in \left( -\frac{1}{2}, \frac{1}{2}\right]$, let $h(x)=x+j_x$.
\end{enumerate}
\end{example}
Concerning examples of perturbations $V$ fulfilling Assumption~\ref{A.4new}, a  family of them is produced by the following Lemma.
\begin{lemma}
\label{lemma2}
Let $(\lambda_n)_{n \in \Z^d} $ be a sequence as in~\eqref{lambdan} with  $h$ as in Lemma~\ref{lemma1}. Let $(f_k)_{k \in\Z^d}$ be a sequence of complex functions satisfying one of the following two properties:
\begin{enumerate}
\item there is $M>0$ such that $f_k \equiv 0$ if $|k| >M$, and for all $k \in \Z^d$:
$$
\|f_k \circ h\|_{C^1}:=\sup_{x} |f_k (h(x))| + \sup_{x} \left|  \frac{d}{d x} f_k(h(x))\right| < + \infty.
$$
\item there is $\alpha>0$ such that $f_k=e^{-\alpha |k|} g_{k}$, for all $k \in \Z^d$, where $(g_k)_{k \in \Z^d}$ is a sequence of complex functions such that 
\begin{equation}
\sup_{k \in \Z^d}\|g_k \circ h\|_{C^1}< + \infty.
\end{equation}
\end{enumerate}
Then the operator $V \in \mathcal{L}(\ell^2(\Z^d))$ with matrix representation 
\begin{equation}
\label{eq:Vshape}
V_{m,m+k}=f_k(\lambda_m), \quad k,m \in \Z^{d},
\end{equation}
satisfies  Assumption~\ref{A.4new}.
\end{lemma}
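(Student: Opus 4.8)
The plan is to verify Assumption~\ref{A.4new} directly from the definition~\eqref{eq:Vshape}, estimating the two suprema appearing in~\eqref{eq:xi}, namely $\sup_{m\in\Z^d}|V_{m,m+k}|$ and $\sup_{m\in\Z^d,\,j\in\Z^d_0}\left|\frac{V_{m+j,m+j+k}-V_{m,m+k}}{\lambda_j-\lambda_0}\right|$, and showing that their sum, multiplied by $e^{\alpha|k|}$, is bounded uniformly in $k$. First I would treat case (i): since $V_{m,m+k}=f_k(\lambda_m)=f_k(h(\mu_m))$, we have $|V_{m,m+k}|\le\|f_k\circ h\|_{C^1}$ immediately, which is finite and vanishes for $|k|>M$, so the first supremum is controlled. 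For the second supremum I would write the difference quotient as
$$
\frac{V_{m+j,m+j+k}-V_{m,m+k}}{\lambda_j-\lambda_0}=\frac{f_k(h(\mu_m+\mu_j))-f_k(h(\mu_m))}{h(\mu_j)-h(0)},
$$
and apply the mean value theorem to both numerator and denominator: the numerator equals $\mu_j\,(f_k\circ h)'(\xi)$ for some $\xi$ between $\mu_m$ and $\mu_m+\mu_j$, while the denominator equals $\mu_j\,h'(\eta)$ for some $\eta$ between $0$ and $\mu_j$ (in the periodic case, with $\mu$ replaced by $w(\mu)$ as in the proof of Lemma~\ref{lemma1}). The factors $\mu_j$ cancel, leaving $(f_k\circ h)'(\xi)/h'(\eta)$, which is bounded by $\|f_k\circ h\|_{C^1}/a$ using~\eqref{eq:minfirst}. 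Summing, the left-hand side of~\eqref{A.3d} is at most $\left(1+\tfrac1a\right)\|f_k\circ h\|_{C^1}$ for $|k|\le M$ and $0$ otherwise; since $e^{\alpha|k|}\le e^{\alpha M}$ on the relevant range, choosing any $\alpha>0$ (e.g. $\alpha=1$) gives a finite bound, hence Assumption~\ref{A.4new}.

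For case (ii) I would repeat the same computation with $f_k=e^{-\alpha|k|}g_k$: linearity pulls the scalar $e^{-\alpha|k|}$ out of both $V_{m,m+k}$ and the difference quotient, so the bracketed quantity in~\eqref{A.3d} is at most $e^{-\alpha|k|}\left(1+\tfrac1a\right)\|g_k\circ h\|_{C^1}$, and multiplying by $e^{\alpha|k|}$ leaves $\left(1+\tfrac1a\right)\sup_{k}\|g_k\circ h\|_{C^1}<+\infty$ by hypothesis. In the periodic setting one must be slightly careful that $\mu_m+\mu_j$ need not lie in $\left(-\tfrac12,\tfrac12\right)$, so one works with the representatives $w(\mu_m)$, $w(\mu_j)$, etc., exploiting $1$-periodicity of $h$ and $f_k$ so that $f_k(h(\mu_m))=f_k(h(w(\mu_m)))$; the mean value theorem is then applied on $\left(-\tfrac12,\tfrac12\right)$ using~\eqref{eq:minfirst1}, exactly as in part (ii) of the proof of Lemma~\ref{lemma1}.

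The only mild obstacle is the bookkeeping in the periodic case: one needs $|w(\mu_m+\mu_j)-w(\mu_m)|=|w(\mu_j)|$ up to an integer shift, which is where the Diophantine condition~\eqref{eqn:diofnew} and the argument around~\eqref{eq:3.12} enter to ensure the shift does not spoil the cancellation of the $\mu_j$ factors; but since $f_k$ is periodic this reduces to the case already handled, so no new idea is required. Everything else is a routine application of the mean value theorem together with the lower bound $|h'|\ge a$ and the finiteness of the $C^1$ norms assumed in the statement.
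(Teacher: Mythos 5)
Your proposal is correct and follows essentially the same route as the paper: write $V_{m,m+k}=f_k(h(\mu_m))$, pass to a supremum over real arguments, apply the mean value theorem to both the numerator $f_k(h(x+y))-f_k(h(x))$ and the denominator $h(y)-h(0)$ so the factor of $y$ cancels, and bound the result by $\bigl(1+\tfrac{1}{a}\bigr)\|f_k\circ h\|_{C^1}$; then case (i) gives a finite supremum over $|k|\le M$ for any $\alpha$, and case (ii) lets the factor $e^{-\alpha|k|}$ absorb the exponential weight. The paper compresses the mean-value step into a single inequality without spelling it out, but the content is identical. One small correction: in the periodic case the Diophantine condition~\eqref{eqn:diofnew} and the estimate~\eqref{eq:3.12} do \emph{not} enter here. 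Since $f_k\circ h$ is $1$-periodic, $f_k(h(\mu_m+\mu_j))-f_k(h(\mu_m))=f_k(h(x_m+y_j))-f_k(h(x_m))$ directly, with $x_m=w(\mu_m)$, $y_j=w(\mu_j)$, and both mean-value applications produce the same factor $y_j$ which cancels; no control on integer shifts or small divisors is needed, only periodicity of $h$ together with the $C^1$ bound. Those Diophantine ingredients are used in Lemma~\ref{lemma1} to establish Assumptions~\ref{A.1new}--\ref{A.3new}, not in the verification of Assumption~\ref{A.4new} carried out here.
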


\begin{proof} 
We have that for all $k \in Z^d$:
\begin{eqnarray}
&& \sup_{m \in \Z^d} |V_{m,m+k}|+\sup_{m \in \Z^d, \,j \in \Z^d_0} \left|\frac{V_{m+j,m+j+k}-V_{m,m+k}}{\lambda_j -\lambda_0}\right| \nonumber \\
&& =  \sup_{m \in \Z^d} |f_k(h(\mu_m))|+\sup_{m \in \Z^d, \,j \in \Z^d_0} \left|\frac{f_k(h(\mu_{m+j}))-f_k(h(\mu_m))}{h(\mu_j) -h(0)}\right| \nonumber \\
&& \leq   \sup_{x} |f_k(h(x))|+\sup_{\stackrel{x,y}{y \neq 0}} \left|\frac{f_k(h(x))-f_k(h(y))}{h(y) -h(0)}\right|. \nonumber \\
&& \leq \left(1+\frac{1}{a}\right) 
\|f_k \circ h\|_{C^1} < + \infty,
\end{eqnarray}
with $a=\inf_{x}|h'(x)|>0$. Therefore if $(f_k)_{k \in\Z^d}$ satisfies one of the conditions $(i)$, $(ii)$, then  Assumption~\ref{A.4new} holds. 
\end{proof}
\begin{remark} 
 Notice that if $V=V^*$ the functions $(f_k)_{k \in \Z^d}$ cannot be independently chosen, but have to satisfy the relation
 $$
 f_{-k}(\lambda_m) = \overline{f_k(\lambda_{m-k})}, \qquad \textrm{for all $k,m \in \Z^{d}$}.
 $$
The independent functions are thus the set $(f_k)_{k \in \N \times \Z^{d-1}}$.
\end{remark}
\begin{remark}
The case $h(x)=\tan (\pi x)$ and  $V$ the discrete Laplacian on $\ell^2(\Z^d)$, namely, for all $m, k \in \Z^d$,
$$ 
V_{m,m+k} = \delta_{|k|,1},
$$
is the example considered in~\cite{ref:Poeschel, ref:Bellissard, ref:Fishman}.
\end{remark}
\section{Road map of the proof of Theorem~\ref{thm:mainth2}}\label{sect:roadmap}
We recall that $T$ is a self-adjoint multiplication operator on $\ell^2(\Z^d)$, $d\geq 1$, with  purely point spectrum with distinct eigenvalues $(\lambda_n)_{n\in\Z^d}$,  
satisfying Assumptions~\ref{A.1new}--\ref{A.3new}. The operator $V$ is a bounded  operator on $\ell^2(\Z^d)$ satisfying Assumption~\ref{A.4new}. We want to prove that there exists  $\e^* >0$ (depending on $T$ and $V$) such that for all $\e \in [-\e^*,\e^*]$ the spectrum of $T+\e V$ is purely point with simple eigenvalues $( \lambda_n(\e))_{n \in \Z^d}$ such that 
$$
\frac{1}{|\lambda_n(\e)-\lambda_m(\e)|} \leq 12c^2|n-m|^{2\gamma},  \quad \forall n,m \in \Z^d, n\neq m.
$$

\subsection{KAM iteration scheme}
The proof of Theorem~\ref{thm:mainth2} is based on a KAM-type iterative procedure that we briefly recall here. 

\subsubsection{First step}
In the first step of the procedure we want to find an operator $U^{(1)}$  such that  
\begin{equation}\label{eqn:U1}
{U^{(1)}}^{-1}(T+\e V)U^{(1)}=T^{(1)}+ \e^2 V^{(1)},
\end{equation}
where 
\begin{equation}
	T^{(1)}=T+\e [V], \qquad  \e^2 V^{(1)}={U^{(1)}}^{-1}(\e V- \e [V])(U^{(1)}-I).
\label{eq:4.2}
\end{equation}
 Here, given a bounded operator $A$ on $\ell^2(\Z^d)$  its diagonal part $[A]$ is defined by $[A]_{n,m}= \delta_{m,n} A_{n,m}$, for all $n,m \in \Z^d$.

With the transformation~\eqref{eqn:U1} we obtain a new operator $T^{(1)}+ \e^2 V^{(1)}$, where $T^{(1)}$ is a multiplication (diagonal) operator with eigenvalues $\lambda^{(1)}_n= \lambda_n+ \e  V_{n,n} $, $n \in \Z^d$, and a new perturbation $\e^2 V^{(1)}$ of order $\e^2$.

\subsubsection{Homological equation}

The transformation $U^{(1)} $in~(\ref{eqn:U1}) is the solution of the \emph{homological equation}
\begin{equation}\label{eqn:homeq}
\left[T^{(1)}, U^{(1)}\right]+\e V-\e \left[V\right]=0.
\end{equation}
In fact, one has
\begin{eqnarray*}
U^{(1)}\left(T^{(1)}+ \e^2 V^{(1)}\right){U^{(1)}}^{-1}&=&\left(-\left[T^{(1)}, U^{(1)}\right]+T^{(1)}U^{(1)}\right){U^{(1)}}^{-1}+ U^{(1)}\e^2 V^{(1)}{U^{(1)}}^{-1}\\
    &=& T+\e V,
\end{eqnarray*}
and by using~\eqref{eq:4.2}, one gets~\eqref{eqn:homeq}.

The homological equation~(\ref{eqn:homeq}) admits a one-parameter family of solutions  whenever $T$ has simple eigenvalues, namely, for all $n,m \in \Z^d$, $n \neq m$:
$$
  U^{(1)}_{n,m} = \frac{V_{n,m} }{\lambda_n-\lambda_n},
$$
while the diagonal elements $U^{(1)}_{n,n}=C\in\R$ for all $n\in \Z^d$. 

\subsubsection{Iteration}
The  KAM iteration proceeds as follows: given the initial conditions $T^{(0)}=T$, $V^{(0)}=V$, $U^{(0)}=I$, $T^{(1)}=T^{(0)}+ \e \left[V^{(0)}\right]$,
at each step $\ell \in \N$ of the scheme we want to find two operators $U^{(\ell+1)}, V^{(\ell+1)} \in {\mathcal L}\left(\ell^2\left(\Z^d\right)\right)$ such that
\begin{equation}\label{eqn:Ul1TUlroad}
{U^{(\ell+1)}}^{-1}(T+ \e V)U^{(\ell+1)}=T^{(\ell+1)}+ \e_{\ell+1} V^{(\ell+1)},
\end{equation}
where
$$
T^{(\ell+1)}=T+ \sum_{j=0}^{\ell} \e_j \left[V^{(j)}\right]
$$
and $\e_j:=\e^{2^j}$ for all $j=0, \dots, \ell+1$.
To do this we have to prove that  the homological equation
\begin{equation}
\label{hom01road}
\left[T^{(\ell+1)}, W^{(\ell)}\right]+ \e_\ell V^{(\ell)}-\e_\ell \left[V^{(\ell)}\right]=0,
\end{equation}
admits an invertible solution ${W^{(\ell)}}$ and then define
$$
\e_{\ell+1} V^{(\ell+1)}:= {W^{(\ell)}}^{-1}\left(\e_\ell V^{(\ell)}-\e_\ell \left[V^{(\ell)}\right]\right)\left(W^{(\ell)}-I\right),
$$
and
$$
U^{(\ell+1)}:=U^{(\ell)}W^{(\ell)}.
$$

\subsection{Strategy of the proof}
Here we present the road map of the proof of Theorem~\ref{thm:mainth2}: 
\begin{enumerate}
\item[(1)] In the first part of the proof we will prove that the aforementioned KAM scheme can be iterated. More precisely, at each step $\ell \in \N$ of the scheme, we have to solve the homological equation~(\ref{hom01road}) and hence we have to prove that the operator 
$$
T^{(\ell+1)}=T+ \sum_{j =0}^{\ell} \e_{j} \left[V^{(j)}\right],
$$
has simple eigenvalues: this is the most difficult part of the proof. The key ingredient is Theorem~\ref{thm:propdiof}  
which essentially says that, if the operators $V^{(0)}, V^{(1)}, \dots, V^{(\ell)}$ are bounded in some suitable norms and $\e$ is small enough then the perturbation
\begin{equation} \label{eqn:perturbl}
\sum_{j =0}^{\ell} \e_j \left[V^{(j)}\right]
\end{equation}
is such that $T^{(\ell+1)}$ has a purely point spectrum with simple eigenvalues $\left( \lambda^{(\ell+1)}_n\right)_{n \in \Z^d}$ satisfying a Diophantine condition~(\ref{eqn:estDiofantea}) and the homological equation~(\ref{hom01road}) admits solution. 
In Subsection~\ref{sect:notation} we define  the aforementioned norms for operators on $\ell^2(\Z^d)$  and it will be immediately clear the meaning of Assumption~\ref{A.4new}.
\item[(2)] In the second part of the proof we will prove that the sequence of transformations $\left( U^{(\ell)} \right)_{\ell \in \N}$ converges to an invertible operator $U \in {\mathcal L}(\ell^2(\Z^d))$ and that the series
$$
\sum_{j =0}^{+\infty} \e_j \left[V^{(j)}\right]
$$
is convergent, so that by passing to the limit $\ell \to + \infty$ in~(\ref{eqn:Ul1TUlroad}) we obtain
\begin{equation}\label{eqn:Uinf1}
U^{-1}(T+\e V)U=T^{(\infty)},
\end{equation}
where 
$$
T^{(\infty)}=T+\sum_{j =0}^{+\infty} \e_j \left[V^{(j)}\right].
$$
Moreover, we will prove that the eigenvalues of $T^{(\infty)}$, denoted $\left(\lambda^{(\infty)}_n\right)_{n \in \Z^d}$, satisfy the Diophantine condition
\begin{equation}\label{eqn:Diphest}
\frac{1}{\left|\lambda_n^{(\infty)}-\lambda_m^{(\infty)}\right|} \leq 12c^2|n-m|^{2\gamma},  \quad \forall n,m \in \Z^d, n\neq m.
\end{equation}
Therefore, by~(\ref{eqn:Uinf1}), $T(\e)=T+\e V$ has the same eigenvalues of $T^{(\infty)}$, i.e. $\lambda_n(\e)=\lambda^{(\infty)}_n$, for all $n \in \Z^d$, and they  satisfy~(\ref{Diofanto3}). 
\item[(3)] In the third part of the proof we will prove that if $V=V^*$ then  for all $n,m \in \Z^d$:
$$
\langle e_n, U^* U e_m \rangle = \delta_{n,m} \| U e_n\|^2,
$$
hence $U^\ast U$ is a strictly positive operator that commutes with $T^{(\infty)}$ so that  $\tilde{U}=U(U^\ast U)^{-1/2}$ is unitary and
\begin{equation}
\tilde{U}^{-1}(T+\e V)\tilde{U}=T^{(\infty)}.
\end{equation}
Moreover, the orthonormal eigenvectors of $T(\e)$ are
$$
u_n(\e)=\tilde{U}e_n=\frac{Ue_n}{\| Ue_n\|}, \qquad n \in \Z^d.
$$
Finally, by using some properties of the operator $U$ we will obtain the eigenvector decay property~(\ref{eqn:exploc}).
\end{enumerate}

\section{Notation and stability of small denominators}
\subsection{Notation}\label{sect:notation}
First of all we introduce the algebra of sequences $\mathcal{M}_T$ and the shift operator.
\begin{definition}
\label{defn:sequence}
A complex sequence $ a :=(a_n)_{n \in \Z^d}$ belongs to $\mathcal{M}_T$ if
$$
\| a \|_{T}:= \sup_{n \in \Z^d} |a_{n}|+\sup_{n \in \Z^d, \,j \in \Z^d_0} \left|\frac{a_{n+j}-a_{n}}{\lambda_j -\lambda_0}\right|<+\infty.
$$
\end{definition}
  
\begin{definition}
Let $ a =(a_n)_{n\in\Z^d}$ be a complex sequence and $k \in \Z^d$. The shift operator $\Theta_k$ is defined as:
\begin{equation}
\label{shift}
\Theta_k a :=(a_{n+k})_{n\in\Z^d}.
\end{equation}
\end{definition}

\begin{lemma}\label{lemma:contrnorm}
The following statements hold:
\begin{enumerate}
\item The constant sequence ${\bf 1}:=( 1 )_{m\in \Z^d} \in \mathcal{M}_T$ and $ \| {\bf 1} \|_{T}=1$;
\item $\forall\, a=(a_n)_{n \in \Z^d}  \in \mathcal{M}_T$: $ \sup_{n \in \Z^d} |a_n| \leq \|  a  \|_{T}$;
\item $\forall\, a=(a_n)_{n \in \Z^d}  \in \mathcal{M}_T$ and $\forall\,k \in \Z^d$:  $\Theta_k a \in \mathcal{M}_T$, and $\|\Theta_k  a  \|_{T}= \|  a  \|_{T}$;
\item $\mathcal{M}_T$ is a Banach algebra, i.e.\ for all $a=(a_m)_{m \in \Z^d}, b=( b_m)_{m \in \Z^d} \in \mathcal{M}_T$ one has that $a \, b:=(a_m b_m)_{m \in \Z^d}$ belongs to $\mathcal{M}_T$ and
$$
\| a \, b\|_{T} \leq \| a \|_{T} \| b \|_{T}.
$$
\end{enumerate}
\end{lemma}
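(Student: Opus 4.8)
Lemma~\ref{lemma:contrnorm}: (1) $\mathbf 1\in\mathcal M_T$ with $\|\mathbf 1\|_T=1$; (2) $\sup_n|a_n|\le\|a\|_T$; (3) $\Theta_k a\in\mathcal M_T$ with $\|\Theta_k a\|_T=\|a\|_T$; (4) $\mathcal M_T$ is a Banach algebra under pointwise multiplication with $\|ab\|_T\le\|a\|_T\|b\|_T$.

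The plan is to verify the four items of Lemma~\ref{lemma:contrnorm} in turn; all are elementary, and the only one requiring a genuine argument is the Banach algebra property~(4). For~(1), the constant sequence has $\sup_m|1|=1$ and all difference quotients $(1-1)/(\lambda_j-\lambda_0)$ vanish, so $\|\mathbf{1}\|_T=1$. For~(2), the two suprema in Definition~\ref{defn:sequence} are nonnegative, so discarding the second gives $\sup_n|a_n|\le\|a\|_T$. For~(3), the substitution $m=n+k$ is a bijection of $\Z^d$ onto itself carrying the pair $\{n,n+j\}$ to $\{m,m+j\}$ with the \emph{same} increment $j$; hence both suprema defining $\|\Theta_k a\|_T$ coincide, term by term, with those defining $\|a\|_T$, which yields $\|\Theta_k a\|_T=\|a\|_T$ and in particular $\Theta_k a\in\mathcal M_T$ whenever $a\in\mathcal M_T$.

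For~(4), I would first note that $\mathcal M_T$ is a normed vector space: homogeneity and the triangle inequality are inherited from the two suprema, and $\|a\|_T=0$ forces $\sup_n|a_n|=0$ by~(2), hence $a=0$. Submultiplicativity I would get from the ``Leibniz'' splitting
$$
a_{m+j}b_{m+j}-a_m b_m = a_{m+j}\,(b_{m+j}-b_m) + (a_{m+j}-a_m)\,b_m ,
$$
which, after dividing by $\lambda_j-\lambda_0$, taking absolute values, and taking the supremum over $m\in\Z^d$, $j\in\Z^d_0$, bounds the ``Lipschitz part'' of $\|ab\|_T$ by $\big(\sup_m|a_m|\big)L(b)+L(a)\big(\sup_m|b_m|\big)$, where $L(\cdot)$ denotes the second supremum in Definition~\ref{defn:sequence}. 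Combining this with $\sup_m|a_m b_m|\le\sup_m|a_m|\sup_m|b_m|$ and expanding $\|a\|_T\|b\|_T=(\sup_m|a_m|+L(a))(\sup_m|b_m|+L(b))$, the inequality $\|ab\|_T\le\|a\|_T\|b\|_T$ follows, since the only leftover term $L(a)L(b)$ is nonnegative.

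Completeness is the standard ``convergence in a stronger norm'' argument. Given a $\|\cdot\|_T$-Cauchy sequence $(a^{(k)})_k$, by~(2) it is uniformly Cauchy, hence converges uniformly to a bounded sequence $a$. For each fixed $n\in\Z^d$, $j\in\Z^d_0$ the quotient $(a^{(k)}_{n+j}-a^{(k)}_n)/(\lambda_j-\lambda_0)$ converges to $(a_{n+j}-a_n)/(\lambda_j-\lambda_0)$; moreover the Cauchy property gives, for every $\eps>0$, an $N$ with $\sup_{n,j}\big|(a^{(k)}_{n+j}-a^{(k)}_n-a^{(l)}_{n+j}+a^{(l)}_n)/(\lambda_j-\lambda_0)\big|<\eps$ for $k,l\ge N$, and letting $l\to\infty$ gives the same bound with $a^{(l)}$ replaced by $a$. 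Together with the uniform convergence of the $\sup_n$-part this shows $\|a^{(k)}-a\|_T\to 0$, and $\|a\|_T\le\|a-a^{(N)}\|_T+\|a^{(N)}\|_T<\infty$ shows $a\in\mathcal M_T$.

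I do not expect any step to be a genuine obstacle: the content of the lemma is simply that the $\lambda_0$-anchored difference quotient plays the role of a derivative, so $\mathcal M_T$ behaves like a $C^1$ (Lipschitz-type) Banach algebra, and the one point that carries an actual idea is the Leibniz splitting used for submultiplicativity in~(4).
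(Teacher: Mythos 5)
Your proof is correct and for item (4) uses exactly the same Leibniz splitting $a_{m+j}b_{m+j}-a_m b_m = a_{m+j}(b_{m+j}-b_m) + (a_{m+j}-a_m)b_m$ that the paper uses, arriving at the same bound $\sup|ab| + \sup|a|\,L(b) + L(a)\,\sup|b| \le \|a\|_T\|b\|_T$. The only difference is that you also supply a completeness argument; the paper declares $\mathcal M_T$ a Banach algebra but in its proof only verifies submultiplicativity and omits (i)--(iii) and completeness as elementary, so your write-up is a modest superset of the paper's.
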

\begin{proof}
We omit the elementary proof of $(i), (ii),$ and $(iii)$, we only give the proof of $(iv)$. Let  $a=(a_m)_{m \in \Z^d}, b=( b_m)_{m \in \Z^d} \in \mathcal{M}_T$, then
\begin{eqnarray}
\| a \, b\|_{T} & = & \sup_{n \in \Z^d}|a_n b_n|+\sup_{n \in \Z^d, \,j \in \Z^d_0} \left|\frac{a_{n+j}b_{n+j}-a_{n}b_n}{\lambda_j -\lambda_0}\right| \nonumber \\
& \leq  &  \sup_{n \in \Z^d}|a_n b_n|+  \sup_{k \in \Z^d}|a_k |\sup_{n \in \Z^d, \,j \in \Z^d_0} \left|\frac{b_{n+j}-a_{n}}{\lambda_j -\lambda_0}\right| \nonumber \\
& & + \sup_{k \in \Z^d}|b_k |\sup_{n \in \Z^d, \,j \in \Z^d_0} \left|\frac{a_{n+j}-a_{n}}{\lambda_j -\lambda_0}\right| \nonumber \\
& \leq &  \| a \|_{T} \| b \|_{T}. \nonumber
\end{eqnarray}
\end{proof}

\begin{definition}
For any $A \in {\mathcal L}(\ell^2(\Z^d))$ we consider its matrix elements with respect to the canonical basis $(e_n)_{n\in\Z^d}$ of $\ell^2(\Z^d)$: 
$$
A_{n,m}:= \langle e_n, A e_m \rangle, \qquad n,m \in \Z^d,
$$ 
and its main diagonals: 
$$
A_k:=( A_{n,n+k} )_{n \in \Z^d}, \qquad k \in \Z^d.
$$
\end{definition}
Definition~\ref{defn:sequence} in turn motivates the introduction of the  following family of subspaces of ${\mathcal L\left(\ell^2(\Z^d)\right)}$.
\begin{definition}\label{def:Malpha}
For all $\alpha >0$ we define
$$
 M^{\alpha}:=\left\{A \in {\mathcal L}\left(\ell^2(\Z^d)\right) \, :\,\
   \|A\|_{\alpha}:= \sup_{k \in \Z^d} e^{\alpha |k|}\,\| A_k\|_{T}  <+ \infty \right\}.
$$
Moreover $M^\infty$ is the space of the diagonal operators $A$ 
such that $\|A\|_{\infty}:=\|A_0\|_{T} < +\infty$. We denote with $A \in M^\alpha \mapsto [A] \in M^\infty$  the canonical projection of $M^{\alpha}$ onto $M^{\infty}$.
\end{definition}
\begin{remark}
$M^\alpha$  is a Banach space under the norm $\| \cdot\|_\alpha$ for any $\alpha >0 $ or $\alpha=\infty$.  Definition~\ref{defn:sequence} relates $\| \cdot\|_\alpha$ to the eigenvalues $(\lambda_n)_{n \in \Z^d}$ of the operator $T$.
Using this notation Assumption~\ref{A.4new} can be reformulated as follows: 
\begin{equation}\label{eqn:decayingk}
\|V\|_{\alpha}=\sup_{k \in \Z^d} e^{|k|\alpha} \| V_k\|_{T}  < +\infty,
\end{equation}
for some $\alpha>0$.
\end{remark}
\begin{lemma}\label{lemma:ineqnorm}
The following assertions hold:
\begin{enumerate}
\item If $\alpha > \beta >0$ then $M^{\alpha} \subset M^{\beta}$ and $\| A\|_{\alpha} > \| A \|_{\beta}$ for all $A \in M^{\alpha}$; 
\item If $\alpha >0$ then $M^{\infty} \subset M^{\alpha}$ and $\|B\|_{\infty}=\| B \|_{\alpha}$ for all $B \in M^{\infty}$;
\item If $\alpha >0$ then $\| [C] \|_{\infty} \leq \| C \|_{\alpha}$ and $\| (C - [C])  \|_{\alpha} \leq \| C \|_{\alpha}$ for all $C \in M^{\alpha}$; 
\item The map $ a= ( a_n )_{n \in \Z^d}  \in \mathcal{M}_T \mapsto A \in M^\infty$, with $A_{n,m}=a_n \delta_{n,m}$ for all $n,m \in \Z^d$, is an isometry between $\mathcal{M}_T$ and $M^\infty$;
\item If $A \in M^{\alpha}$, $\alpha >0$, then 
\begin{equation}
\label{qalpha}
\| A\| \leq  \left( \frac{1+ e^{-\alpha}}{1- e^{-\alpha}} \right)^d \|A\|_{\alpha},
\end{equation}
where $\| A\|$ is the operator norm of $A$ in ${\mathcal L}\left(\ell^2(\Z^d)\right)$. If $A \in M^{\infty}$ then $ \|A\| \leq \|A\|_{\infty}$.
\end{enumerate}
\end{lemma}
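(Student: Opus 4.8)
The plan is to verify the five assertions essentially by unpacking the definitions of $\|\cdot\|_T$, $\|\cdot\|_\alpha$, and $\|\cdot\|_\infty$, together with the already-established fact (Lemma~\ref{lemma:contrnorm}) that $\mathcal{M}_T$ is a Banach algebra with $\|\mathbf{1}\|_T=1$. Items (i)--(iv) are purely bookkeeping: for (i), note $e^{\alpha|k|}>e^{\beta|k|}$ for $|k|\neq 0$ and equality at $k=0$, so $\|A\|_\beta=\sup_k e^{\beta|k|}\|A_k\|_T\le \sup_k e^{\alpha|k|}\|A_k\|_T=\|A\|_\alpha$, with strict inequality unless $A_k=0$ for all $k\neq 0$; a little care is needed to state whether the inequality is strict or non-strict, but this is the only subtlety. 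For (ii), a diagonal operator $B\in M^\infty$ has $B_k=0$ for $k\neq 0$, so every weighted sup collapses to the $k=0$ term, giving $\|B\|_\alpha=\|B_0\|_T=\|B\|_\infty$ independently of $\alpha$. Item (iii) follows because $[C]_0=C_0$ and $[C]_k=0$ for $k\neq0$, so $\|[C]\|_\infty=\|C_0\|_T\le\|C\|_\alpha$, while $(C-[C])_0=0$ and $(C-[C])_k=C_k$ for $k\neq0$, whence $\|C-[C]\|_\alpha=\sup_{k\neq0}e^{\alpha|k|}\|C_k\|_T\le\|C\|_\alpha$. Item (iv) is immediate from Definition~\ref{def:Malpha}: the map sends $a$ to the diagonal operator with $A_0=a$, and $\|A\|_\infty=\|A_0\|_T=\|a\|_T$, with linearity and bijectivity onto $M^\infty$ clear.

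The substantive assertion is (v), the bound of the operator norm by the $\|\cdot\|_\alpha$-norm. The plan is as follows. Write $v=(v_n)_{n\in\Z^d}\in\ell^2(\Z^d)$, and decompose $A$ along its diagonals: $(Av)_n=\sum_{k\in\Z^d}A_{n,n+k}v_{n+k}=\sum_k (A_k)_n v_{n+k}$. Since $|(A_k)_n|\le\|A_k\|_T\le e^{-\alpha|k|}\|A\|_\alpha$ for every $n$ (using assertion (ii) of Lemma~\ref{lemma:contrnorm}), the operator $A$ is dominated entrywise by the convolution-type operator with kernel $e^{-\alpha|n-m|}\|A\|_\alpha$. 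A Schur test then finishes the job: for the kernel $K_{n,m}=e^{-\alpha|n-m|}$ one has $\sup_n\sum_m K_{n,m}=\sup_m\sum_n K_{n,m}=\sum_{k\in\Z^d}e^{-\alpha|k|}=\bigl(\sum_{j\in\Z}e^{-\alpha|j|}\bigr)^d=\bigl(\tfrac{1+e^{-\alpha}}{1-e^{-\alpha}}\bigr)^d$, using $|k|=|k_1|+\cdots+|k_d|$ and $\sum_{j\in\Z}e^{-\alpha|j|}=1+2\sum_{j\ge1}e^{-\alpha j}=\tfrac{1+e^{-\alpha}}{1-e^{-\alpha}}$. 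By Schur's lemma the operator with kernel $K$ has operator norm at most $\bigl(\tfrac{1+e^{-\alpha}}{1-e^{-\alpha}}\bigr)^d$, and since $|A_{n,m}|\le\|A\|_\alpha K_{n,m}$ we conclude $\|A\|\le\bigl(\tfrac{1+e^{-\alpha}}{1-e^{-\alpha}}\bigr)^d\|A\|_\alpha$. The final sentence, $\|A\|\le\|A\|_\infty$ for diagonal $A$, is trivial: a diagonal operator has operator norm $\sup_n|A_{n,n}|=\sup_n|(A_0)_n|\le\|A_0\|_T=\|A\|_\infty$.

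The only mild obstacle is making the Schur-test step fully rigorous—one should either invoke Schur's test as a cited lemma (perhaps relegated to Appendix~\ref{sect:lemmas}) or give the two-line Cauchy--Schwarz argument: $|(Av)_n|^2\le\bigl(\sum_m|A_{n,m}|\bigr)\bigl(\sum_m|A_{n,m}||v_m|^2\bigr)\le \|A\|_\alpha M_1\cdot\|A\|_\alpha\sum_m K_{n,m}|v_m|^2$ with $M_1=\bigl(\tfrac{1+e^{-\alpha}}{1-e^{-\alpha}}\bigr)^d$, and then summing over $n$ and swapping the order of summation gives $\|Av\|^2\le \|A\|_\alpha^2 M_1^2\|v\|^2$. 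Everything else is routine manipulation of suprema, so I would present (i)--(iv) compactly and devote the bulk of the written proof to the computation of the geometric sum and the Schur/Cauchy--Schwarz estimate in (v).
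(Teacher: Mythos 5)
Your proposal is correct and follows essentially the same route as the paper: items (i)--(iv) are definition-unpacking (which the paper also omits as straightforward), and for (v) your Schur-test/Cauchy--Schwarz argument with the dominating kernel $e^{-\alpha|n-m|}$ and the geometric sum $\sum_{k\in\Z^d}e^{-\alpha|k|}=\bigl(\tfrac{1+e^{-\alpha}}{1-e^{-\alpha}}\bigr)^d$ is the same estimate the paper carries out by expanding $\|Av\|^2$ and applying Cauchy--Schwarz to the cross terms. Your remark that the inequality in (i) is only strict when some off-diagonal $A_k$ is nonzero is a fair observation (the stated strict inequality fails for diagonal operators, consistent with (ii)), and your direct argument for $\|A\|\le\|A\|_\infty$ is if anything simpler than the paper's limiting argument in $\alpha$.
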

\begin{proof}
The proof of $(i), (ii), (iii)$ and $(iv)$ are straightforward, we only give the proof of $(v)$. If $A \in M^{\alpha}$, $\alpha >0$, then for $v\in \ell^2(\Z^d)$,
\begin{eqnarray}
\|A v\|^2 &=& \sum_{m \in \Z^d}\Bigl|\sum_{j \in \Z^d} A_{m,j} v_j\Bigr|^2  \nonumber \\
&\leq& \sum_{m \in \Z^d}\Bigl(\sum_{k \in \Z^d} |A_{m,m+k}| \,|v_{m+k}|\Bigr)^2  
\nonumber\\
&=& \sum_{m \in \Z^d}\Bigl(\sum_{k \in \Z^d} |(A_{k})_m| \,|v_{m+k}|\Bigr)^2  
\nonumber\\
&\leq& \sum_{m \in \Z^d}\Bigl(\sum_{k \in \Z^d} \|A_{k}\|_{T}\, |v_{m+k}|\Bigr)^2  
\label{cfr_norm}\\
&=& \sum_{m \in \Z^d}\Bigl(\sum_{k \in \Z^d} \|A_{k}\|_{T}\, e^{\alpha |k|} e^{-\alpha |k|} \,|v_{m+k}|\Bigr)^2  
\nonumber\\
&\leq& \|A\|_{\alpha}^2 \sum_{m \in \Z^d}\Bigl(\sum_{k \in \Z^d}   e^{-\alpha |k|} \,|v_{m+k}|\Bigr)^2  
\nonumber\\
&=& \|A\|_{\alpha}^2 \sum_{k,j \in \Z^d}
   e^{-\alpha |k|} e^{-\alpha |j|} \sum_{m \in \Z^d}|v_{m+k}|\,|v_{m+j}| 
\nonumber\\
&\leq& \|A\|_{\alpha}^2 \sum_{k,j \in \Z^d}
   e^{-\alpha |k|} e^{-\alpha |j|} \sum_{m \in \Z^d}|v_{m}|^2\, 
\nonumber\\
&=& \|A\|_{\alpha}^2 \Bigl(\sum_{k \in \Z^d}
   e^{-\alpha |k|} \Bigr)^2 \|v\|_{\ell^2(\Z^d)}^2 
\nonumber\\
      &=& \|A\|_{\alpha}^2 \left( \frac{1+ e^{-\alpha}}{1- e^{-\alpha}} \right)^{2d} \|v\|_{\ell^2(\Z^d)}^2, \label{cfr_norm1}
\end{eqnarray}
whence one has~\eqref{qalpha}, where in~(\ref{cfr_norm}) we used the inequality  $\sup_{n \in \Z^d} |a_n| \leq \| a \|_{T}$, and in~\eqref{cfr_norm1} we used the equality
\begin{equation}
	\sum_{k\in \Z} e^{-\alpha |k|} = 2\sum_{k\in \N} e^{-\alpha k}-1 = \frac{2}{1-e^{-\alpha}}-1=\frac{1+e^{-\alpha}}{1-e^{-\alpha}}. 
\label{eq:sumk}
\end{equation}
Finally, if $A \in M^{\infty}$ then for all $\alpha >0$: $\|A\|_{\infty}= \|A\|_{\alpha}$, therefore
$$
\|A\| \leq \inf_{\alpha >0}  \left( \frac{1+ e^{-\alpha}}{1- e^{-\alpha}} \right)^d \|A\|_{\alpha}=\inf_{\alpha >0}  \left( \frac{1+ e^{-\alpha}}{1- e^{-\alpha}} \right)^d \|A\|_{\infty}=\|A\|_{\infty},
$$ 
because $q(\alpha)= \left( \frac{1+ e^{-\alpha}}{1- e^{-\alpha}} \right)^d$ is a strictly decreasing function and 
$$
\lim_{\alpha \to +\infty}q(\alpha)=\inf_{\alpha >0} q(\alpha)=1.
$$
\end{proof}

\subsection{Stability of small denominators}
\label{stabilityden}
Let us conclude the preliminaries by stating and proving a result that will play  a fundamental role in the subsequent KAM argument.
\begin{theorem}\label{thm:propdiof}
Let $T$ be a self-adjoint operator acting on $\ell^2(\Z^d)$, $d\geq 1$, with  purely point spectrum with eigenvalues $(\lambda_n)_{n\in\Z^d}$ corresponding to the canonical basis $( e_n )_{n \in \Z^d}$ satisfying Assumptions~\ref{A.1new}--\ref{A.3new}.  Then the following assertions hold:
\begin{enumerate}
\item Let $v=(v_n)_{n \in \Z^d} \in \mathcal{M}_T$ and $\tilde{\lambda}=(\tilde{\lambda}_n)_{n \in \Z^d}$ be the sequence with $\tilde{\lambda}_n=\lambda_n+v_n$, $n \in \Z^d$. If $\|v\|_{T} \leq 1/(4c)$, then for all $k \in \Z^d_{0}$ the sequence 
$$
\left(\Theta_k \tilde{\lambda}-\tilde{\lambda}\right)^{-1} =\left( \frac{1}{\lambda_{n+k}+v_{n+k}- \lambda_n -v_n}\right)_{n \in \Z^d} 
$$
belongs to $ \mathcal{M}_T$ and fulfills the estimate
$$
\left\| \left(\Theta_k \tilde{\lambda}-\tilde{\lambda}\right)^{-1}  \right\|_{T} \leq 12c^2 |k|^{2\gamma}.
$$
\item Let $\ell \in \N$, $\alpha_0 \geq \alpha_1 \geq \dots \geq \alpha_\ell >0$, and $V^{(0)}, V^{(1)}, \dots ,V^{(\ell)}$ be operators on $\ell^2(\Z^d)$ with $V^{(j)} \in M^{\alpha_j}$, for all $j=0,1, \dots, \ell$. Let $\e \in \R$ and 
\begin{equation}
T^{(\ell+1)}=T+ \sum_{j=0}^{\ell} \e_j \left[V^{(j)}\right],
\end{equation}
with $\e_j:=\e^{2^j}$.
If 
\begin{equation}\label{eqn:hypsumeps}
\sum_{j=0}^{\ell} \left\| \e_{j} V^{(j)} \right\|_{\alpha_j} \leq \frac{1}{4c},
\end{equation}
then $T^{(\ell+1)}$ is an operator with  purely point spectrum with eigenvalues $\left(\lambda^{(\ell+1)}_n\right)_{n\in\Z^d}$ corresponding to the canonical basis $(e_n )_{n \in \Z^d}$  and for all $k \in \Z^d_{0}$
\begin{equation}\label{eqn:estDiofantea}
 \sup_{n \in \Z^d}  \frac{1}{\left| \lambda^{(\ell+1)}_{n+k}- \lambda_n^{(\ell+1) }\right|} \leq  12c^2 |k|^{2\gamma}.
\end{equation}
Moreover, the homological equation
\begin{equation}\label{eqn:homologicalth}
\left[ T^{(\ell+1)}, W^{(\ell)} \right]+\e_{\ell} V^{(\ell)}- \e_\ell \left[V^{(\ell)}\right]=0,
\end{equation}
admits a unique solution $W^{(\ell)} \in M^{\alpha_\ell-\delta}$, $0< \delta <\alpha_\ell$, such that $\left[W^{(\ell)}-I\right]=0$ and
$$
\left\| W^{(\ell)}-I \right\|_{\alpha_\ell-\delta} < 12 c^2 \left(\frac{2 \gamma }{e \delta}\right)^{2 \gamma }  \left\|\e_\ell  V^{(\ell)}\right\|_{\alpha_\ell}.
$$
\end{enumerate}
\end{theorem}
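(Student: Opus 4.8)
The plan is to prove (1) — which carries the real content — and then deduce (2) from it together with the Banach-algebra calculus of Subsection~\ref{sect:notation}. For (1), fix $k\in\Z^d_0$ and abbreviate $D_n:=\lambda_{n+k}-\lambda_n$, $w_n:=v_{n+k}-v_n$, so that the $n$-th entry of $\Theta_k\tilde{\lambda}-\tilde{\lambda}$ is $D_n+w_n$. The key step is the \emph{pointwise} estimate
\[
\Bigl|\frac{w_n}{D_n}\Bigr|\le 3c\,\|v\|_T\le\frac34,\qquad n\in\Z^d.
\]
I would obtain it by combining two elementary bounds on the numerator — the supremum bound $|w_n|\le 2\sup_m|v_m|\le 2\|v\|_T$ (Lemma~\ref{lemma:contrnorm}(ii)) and the ``shaping'' bound $|w_n|\le\|v\|_T\,|\lambda_k-\lambda_0|$ from Definition~\ref{defn:sequence} with $j=k$ — with Assumption~\ref{A.2new}, which gives $1/|D_n|\le c\,(1+1/|\lambda_k-\lambda_0|)$ uniformly in $n$, and the trivial inequality $\min\{2,t\}\,(1+1/t)\le 3$ valid for $t>0$. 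Since $\|v\|_T\le 1/(4c)$, this yields $D_n+w_n=D_n\,(1+w_n/D_n)\ne 0$ and $|D_n+w_n|\ge\frac14|D_n|$, so by Assumption~\ref{A.1new} one gets $\sup_n 1/|\tilde{\lambda}_{n+k}-\tilde{\lambda}_n|\le 4c\,|k|^\gamma$, which controls the supremum part of $\|(\Theta_k\tilde{\lambda}-\tilde{\lambda})^{-1}\|_T$.

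For the finite-difference part of that norm I would set $g_n:=(1+w_n/D_n)^{-1}$, so $|g_n|\le 4$ pointwise, write $(D_n+w_n)^{-1}=g_n/D_n$, and split
\[
\frac{1}{D_{n+j}+w_{n+j}}-\frac{1}{D_n+w_n}=g_{n+j}\Bigl(\frac1{D_{n+j}}-\frac1{D_n}\Bigr)+\frac{g_{n+j}-g_n}{D_n}.
\]
Divided by $|\lambda_j-\lambda_0|$, the first summand is at most $4c\,(1+1/|\lambda_k-\lambda_0|)$ by Assumption~\ref{A.3new} and $|g_{n+j}|\le 4$, hence $O(c^2|k|^\gamma)$ by Assumption~\ref{A.1new}. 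For the second I would use $|g_{n+j}-g_n|\le 16\,|w_{n+j}/D_{n+j}-w_n/D_n|$ and then bound $|\lambda_j-\lambda_0|^{-1}\,|w_{n+j}/D_{n+j}-w_n/D_n|$ by expanding $w/D$ and invoking Assumptions~\ref{A.2new}--\ref{A.3new} and the two bounds on $w$ once more; this is $O(c^2|k|^\gamma\|v\|_T)$, which multiplied by $1/|D_n|\le c|k|^\gamma$ (Assumption~\ref{A.1new}) and then by $\|v\|_T\le 1/(4c)$ gives $O(c^2|k|^{2\gamma})$. Adding these contributions to the supremum part and tracking the numerical constants yields $\|(\Theta_k\tilde{\lambda}-\tilde{\lambda})^{-1}\|_T\le 12c^2|k|^{2\gamma}$; the exponent doubles because one factor $|k|^\gamma$ comes from $1/|D_n|$ and one from the finite difference of $w/D$.

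For (2), apply (1) to the diagonal sequence $v:=\bigl(\sum_{j=0}^{\ell}\e_j[V^{(j)}]_{n,n}\bigr)_{n\in\Z^d}$: by Lemma~\ref{lemma:ineqnorm}(iii),(iv) one has $\|\e_j[V^{(j)}]\|_\infty\le\|\e_jV^{(j)}\|_{\alpha_j}$, so the triangle inequality in $\mathcal{M}_T$ and~\eqref{eqn:hypsumeps} give $v\in\mathcal{M}_T$ with $\|v\|_T\le 1/(4c)$. Then $T^{(\ell+1)}$ acts on $e_n$ as multiplication by $\lambda_n^{(\ell+1)}:=\lambda_n+v_n$, so (being diagonal in the basis $(e_n)$) it has pure point spectrum, and by (1) the $\lambda_n^{(\ell+1)}$ are pairwise distinct (hence the spectrum is simple) and satisfy~\eqref{eqn:estDiofantea}. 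Since $T^{(\ell+1)}$ is diagonal, $[T^{(\ell+1)},W]_{n,m}=(\lambda_n^{(\ell+1)}-\lambda_m^{(\ell+1)})W_{n,m}$, so~\eqref{eqn:homologicalth} together with $[W^{(\ell)}-I]=0$ forces $W^{(\ell)}_{n,n}=1$ and, for $n\ne m$, $W^{(\ell)}_{n,m}=-\e_\ell V^{(\ell)}_{n,m}/(\lambda_n^{(\ell+1)}-\lambda_m^{(\ell+1)})$; since the denominators are nonzero this is well defined, solves~\eqref{eqn:homologicalth}, and is the unique solution with vanishing diagonal correction. Finally, for $k\ne 0$ the $k$-th diagonal of $W^{(\ell)}-I$ equals, up to sign, the pointwise product of the sequences $\e_\ell V^{(\ell)}_k$ and $(\Theta_k\tilde{\lambda}-\tilde{\lambda})^{-1}$, so by Lemma~\ref{lemma:contrnorm}(iv) and (1), $\|(W^{(\ell)}-I)_k\|_T\le 12c^2|k|^{2\gamma}\|\e_\ell V^{(\ell)}_k\|_T$; multiplying by $e^{(\alpha_\ell-\delta)|k|}$, taking the supremum over $k\in\Z^d_0$, and using $\sup_{t\ge 0}t^{2\gamma}e^{-\delta t}=(2\gamma/(e\delta))^{2\gamma}$ gives the asserted bound on $\|W^{(\ell)}-I\|_{\alpha_\ell-\delta}$, while $W^{(\ell)}=I+(W^{(\ell)}-I)\in M^{\alpha_\ell-\delta}$ because $I\in M^\infty\subset M^{\alpha_\ell-\delta}$.

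The expected main obstacle is the finite-difference part of $\|(\Theta_k\tilde{\lambda}-\tilde{\lambda})^{-1}\|_T$ in (1). Inverting $1+w/D$ by a Neumann series in $\mathcal{M}_T$ is hopeless, since $\|(w_n/D_n)_n\|_T$ grows like $|k|^\gamma$ and cannot be made $<1$ uniformly in $k$; the point is to exploit that $w_n/D_n$ is small \emph{pointwise}, uniformly in $n$ \emph{and} $k$, although its $\mathcal{M}_T$-norm is not, and then to follow carefully how the three Diophantine-type Assumptions~\ref{A.1new}--\ref{A.3new} interact so that the surviving polynomial factor is exactly $|k|^{2\gamma}$.
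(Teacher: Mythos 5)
Your strategy for part (1) is essentially the paper's: control the relative perturbation $w_n/D_n$ (with $D_n=\lambda_{n+k}-\lambda_n$, $w_n=v_{n+k}-v_n$) pointwise via Assumption~\ref{A.2new} and the two bounds on $|w_n|$ encoded in $\|v\|_{T}$, then estimate the divided difference by isolating a piece handled by Assumption~\ref{A.3new} and pieces handled by the Lipschitz part of $\|v\|_{T}$ together with Assumption~\ref{A.1new}; part (2) is the same reduction to the diagonal sequence, the same explicit solution of the homological equation, and the same use of the Banach-algebra property plus $\sup_{r\geq 0}r^{2\gamma}e^{-\delta r}=(2\gamma/(e\delta))^{2\gamma}$, and that part is correct as written. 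The one substantive issue is quantitative, and it sits exactly where you flag the difficulty. Your pointwise bound $|w_n/D_n|\leq 3c\|v\|_{T}\leq 3/4$, obtained via $\min\{2,t\}(1+1/t)\leq 3$, only gives $|g_n|\leq 4$ and hence $|g_{n+j}-g_n|\leq 16\,|w_{n+j}/D_{n+j}-w_n/D_n|$; tallying your three contributions with these constants yields roughly $4c|k|^{\gamma}+8c^{2}|k|^{\gamma}+20c|k|^{2\gamma}$, i.e.\ on the order of $32c^{2}|k|^{2\gamma}$, not the $12c^{2}|k|^{2\gamma}$ asserted in the statement. Since the constant $12$ propagates into the Diophantine estimate, into $\Phi$, $\phi_\infty$, $\xi$ and ultimately $\e^{*}$, "tracking the numerical constants" does not close the argument with the intermediate bounds you state. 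The repair is to sharpen the pointwise step as the paper does: write
\begin{equation*}
|w_n|\left(1+\frac{1}{|\lambda_k-\lambda_0|}\right)=|w_n|+\frac{|w_n|}{|\lambda_k-\lambda_0|}\leq 2\sup_{m}|v_m|+\sup_{n,\,j\neq 0}\frac{|v_{n+j}-v_n|}{|\lambda_j-\lambda_0|}\leq 2\|v\|_{T},
\end{equation*}
applying each numerator bound to the summand it is adapted to rather than taking a minimum. This gives $|w_n/D_n|\leq 2c\|v\|_{T}\leq 1/2$, hence $|g_n|\leq 2$ and a Lipschitz factor $4$ instead of $16$, and with these constants your decomposition (a telescoping through $g_n$, versus the paper's one-shot expansion of the full difference of reciprocals into three terms) does deliver the bound $12c^{2}|k|^{2\gamma}$.
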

\begin{proof}
We start by proving $(i)$. Let  $k \in \Z^d_0$, by definition
\begin{eqnarray}
& &\left\|  \left(\Theta_k \tilde{\lambda}-\tilde{\lambda}\right)^{-1} \right\|_{T}  =\sup_{n \in \Z^d}\left| \frac{1}{ \lambda_{n+k}+ v_{n+k}- \lambda_n- v_n } \right| \nonumber \\
&  & + \sup_{n \in \Z^d, \, j \in \Z_0^d} \left| \frac{1}{\lambda_j - \lambda_0} \left(\frac{1}{ \lambda_{n+j+k}+ v_{n+j+k}- \lambda_{n+j}- v_{n+j} } -\frac{1}{ \lambda_{n+k}+ v_{n+k}- \lambda_n- v_n }\right)
\right|. \nonumber \end{eqnarray}
Notice that, by Assumption~\ref{A.2new}, we have
\begin{eqnarray}
\left| \frac{v_{n+k}-v_n}{\lambda_{n+k}-\lambda_n}\right| & \leq & c |v_{n+k}-v_n|\left( 1 + \frac{1}{|\lambda_{k}-\lambda_0|}\right) \nonumber \\
& \leq & c \left( |v_{n+k}|+  |v_n|+ \frac{|v_{n+k}-v_{n}|}{|\lambda_k- \lambda_0|} \right) \nonumber \\
& \leq & 2c \|v\|_{T} \nonumber \\
& \leq & \frac{1}{2}. \nonumber
\end{eqnarray}
Therefore,
\begin{eqnarray}
 \frac{1}{ \left|\lambda_{n+k}+ v_{n+k}- \lambda_n- v_n  \right|}& = &  \frac{1}{ \left|\lambda_{n+k}- \lambda_n  \right| \left| 1+ \frac{v_{n+k}-v_n}{\lambda_{n+k}-\lambda_n}\right|}  \nonumber
 \\
   & \leq&   \frac{1}{ \left|\lambda_{n+k}- \lambda_n  \right| \left( 1- \left| \frac{v_{n+k}-v_n}{\lambda_{n+k}-\lambda_n} \right| \right) }  \nonumber \\
        & \leq & \frac{2}{ \left|\lambda_{n+k}- \lambda_n  \right|  }  \nonumber \\
      & \leq &2 c |k|^{\gamma}, \label{eq:useA.1}
\end{eqnarray}
where in~(\ref{eq:useA.1}) we used Assumption~\ref{A.1new}.
Moreover
\begin{eqnarray}
&& \left| \frac{1}{\lambda_j - \lambda_0}\left(
\frac{1}{ \lambda_{n+j+k}+ v_{n+j+k}- \lambda_{n+j}- v_{n+j} } -\frac{1}{ \lambda_{n+k}+ v_{n+k}- \lambda_n- v_n }\right)\right| \nonumber \\
&  & \leq  \frac{|\lambda_{n+k}- \lambda_n - \lambda_{n+j+k}+\lambda_{n+j}|}{|\lambda_j-\lambda_0| |\lambda_{n+j+k}-\lambda_{n+j}+v_{n+j+k}-v_{n+j}| |\lambda_{n+k}-\lambda_{n}+v_{n+k}-v_{n}|}  \nonumber
 \\
 &  &\quad  + \frac{|v_{n+j}- v_{n} |}{|\lambda_j-\lambda_0| |\lambda_{n+j+k}-\lambda_{n+j}+v_{n+j+k}-v_{n+j}| |\lambda_{n+k}-\lambda_{n}+v_{n+k}-v_{n}|}  \nonumber
 \\
 & &\quad  + \frac{|v_{n+k}- v_{n+j+k} |}{|\lambda_j-\lambda_0| |\lambda_{n+j+k}-\lambda_{n+j}+v_{n+j+k}-v_{n+j}| |\lambda_{n+k}-\lambda_{n}+v_{n+k}-v_{n}|}  \nonumber
 \\
 &  & \leq  \frac{1}{ \left( 1- 2c \| v \|_{T}\right)^2 }\left( \frac{|\lambda_{n+k}- \lambda_n - \lambda_{n+j+k}+\lambda_{n+j}|}{|\lambda_j-\lambda_0| |\lambda_{n+j+k}-\lambda_{n+j}| |\lambda_{n+k}-\lambda_{n}|} \right.  \nonumber
 \\
 &  &  \qquad\qquad\qquad\qquad+  \frac{|v_{n+j}- v_{n} |}{|\lambda_j-\lambda_0| |\lambda_{n+j+k}-\lambda_{n+j}| |\lambda_{n+k}-\lambda_{n}|}  \nonumber
 \\
 & &  \left. \qquad\qquad\qquad\qquad+ \frac{|v_{n+k}- v_{n+j+k} |}{|\lambda_j-\lambda_0| |\lambda_{n+j+k}-\lambda_{n+j}| |\lambda_{n+k}-\lambda_{n}|} \right).  \label{eq:useA.12timesA.2}
\end{eqnarray}
We observe that by Assumption~\ref{A.3new}:
\begin{eqnarray}
\frac{|\lambda_{n+k}- \lambda_n - \lambda_{n+j+k}+\lambda_{n+j}|}{|\lambda_j-\lambda_0| |\lambda_{n+j+k}-\lambda_{n+j}| |\lambda_{n+k}-\lambda_{n}|} &=&  
 \left| \frac{1}{\lambda_j-\lambda_0}\left(\frac{1}{\lambda_{n+j+k}-\lambda_{n+j} } - \frac{1}{\lambda_{n+k}-\lambda_n} \right) \right| \nonumber \\
&\leq&   c \left( 1+ \frac{1}{|\lambda_k-\lambda_0|}\right). \label{eq:first} 
\end{eqnarray}
Moreover, using the definition of $\| v\|_{T}$ and Assumption~\ref{A.1new}:
\begin{eqnarray}
 \frac{|v_{n+j}- v_{n} |}{|\lambda_j-\lambda_0| |\lambda_{n+j+k}-\lambda_{n+j}| |\lambda_{n+k}-\lambda_{n}|} 
&\leq&   \frac{ \| v\|_{T}}{ |\lambda_{n+j+k}-\lambda_{n+j}| |\lambda_{n+k}-\lambda_{n}|}  \nonumber \\
&\leq&   \| v\|_{T}c^2|k|^{2\gamma}, \label{eq:second}
\end{eqnarray}
and
\begin{eqnarray}
 \frac{|v_{n+k}- v_{n+j+k} |}{|\lambda_j-\lambda_0| |\lambda_{n+j+k}-\lambda_{n+j}| |\lambda_{n+k}-\lambda_{n}|} 
&\leq&   \frac{ \| \Theta_k v\|_{T}}{ |\lambda_{n+j+k}-\lambda_{n+j}| |\lambda_{n+k}-\lambda_{n}|}  \nonumber\label{eq:usetraslinv}  \\
&\leq&   \| v\|_{T}c^2|k|^{2\gamma}, \label{eq:third}
\end{eqnarray}
where in~(\ref{eq:third}) we used that $\|v\|_{T}=\|\Theta_k v\|_{T}$.
By plugging~\eqref{eq:first},~\eqref{eq:second},and ~\eqref{eq:third} in~\eqref{eq:useA.12timesA.2}, we have that
\begin{eqnarray}
&& \left| \frac{1}{\lambda_j - \lambda_0} \left(\frac{1}{ \lambda_{n+j+k}+ v_{n+j+k}- \lambda_{n+j}- v_{n+j} } -\frac{1}{ \lambda_{n+k}+ v_{n+k}- \lambda_n- v_n }\right)\right| \nonumber \\
& & \leq  \frac{1}{ \left( 1- 2c \| v \|_{T}\right)^2 }  \left( c\left(1+\frac{1}{|\lambda_k-\lambda_0|}\right)+ 2 \| v \|_{T} c^2 |k|^{2 \gamma} \right)  \nonumber\\
& & \leq 4 (c+c^2 |k|^{\gamma}+ 2 \| v \|_{T}c^2|k|^{2 \gamma})  \label{eq:useA.1d} \\
& & \leq  10c^2 |k|^{2 \gamma}, \label{eq:use14c1}
\end{eqnarray}
where  in~(\ref{eq:useA.1d})  we used  Assumption~\ref{A.1new} and in~(\ref{eq:use14c1}) we used  $\| v\|_{T} \leq 1/4c$ and $c\geq 1$.
Therefore,   we can conclude that
$$
\left\|\left(\Theta_k \tilde{\lambda}- \tilde{\lambda}\right)^{-1}\right\|_{T}  \leq 2 c |k|^{\gamma}+10 c^2 |k|^{2 \gamma} \leq 12 c^2 |k|^{2\gamma}.
$$
Now we prove $(ii)$. We define the sequence $w=(w_n)_{n \in \Z^d}$, with $w_n= \sum_{j=0}^{\ell} \e_j V^{(j)}_{n,n}$. Notice that, by Lemma~\ref{lemma:ineqnorm} $(iii)$, and by assumption~\eqref{eqn:hypsumeps},
$$
\|w\|_{T}=\left\| \sum_{j=0}^{\ell} \e_j \left[ V^{(j)} \right]\right\|_{\infty} \leq  \sum_{j=0}^{\ell}\left\| \e_j  V^{(j)} \right\|_{\alpha_j} \leq \frac{1}{4c}.
$$
Therefore, by $(i)$ we have that 
$$
\sup_{n \in \Z^d}  \frac{1}{\left| \lambda^{(\ell+1)}_{n+k}- \lambda_n^{(\ell+1) }\right|}\leq \left\|\left(\Theta_k \lambda^{(\ell+1)}- \lambda^{(\ell+1)}\right)^{-1}\right\|_{T}  \leq  12 c^2 |k|^{2\gamma}.
$$
Moreover, it is easy to check that the desired solution of~(\ref{eqn:homologicalth}) is $W^{(\ell)}$ such that for all $m,k \in \Z^d, k \neq 0$:
$$
W^{(\ell)}_{m,m+k}= \frac{\e_\ell V^{(\ell)}_{m,m+k}}{\lambda^{(\ell+1)}_{m+k}-\lambda^{(\ell+1)}_m}, \quad W^{(\ell)}_{m,m}=1.
$$
Notice that $W^{(\ell)}$ can be rewritten in terms of its main diagonals as $W_{0}^{(\ell)}={\bf 1}$, and  for all $ k \in \Z^d_0$
$$
W_k^{(\ell)}=\left(\Theta_k \lambda^{(\ell+1)}-\lambda^{(\ell+1)}\right)^{-1}\, \e_\ell V_k^{(\ell)}.
$$
Moreover, since $\left\|W^{(\ell)}_0-{\bf 1}\right \|_{T}=0$, we have that
\begin{eqnarray}
\left\|W^{(\ell)}-I\right\|_{\alpha_\ell - \delta}&=& \max\left\{\left\|W^{(\ell)}_0-{\bf 1}\right \|_{T}, \;\sup_{k \in \Z_0^d} \left\|W^{(\ell)}_k\right\|_{T} e^{(\alpha_\ell-\delta) |k|}\right\} \nonumber \\
&=& \sup_{k \in \Z_0^d} \left\|W^{(\ell)}_k\right\|_{T}  e^{(\alpha_\ell-\delta) |k|}.
\end{eqnarray}
Now, by~(\ref{eqn:estDiofantea}) we have that for all $k \in \Z^d_0$:
\begin{eqnarray*}
\left\|W^{(\ell)}_k\right\|_{T}&=&\left\|\left(\Theta_k \lambda^{(\ell+1)}- \lambda^{(\ell+1)}\right)^{-1}\, \e_\ell V^{(\ell)}_k \right\|_{T} \\
&\leq&\left \|\left(\Theta_k \lambda^{(\ell+1)}- \lambda^{(\ell+1)}\right)^{-1} \right\|_{T} \left\| \e_\ell V^{(\ell)}_k \right\|_{T} \\
& \leq & 12 c^2 |k|^{2 \gamma}  \left\| \e_\ell  V_k^{(\ell)} \right\|_{T}  \\
&\leq& 12 c^2 |k|^{2 \gamma} e^{-\alpha_\ell|k|} \| \e_\ell V^{(\ell)} \|_{\alpha_\ell},
\end{eqnarray*}
whence
\begin{eqnarray*}
\left\|W^{(\ell)}-I\right\|_{\alpha_\ell - \delta}&=& \sup_{k \in \Z_0^d} \left\|W_k^{(\ell)}\right \|_{T} e^{(\alpha_\ell-\delta)|k|} \\
&\leq& 12c^2  \left\| \e_\ell V^{(\ell)} \right\|_{\alpha_\ell} \sup_{k \in \Z^d}  |k|^{2 \gamma} e^{-\delta|k|} \\
&\leq& 12 c^2 \left(\frac{2 \gamma}{e \delta}\right)^{2 \gamma}\left\| \e_\ell V^{(\ell)}\right\|_{\alpha_\ell},
\end{eqnarray*}
where we have made use of the straightforward estimation
$$
\sup_{k \in \Z^d}  |k|^{2 \gamma} e^{-\delta|k|} \leq \max_{r>0} r^{2\gamma} e^{-\delta r} = \left(\frac{2 \gamma}{e \delta}\right)^{2 \gamma}.
$$

\end{proof}

\section{Proof of Theorem~\ref{thm:mainth2}} 
\label{sect:proof}
The proof will follow the steps outlined in Section~\ref{sect:roadmap}.
\subsection{Part $(1)$. KAM scheme}
In this part of the proof we formulate the KAM scheme and prove that it is possible to iterate the procedure. At each step $\ell \in \N$ of the scheme we want to find two operators $U^{(\ell+1)}, V^{(\ell+1)} \in {\mathcal L}\left(\ell^2\left(\Z^d\right)\right)$ such that
\begin{equation}\label{eqn:Ul1TUl}
{U^{(\ell+1)}}^{-1}(T+ \e V)U^{(\ell+1)}=T^{(\ell+1)}+ \e_{\ell+1} V^{(\ell+1)},
\end{equation}
where  $T^{(0)}=T$, $V^{(0)}=V$, $U^{(0)}=I$, 
$$
T^{(\ell+1)}=T+ \sum_{j=0}^{\ell} \e_j \left[V^{(j)}\right]
$$
and $\e_j:=\e^{2^j}$ for all $j=0, \dots, \ell+1$.

To do this we have to prove that  the homological equation
\begin{equation}
\label{hom01}
\left[T^{(\ell+1)}, W^{(\ell)}\right]+ \e_\ell V^{(\ell)}-\e_\ell \left[V^{(\ell)}\right]=0,
\end{equation}
admits an invertible solution ${W^{(\ell)}}$ and then define
$$
\e_{\ell+1} V^{(\ell+1)}:= {W^{(\ell)}}^{-1}\left(\e_\ell V^{(\ell)}-\e_\ell \left[V^{(\ell)}\right]\right)\left(W^{(\ell)}-I\right),
$$
and
$$
U^{(\ell+1)}:=U^{(\ell)}W^{(\ell)}.
$$

To this end we want to apply  Theorem~\ref{thm:propdiof}, so we need to prove that all the assumptions of Theorem~\ref{thm:propdiof} are satisfied.
We define  a  decreasing sequence $(\alpha_{p})_{p \in \N}$, with
$$
\alpha_p:=\alpha-2 \sum_{\nu=0}^{p-1} \sigma_{\nu},
$$
where 
$$
	\sigma_{\nu}=\frac{\sigma}{2^{\nu+2}}, \qquad  \sigma=\min\left\{1, \frac{\alpha}{2}\right\},
$$
 and we observe that 
$$
\lim_{p \to +\infty} \alpha_p= \alpha-\sigma >0.
$$
Then we consider the sequence $\left(\phi_p\right)_{p\in \N}$, with $\phi_0=1$ and
$$
\phi_p=  \prod_{\nu=0}^{p-1}\Phi(\sigma_{\nu})^{\frac{1}{2^{\nu+1}}}, \qquad p \geq 1
$$
where
$$
\Phi(x)=12 c^2 \left(\frac{2\gamma}{e }\right)^{2\gamma} x^{-4d-2 \gamma},
$$
and, by Lemma~\ref{prodinf}, we have 
\begin{equation}\label{eq:Phiphiinf}
\phi_{\infty}:=\lim_{p \to \infty} \phi_p=  12 c^2 \left( \frac{2 \gamma}{e}\right)^{2 \gamma} \left(\frac{4}{\sigma}\right)^{4d+2 \gamma}.
\end{equation}

We will prove that if $\e$ is small enough, namely if
\begin{equation}\label{eqn:hypeps}
|\e| \leq \e^*=\frac{\xi}{4c \|V\|_{\alpha}},
\end{equation}
with 
\begin{equation*}
	\xi= \frac{  \sigma^{4d+2\gamma}}{ 3c \left(\frac{2\gamma}{e }\right)^{2\gamma} 4^{4d+2\gamma}}\min\left\{ \frac{2^{4(2d+\gamma)} 3 c \left(\frac{2\gamma}{e }\right)^{2\gamma}  }{1+2^{4(2d+\gamma)} 3 c \left(\frac{2\gamma}{e }\right)^{2\gamma} }, 1-\frac{3^d}{2^{6d-1}}\right\}<1.
\end{equation*} 
then:
\begin{enumerate}
\item[(A)]  $\sum_{j=0}^{\ell} \left\| \e_j V^{(j)}\right\|_{\alpha_j} \leq \frac{1}{4c}$;
\item[(B)] $\left\| \e_{j}V^{(j)}\right\|_{\alpha_{j}} \leq \left(\frac{\xi}{4c} \phi_{j} \right)^{2^{j}}$, for all $j=0, \dots, \ell+1$;
\item[(C)] $\left\|U^{(j+1)}-U^{(j)}\right\|_{\alpha_{j+1}+\sigma_{j+1}} \leq\left( \frac{\xi }{4c} \phi_\infty \right)^{2^{j}}$,  for all $j=0, \dots, \ell$;
\item[(D)]  $\left\|{U^{(j+1)}}^{-1}-{U^{(j)}}^{-1}\right\|_{\alpha_{j+1}} \leq\left( \frac{\xi }{4c} \phi_\infty \right)^{2^{j}}$,  for all $j=0, \dots, \ell$.
\end{enumerate}

The proof of these conditions is done by induction on $\ell$.
\subsubsection{Base case, $\ell=0$}

For $\ell=0$ we have $\varepsilon_0=\varepsilon$, $\alpha_0=\alpha$, $\sigma_0=\sigma/4$ and $\phi_0=1$.
We want to solve the equation
\begin{equation}\label{eqn:homeq0}
\left[T^{(1)}, W^{(0)}\right]+\e_0 V^{(0)}-\e_0 \left[V^{(0)}\right]=0,
\end{equation}
under the condition $[W^{(0)}-I]=0$. $T^{(1)}$ is a diagonal operator 
with eigenvalues 
$$
\lambda^{(1)}_n= \lambda_n+\e_0 V^{(0)}_{n,n}, \quad n \in \Z^d.
$$ 
By~(\ref{eqn:hypeps})  we have immediately that condition $(A)$ holds
$$
 \left\| \e_0 V^{(0)}\right\|_{\alpha_0} = \left\| \e V \right\|_{\alpha} \leq \frac{\xi}{4c}<  \frac{1}{4c}.
$$

Then, by Theorem~\ref{thm:propdiof} the spectrum of $T^{(1)}$ is purely point with simple eigenvalues such that
$$
\sup_{n \in \Z^d}\frac{1}{\left|\lambda^{(1)}_{n+k}-\lambda^{(1)}_n \right|}\leq 12 c^2 |k|^{2 \gamma}, \qquad \forall k \in \Z^d_0.
$$
and the homological equation~(\ref{eqn:homeq0}) admits a unique solution $W^{(0)} \in M^{\alpha_0-\sigma_0}$ and
\begin{eqnarray}
\left\| W^{(0)}-I \right\|_{\alpha_0-\sigma_0} &\leq & 12 c^2 \left( \frac{2 \gamma}{e \sigma_0}\right)^{2 \gamma}  \|\e_0V\|_{\alpha_0} \nonumber \\
&= & \sigma_0^{4d} \Phi(\sigma_0) \|\e_0V\|_{\alpha_0} \nonumber \\
& < & \sigma_0^{4d} \Phi(\sigma_0) \frac{\xi}{4c}  \label{eqn:W0Phi}\\
&\leq & \sigma_0^{4d} \phi_{\infty} \frac{\xi}{4c} \label{eqn:Phimineq} \\
& < & \sigma_0^{4d} \label{eqn:1disphiinf}  \\
& < & \frac{1}{2}\left( \frac{\sigma_0}{3}\right)^{d} \label{sigma0} \\
&<& \frac{1}{4} \nonumber
\end{eqnarray}
where ~(\ref{eqn:Phimineq}) and~(\ref{eqn:1disphiinf}) follow respectively from~(\ref{philphiinf}) and~(\ref{eqn:sistema}) of Lemma~\ref{prodinf}, while~(\ref{sigma0}) follows from the definition of $\sigma_0$.
Hence, by Lemma~\ref{lemma:estinv},  $W^{(0)}$ is invertible and we get
\begin{eqnarray}
\left\|{W^{(0)}}^{-1}-I\right\|_{\alpha_0-2\sigma_0} &\leq & \frac{\left\|W^{(0)}-I\right\|_{\alpha_0-\sigma_0}}{1- \left(\frac{3}{\sigma_0}\right)^d \left\|W^{(0)}-I\right\|_{\alpha_0-\sigma_0}} \nonumber \\
& \leq & 2 \left\|W^{(0)}-I\right\|_{\alpha_0-\sigma_0} \nonumber \\
& \leq & 2 \sigma_0^{4d} \Phi(\sigma_0) \frac{\xi}{4c} \nonumber \\
& \leq & 2 \sigma_0^{4d}   \phi_{\infty} \frac{\xi }{4c} \label{eqn:estW0-1} \\
& < & \frac{1}{2}, \nonumber
\end{eqnarray}
and this yields $\left\|{W^{(0)}}^{-1}\right\|_{\alpha_0-2\sigma_0} \leq 3/2$. Then, we define
$$
\e_1 V^{(1)}:= {W^{(0)}}^{-1}\left(\e_0 V^{(0)}-\e_0 [V^{(0)}]\right)\left(W^{(0)}-I\right)
$$
$$
U^{(1)}:=W^{(0)},
$$
whence
$$
{U^{(1)}}^{-1}(T^{(0)}+\e_0 V^{(0)})U^{(1)}=T^{(1)}+\e_1 V^{(1)}.
$$

Now we prove condition $(B)$. First of all we have that by~(\ref{eqn:hypeps}) 
$$
\left \|\e_0 V^{(0)} \right\|_{\alpha_0}=\left \|\e V \right\|_{\alpha} \leq \frac{\xi}{4c}=\frac{\xi \phi_0}{4c},
$$
because $\phi_0=1$. Then we estimate $ \left\|\e_1V^{(1)}\right\|_{\alpha_1} $ by using  Lemma~\ref{lemma:prodotto} twice:
\begin{eqnarray}
 \left\|\e_1V^{(1)}\right\|_{\alpha_1} &=&  \left\|{W^{(0)}}^{-1}\left(\e_0 V^{(0)}-\e_0 [V^{(0)}]\right)\left(W^{(0)}-I\right)\right\|_{\alpha_0-2\sigma_0} \nonumber \\
& \leq &  \left( \frac{3}{\sigma_0} \right)^{d}\left\| {W^{(0)}}^{-1}\right\|_{\alpha_0-2\sigma_0} \left\| \left(\e_0 V^{(0)}-\e_0 [V^{(0)}]\right)\left(W^{(0)}-I\right)\right\|_{\alpha_0-\sigma_0}  \nonumber \\
& \leq &  \left( \frac{3}{\sigma_0} \right)^{2d} \left\| {W^{(0)}}^{-1}\right\|_{\alpha_0-2\sigma_0} \left\| \e_0 V^{(0)}-\e_0 [V^{(0)}]\right\|_{\alpha_0} \left\| W^{(0)}-I\right\|_{\alpha_0-\sigma_0} \nonumber \\
& \leq &  \left( \frac{3}{\sigma_0} \right)^{2d} \left\| {W^{(0)}}^{-1}\right\|_{\alpha_0-2\sigma_0}  \left\| \e_0 V^{(0)}\right\|_{\alpha_0} \left\| W^{(0)}-I\right\|_{\alpha_0-\sigma_0} \label{eq:menodiag} \\
& \leq & \left( \frac{3}{\sigma_0} \right)^{2d} \left(\frac{3}{2}\right)  \left(\frac{ \xi  }{4c}\right) \left(\sigma_0^{4d} \Phi(\sigma_0) \frac{\xi }{4c}\right) \label{eqn:estV1} \\
& = & \frac{3}{2} (3\sigma_0)^{2d} \Phi(\sigma_0) \left( \frac{\xi }{4c} \right)^2 \nonumber \\
&=& \frac{3}{2} (3  \sigma_0)^{2d} \left( \frac{\xi }{4c} \phi_1 \right)^2 \nonumber \\
&=& \frac{3}{2}\left(\frac{3}{4} \sigma\right)^{2d} \left( \frac{\xi }{4c} \phi_1 \right)^2 \nonumber \\
&\leq & \left( \frac{\xi }{4c} \phi_1 \right)^2,
\end{eqnarray}
where in~\eqref{eq:menodiag} we used Lemma~\ref{lemma:ineqnorm} (iii), and in~(\ref{eqn:estV1}) we used~(\ref{eqn:W0Phi}). 

Finally, we prove conditions $(C), (D)$. First observe that $\alpha_1 +\sigma_1 = \alpha_0 -2 \sigma_0 +\sigma_1 \leq \alpha_0-\sigma_0$. Therefore, by Lemma~\ref{lemma:ineqnorm} (i),
\begin{eqnarray}
\left\|U^{(1)}-U^{(0)}\right\|_{\alpha_1+\sigma_1} & \leq & \left\|U^{(1)}-U^{(0)}\right\|_{\alpha_0-\sigma_0} \nonumber \\
& = &  \left\|W^{(0)}-I \right\|_{\alpha_0-\sigma_0} \nonumber \\
& \leq & \sigma_0^{4d} \Phi(\sigma_0) \frac{\xi }{4c} \label{eqn:estW01} \\
& \leq &  \sigma_0^{4d}  \frac{\xi }{4c} \phi_{\infty} \nonumber \\
&= & \left(\frac{\sigma}{2^2}\right)^{4d} \frac{\xi }{4c} \phi_{\infty} \nonumber \\
& < & \frac{\xi }{4c} \phi_{\infty}, \nonumber
\end{eqnarray}
where in~(\ref{eqn:estW01}) we used~(\ref{eqn:W0Phi}).
Analogously, we have that
\begin{eqnarray}
\left\|{U^{(1)}}^{-1}-U^{(0)-1}\right\|_{\alpha_1} & \leq & \left\|{W^{(0)}}^{-1}-I\right\|_{\alpha_0-2\sigma_0} \nonumber \\
& \leq &  2 \sigma_0^{4d}   \phi_{\infty} \frac{\xi }{4c} \label{eqn:estinv} \\
&= & 2 \left(\frac{\sigma}{2^2}\right)^{4d} \frac{\xi }{4c} \phi_{\infty} \nonumber \\
& < & \frac{\xi }{4c} \phi_{\infty}. \nonumber 
\end{eqnarray}
where in~(\ref{eqn:estinv}) we used~(\ref{eqn:estW0-1}). 

\subsubsection{Induction step, $\ell\geq1$}
Now we prove the induction step, i.e. we assume that conditions $(A), (B), (C), (D)$ hold for some $\ell -1$, with $\ell\geq 1$, and we prove that they hold also for $\ell$. 
We want to solve the equation
\begin{equation}\label{eqn:homeqell}
\left[T^{(\ell+1)}, W^{(\ell)}\right]+\e_\ell V^{(\ell)}-\e_\ell \left[V^{(\ell)}\right]=0,
\end{equation}
under the condition $\left[W^{(\ell)}-I\right]=0$. $T^{(\ell +1)}= T+ \sum_{j=0}^{\ell} \e_j \left[V^{(j)}\right]$ is a diagonal operator, 
with eigenvalues 
$$
\lambda^{(\ell+1)}_m=\lambda_m+ \sum_{j=0}^\ell  \e_j V^{(j)}_{m,m}, \quad m \in \Z^d.
$$

We prove condition $(A)$: by the validity of condition $(B)$ for $j=0,\dots,\ell$, we have that
\begin{eqnarray}
\sum_{j=0}^{\ell} \left\| \e_jV^{(j)} \right\|_{\alpha_j} & \leq & \sum_{j=0}^{\ell}  \left( \frac{\xi }{4c} \phi_j\right)^{2^j} \nonumber \\
& \leq &  \sum_{j=0}^{\ell} \left( \frac{\xi }{4c} \phi_\infty \right)^{2^j}\frac{1}{\Phi(\sigma_j)} \label{eqn:phijphinf} \\
&=& \frac{\sigma^{4d+2\gamma}}{12 c^2 \left( \frac{2 \gamma}{e}\right)^{2 \gamma} 2^{2(4d+2 \gamma)}} \sum_{j=0}^{\ell} \left( \frac{\xi }{4c} \phi_\infty \right)^{2^j}\frac{1}{2^{j(4d+2 \gamma)}} \nonumber \\
&\leq& \frac{\sigma^{4d+2 \gamma}}{12 c^2 \left( \frac{2\gamma}{e}\right)^{2 \gamma} 2^{2(4d+2  \gamma)}} \sum_{k=1}^{\infty} \left( \frac{\xi }{4c} \phi_\infty \right)^{k} \nonumber \\
& \leq & \frac{1}{ 12 c^2 \left( \frac{2\gamma}{e}\right)^{2\gamma}2^{2(4d+2\gamma)}} \frac{\frac{\xi }{4c} \phi_\infty }{1-\frac{\xi }{4c} \phi_\infty }\nonumber  \\
& \leq & \frac{1}{4c}, \label{coeffmin1}
\end{eqnarray}
where in~(\ref{eqn:phijphinf}) and~(\ref{coeffmin1})  we used respectively~(\ref{philphiinf}) and~(\ref{eqn:sistema}) of Lemma~\ref{prodinf}. 

Therefore, by Theorem~\ref{thm:propdiof} (ii), $T^{(\ell+1)}$ has purely point spectrum with simple eigenvalues $\lambda^{(\ell+1)}=\left( \lambda^{(\ell+1)}_n \right)_{n \in \Z^d}$ such that
$$
\left\|\left(\Theta_k \lambda^{(\ell+1)}-\lambda^{(\ell+1)} \right)^{-1}\right\|_{T} \leq 12c^2 |k|^{2\gamma} \qquad \forall k \in \Z^d_0,
$$
and the homological equation~(\ref{eqn:homeqell}) admits a unique solution $W^{(\ell)} \in M^{\alpha_\ell-\sigma_\ell}$, with
\begin{eqnarray}
\left\| W^{(\ell)}-I \right\|_{\alpha_\ell-\sigma_\ell} &<& 12 c^2 \left(\frac{2 \gamma }{e \sigma_\ell}\right)^{2 \gamma }  \left\|\e_\ell V^{(\ell)}\right \|_{\alpha_\ell} \nonumber \\
& = & \sigma_\ell^{4d} \Phi(\sigma_\ell) \left\| \e_\ell V^{(\ell)}\right\|_{\alpha_\ell} \nonumber\\
&\leq & \sigma_\ell^{4d} \Phi(\sigma_\ell) \left(\frac{\xi }{4c} \phi_\ell \right)^{2^\ell} \label{estV}\\
&\leq & \sigma_\ell^{4d} \left(\frac{\xi }{4c} \phi_\infty \right)^{2^\ell} \label{estPhiphi}\\
&\leq & \frac{1}{2} \left( \frac{\sigma_\ell}{3}\right)^d.   \label{estPhiphi1}\\
& < & \frac{1}{4} \label{estsigmaell}
\end{eqnarray}
where in~(\ref{estV}) we used  again the validity condition $(B)$ for $j=\ell$, in~(\ref{estPhiphi}) and~(\ref{estPhiphi1}) we used respectively~(\ref{philphiinf}) and~(\ref{eqn:sistema}) of Lemma~\ref{prodinf}, and in~(\ref{estsigmaell}) we used the definition of $\sigma_\ell$. 

Therefore, by Lemma~\ref{lemma:estinv}, $W^{(\ell)}$ is invertible and
\begin{eqnarray}
\left\|{W^{(\ell)}}^{-1}-I\right\|_{\alpha_\ell-2\sigma_\ell} &\leq & \frac{\left\|W^{(\ell)}-I\right\|_{\alpha_\ell-\sigma_\ell}}{1- \left(\frac{3}{\sigma_\ell}\right)^d \left\|W^{(\ell)}-I\right\|_{\alpha_\ell-\sigma_\ell}} \nonumber \\
& \leq & 2 \left\|W^{(\ell)}-I\right\|_{\alpha_\ell-\sigma_\ell} \nonumber \\
& \leq & 2 \sigma_\ell^{4d} \Phi(\sigma_\ell) \left(\frac{\xi }{4c} \phi_\ell \right)^{2^\ell} \label{estPhiphi2} \\
& \leq & \frac{1}{2}, \nonumber
\end{eqnarray}
where in~(\ref{estPhiphi2}) we used~(\ref{estV}).
Thus, $\left\|{W^{(\ell)}}^{-1}\right\|_{\alpha_\ell-2\sigma_\ell} \leq 3/2$. 

Now, we define
$$
\e_{\ell+1} V^{(\ell+1)}:= {W^{(\ell)}}^{-1}\left(\e_\ell V^{(\ell)}-\e_\ell \left[V^{(\ell)}\right]\right)\left(W^{(\ell)}-I\right),
$$
$$
U^{(\ell+1)}:=U^{(\ell)}W^{(\ell)}, \qquad {U^{(\ell+1)}}^{-1}:={W^{(\ell)}}^{-1}{U^{(\ell)}}^{-1},
$$
and we have that
$$
{U^{(\ell+1)}}^{-1}\left(T^{(0)}+\e_0 V^{(0)}\right)U^{(\ell+1)}=T^{(\ell+1)}+\e_{\ell+1} V^{(\ell+1)}.
$$

In order to prove condition $(B)$, the only quantity that  have to be estimated is $\left\|\e_{\ell+1}V^{(\ell+1)}\right\|_{\alpha_{\ell+1}}$. We use Lemma~\ref{lemma:prodotto} twice:
\begin{eqnarray}
\left\|\e_{\ell+1}V^{(\ell+1)}\right\|_{\alpha_{\ell+1}} &=&  \left\|{W^{(\ell)}}^{-1}\left(\e_\ell V^{(\ell)}-\e_\ell \left[V^{(\ell)}\right]\right)\left(W^{(\ell)}-I\right)\right\|_{\alpha_\ell-2\sigma_\ell} \nonumber \\
& \leq & \left( \frac{2}{\sigma_\ell} \right)^d \left\| {W^{(\ell)}}^{-1}(\e_\ell V^{(\ell)}-\e_\ell [V^{(\ell)}])\right\|_{\alpha_\ell-2\sigma_\ell}  \left\| W^{(\ell)}-I\right\|_{\alpha_\ell-\sigma_\ell} \nonumber \\
& \leq &  \left( \frac{2}{\sigma_\ell} \right)^{2d} \| {W^{(\ell)}}^{-1}\|_{\alpha_\ell-2\sigma_\ell} \| \e_\ell V^{(\ell)}-\e_\ell [V^{(\ell)}]\|_{\alpha_\ell-\sigma_\ell} \| W^{(\ell)}-I\|_{\alpha_\ell-\sigma_\ell} \nonumber \\
& \leq &  \left( \frac{2}{\sigma_\ell} \right)^{2d} \left\| {W^{(\ell)}}^{-1}\right\|_{\alpha_\ell-2\sigma_\ell} \left\| \e_\ell V^{(\ell)}-\e_\ell [V^{(\ell)}]\right\|_{\alpha_\ell} \left\| W^{(\ell)}-I\right\|_{\alpha_\ell-\sigma_\ell} \nonumber \\
& \leq &  \left( \frac{2}{\sigma_\ell} \right)^{2d} \left\| {W^{(\ell)}}^{-1}\right\|_{\alpha_\ell-2\sigma_\ell} \left\| \e_\ell V^{(\ell)}\right\|_{\alpha_\ell} \left\| W^{(\ell)}-I\right\|_{\alpha_\ell-\sigma_\ell} \nonumber \\
& \leq & \left( \frac{2}{\sigma_\ell} \right)^{2d} \left(\frac{3}{2}\right) \left(\frac{ \xi  }{4c} \phi_\ell \right)^{2^\ell} \left(\sigma_\ell^{4d} \Phi(\sigma_\ell)\left(\frac{ \xi  }{4c} \phi_\ell \right)^{2^\ell}\right) \nonumber \\
& = & \frac{3}{2} 2^{2d}   \sigma_\ell^{2d} \Phi(\sigma_\ell) \left(\frac{ \xi  }{4c} \phi_\ell \right)^{2^{\ell+1}}\nonumber \\
&=& \frac{3}{2} 2^{2d}   \sigma_\ell^{2d} \left(\frac{ \xi }{4c} \phi_{\ell+1} \right)^{2^{\ell+1}} \nonumber \\
&=& \frac{3}{2}\left(\frac{\sigma}{2^{\ell+1}}\right)^{2d}\left(\frac{ \xi }{4c} \phi_{\ell+1} \right)^{2^{\ell+1}} \nonumber \\
&\leq &\left(\frac{ \xi  }{4c} \phi_{\ell+1} \right)^{2^{\ell+1}}. \nonumber
\end{eqnarray}

Finally, we prove conditions $(C), (D)$. First of all  observe that the induction step of condition $(C)$ implies that
\begin{eqnarray}
\left\|U^{(\ell)}-I\right\|_{\alpha_\ell+\sigma_\ell} & \leq & \sum_{j=0}^{\ell-1} \left\|U^{(j+1)}-U^{(j)}\right\|_{\alpha_\ell+\sigma_\ell} \nonumber \\
& \leq & \sum_{j=0}^{\ell-1} \left\|U^{(j+1)}-U^{(j)}\right\|_{\alpha_{j+1}+\sigma_{j+1}} \nonumber \\
& \leq & \sum_{j=0}^{\ell-1} \left( \frac{\xi }{ 4c} \phi_{\infty}\right)^{2^{j}} \nonumber \\
& \leq & \sum_{i=1}^{\infty} \left( \frac{\xi }{ 4c} \phi_{\infty}\right)^{i} \nonumber \\
& = & \frac{\frac{\xi }{ 4c} \phi_{\infty}}{1-\frac{\xi }{ 4c} \phi_{\infty}}, \label{geomser1}
\end{eqnarray}
where in~(\ref{geomser1}) we used~(\ref{eqn:sistema}). Therefore,
\begin{equation}\label{eqn:estUelle}
\left\|U^{(\ell)}\right\|_{\alpha_\ell}\leq \left\|U^{(\ell)}\right\|_{\alpha_\ell+\sigma_\ell} \leq 1+\frac{\frac{\xi }{4c} \phi_{\infty}}{1-\frac{\xi }{4c} \phi_{\infty}}= \frac{1}{1-\frac{\xi }{4c} \phi_{\infty}}.
\end{equation}
We observe that $\alpha_{\ell+1}+\sigma_{\ell+1} \leq \alpha_\ell-\sigma_\ell$, whence
\begin{eqnarray}
\left\| U^{(\ell+1)}-U^{(\ell)}\right\|_{\alpha_{\ell+1}+\sigma_{\ell+1}} & = & \left\| U^{(\ell)}\left(W^{(\ell)}-I\right) \right\|_{\alpha_{\ell+1}+\sigma_{\ell+1}} \nonumber \\
& \leq & \left\| U^{(\ell)}\left(W^{(\ell)}-I\right) \right\|_{\alpha_\ell-\sigma_\ell} \nonumber \\
&\leq& \left( \frac{3}{\sigma_\ell}\right)^d \left\|U^{(\ell)}\right\|_{\alpha_\ell} \left\| W^{(\ell)}-I\right\|_{\alpha_\ell-\sigma_\ell}\nonumber\\
& \leq & \left( \frac{3}{\sigma_\ell}\right)^d \frac{1}{1-\frac{\xi }{4c}\phi_{\infty}} \sigma_\ell^{4d} \left(\frac{\xi }{4c} \phi_\infty \right)^{2^\ell} \label{stimaUelle}  \\
& = & \frac{3^{d}}{2^{3d(\ell+2)}} \frac{\sigma^{3d}}{1-\frac{\xi }{4c}\phi_{\infty}} \left(\frac{\xi }{4c} \phi_\infty \right)^{2^\ell} \nonumber \\
& \leq & \frac{3^d}{2^{6d}\left(1-\frac{\xi }{4c}\phi_{\infty} \right)}\left(\frac{\xi}{4c} \phi_\infty \right)^{2^\ell} \nonumber \\
& \leq & \left(\frac{\xi }{4c} \phi_\infty \right)^{2^\ell} , \label{coeffminore11}
\end{eqnarray}
where in~(\ref{stimaUelle}) we used~(\ref{eqn:estUelle}) and~\eqref{estPhiphi}, while in~(\ref{coeffminore11}) we used~(\ref{eqn:sistema}) of Lemma~\ref{prodinf}. 

Finally, we want to estimate $\left\| {U^{(\ell+1)}}^{-1}-{U^{(\ell)}}^{-1}\right\|_{\alpha_{\ell+1}}$. The induction step of condition $(D)$ implies that, analogously to~\eqref{geomser1},
$$
\left\|{U^{(\ell)}}^{-1}-I\right\|_{\alpha_\ell} \leq \frac{\frac{\xi}{ 4c} \phi_{\infty}}{1-\frac{\xi}{4c} \phi_{\infty}},
$$
whence
\begin{equation}\label{eqn:estUelle-1}
\left\|{U^{(\ell)}}^{-1}\right\|_{\alpha_\ell} \leq 1+\frac{\frac{\xi }{4c} \phi_{\infty}}{1-\frac{\xi }{4c} \phi_{\infty}}= \frac{1}{1-\frac{\xi }{4c} \phi_{\infty}}.
\end{equation}
Therefore,
\begin{eqnarray}
\left\| {U^{(\ell+1)}}^{-1}-{U^{(\ell)}}^{-1}\right\|_{\alpha_{\ell+1}} & = & \left\| \left({W^{(\ell)}}^{-1}-I\right) {U^{(\ell)}}^{-1}\right\|_{\alpha_{\ell}-2\sigma_\ell} \nonumber \\
&\leq& \left( \frac{3}{\sigma_\ell}\right)^d \left\| {W^{(\ell)}}^{-1}-I\right\|_{\alpha_\ell-2\sigma_\ell}  \left\|{U^{(\ell)}}^{-1}\right\|_{\alpha_\ell-\sigma_\ell} \nonumber \\
&\leq& \left( \frac{3}{\sigma_\ell}\right)^d  \left\| {W^{(\ell)}}^{-1}-I\right\|_{\alpha_\ell-2\sigma_\ell} \left\|{U^{(\ell)}}^{-1}\right\|_{\alpha_\ell}  \nonumber\\
& \leq & \left( \frac{3}{\sigma_\ell}\right)^d  2 \sigma_\ell^{4d} \left(\frac{\xi }{4c} \phi_\infty \right)^{2^\ell} \frac{1}{1-\frac{\xi }{4c}\phi_{\infty}} \label{stimaUelle-1} \\
& \leq & \frac{3^d}{2^{6d-1}\left(1-\frac{\xi }{4c}\phi_{\infty} \right)}\left(\frac{\xi }{4c} \phi_\infty \right)^{2^\ell} \nonumber \\
& \leq & \left(\frac{\xi}{4c} \phi_\infty \right)^{2^\ell}, \label{coeffminore12}
\end{eqnarray}
where in~(\ref{stimaUelle-1}) we used~(\ref{eqn:estUelle-1}) and~\eqref{estPhiphi2}, while in~(\ref{coeffminore12}) we used~(\ref{eqn:sistema}) of Lemma~\ref{prodinf}.

\subsection{Part $(2)$. Convergence of the KAM series} 
In this part of the proof we prove that the  KAM iterative scheme converges, i.e. that the sequence of transformations $\left( U^{(\ell)} \right)_{\ell \in \N}$ converges to an invertible operator $U \in {\mathcal L}\left(\ell^2(\Z^d)\right)$, and the sequence of operators $\left( T^{(\ell)} \right)_{\ell \in \N}$ strongly converges to an operator 
$$
T^{(\infty)}=T+ \sum_{j=0}^{+\infty} \e_j \left[V^{(j)}\right],
$$
so that, by passing to the limit $\ell \to + \infty$ in~(\ref{eqn:Ul1TUl}), one obtains
\begin{equation}\label{eqn:Uinf}
U^{-1}(T+\e V)U=T^{(\infty)}.
\end{equation}
We first prove that the operator sequences $\left(U^{(\ell)}\right)_{\ell \in \N}$ and $\left({U^{(\ell)}}^{-1}\right)_{\ell \in \N}$ are  Cauchy in the Banach space $M^{\alpha-\sigma}$. For all $p \in \N$ we have that
\begin{eqnarray}
\left\|U^{(\ell+p)}-U^{(\ell)}\right\|_{\alpha-\sigma}& \leq & \sum_{j=0}^{p-1} \left\|U^{(\ell+j+1)}-U^{(\ell+j)}\right\|_{\alpha-\sigma} \nonumber \\
& \leq & \sum_{j=0}^{p-1} \left\|U^{(\ell+j+1)}-U^{(\ell+j)}\right\|_{\alpha_{\ell+j+1}+\sigma_{\ell+j+1}} \nonumber \\
& \leq & \sum_{j=0}^{p-1} \left( \frac{\xi }{4c} \phi_\infty \right)^{2^{\ell+j}} \nonumber \\
& \leq &\frac{ \left( \frac{\xi }{4c} \phi_\infty \right)^{2^{\ell}} }{1- \left( \frac{\xi }{4c} \phi_\infty \right)^{2^{\ell}} } \to 0, \quad \text{as }\ell \to \infty. 
\end{eqnarray}
In the same way,  for all $p \in \N$:
\begin{eqnarray}
\left\|{U^{(\ell+p)}}^{-1}-{U^{(\ell)}}^{-1}\right\|_{\alpha-\sigma}& \leq & \sum_{j=0}^{p-1} \left\|{U^{(\ell+j+1)}}^{-1}-{U^{(\ell+j)}}^{-1}\right\|_{\alpha-\sigma} \nonumber \\
& \leq & \sum_{j=0}^{p-1} \left\|{U^{(\ell+j+1)}}^{-1}-{U^{(\ell+j)}}^{-1}\right\|_{\alpha_{\ell+j+1}} \nonumber \\
& \leq & \sum_{j=0}^{p-1} \left( \frac{\xi }{4c} \phi_\infty \right)^{2^{\ell+j}} \nonumber \\
& \leq &\frac{ \left( \frac{\xi }{4c} \phi_\infty \right)^{2^{\ell}} }{1- \left( \frac{\xi }{4c} \phi_\infty \right)^{2^{\ell}} } \to 0, \quad \ell \to \infty.  
\end{eqnarray}
Since $M^{\alpha-\sigma}$ is a Banach space, there exists $U \in M^{\alpha-\sigma}$ invertible such that
$$
\lim_{\ell \to \infty} \left\|U^{(\ell)}-U\right\|_{\alpha -\sigma}=\lim_{\ell \to \infty} \left\|{U^{(\ell)}}^{-1}-U^{-1}\right\|_{\alpha -\sigma}=0.
$$
Moreover, the series
\begin{equation}\label{eqn:estserie}
 \sum_{j=0}^{+\infty} \e_j \left[V^{(j)} \right]
\end{equation}
is convergent in $M^{\infty}$. Indeed,
\begin{eqnarray}
\left\| \sum_{j=0}^{+\infty} \e_j \left[V^{(j)} \right] \right\|_{\infty}   & \leq &   \sum_{j=0}^{+\infty} \left\|\e_j V^{(j)}\right\|_{\alpha_j}  \nonumber \\
& \leq & \sum_{j=0}^{+\infty}  \left( \frac{\xi }{4c} \phi_j\right)^{2^j} \nonumber \\
& \leq &  \sum_{j=0}^{+\infty} \left( \frac{\xi }{4c} \phi_\infty \right)^{2^j}\frac{1}{\Phi(\sigma_j)} \label{eqn:phijphinf1} \\
&=&  \frac{\sigma^{4d+2\gamma}}{12 c^2 \left( \frac{2 \gamma}{e}\right)^{2 \gamma} 2^{2(4d+2 \gamma)}} \sum_{j=0}^{+\infty} \left( \frac{\xi }{4c} \phi_\infty \right)^{2^j}\frac{1}{2^{j(4d+2 \gamma)}} \nonumber \\
&\leq& \frac{\sigma^{4d+2 \gamma}}{12 c^2 \left( \frac{2\gamma}{e}\right)^{2 \gamma} 2^{2(4d+2  \gamma)}} \sum_{k=1}^{\infty} \left( \frac{\xi }{4c} \phi_\infty \right)^{k} \nonumber \\
& \leq & \frac{1}{ 12 c^2 \left( \frac{2\gamma}{e}\right)^{2\gamma} 2^{2(4d+2\gamma)}} \frac{\frac{\xi }{4c} \phi_\infty }{1-\frac{\xi }{4c} \phi_\infty } \label{coeffmin12} \label{eqn:finalestnorminf}\nonumber \\
&\leq & \frac{1}{4c}     \label{estnormintorno}                                    
\end{eqnarray}
where in~(\ref{eqn:phijphinf1}) and~(\ref{estnormintorno})  we used respectively~(\ref{philphiinf}) and~(\ref{eqn:sistema}) of Lemma~\ref{prodinf}. 

Therefore, by Lemma~\ref{lemma:ineqnorm} (v), the series in~(\ref{eqn:estserie}) converges in norm. Since for all $\ell \in \N$: 
\begin{equation}\label{eqn:approxl}
{U^{(\ell)}}^{-1}(T+ \e V)U^{(\ell)}=T+ \sum_{j=0}^{\ell}\e_j \left[V^{(j)}\right],
\end{equation}
and since the convergence in $M^{\alpha-\sigma}$ and $M^{\infty}$  implies the strong convergence of operators, we have immediately, passing to the limit $\ell \to +\infty$ in~(\ref{eqn:approxl}), that
\begin{eqnarray}
\label{Tinfinito}
U^{-1}(T+ \e V)U=T^{(\infty)}.
\end{eqnarray}
Hence the operators $T+\e V$ and $T^{(\infty)}$ are isospectral.

Notice that by~(\ref{estnormintorno})
\begin{equation}
\left\|\sum_{j=0}^{\infty}\e_j \left[V^{(j)}\right]\right\|_{\infty} = \left\|\sum_{j=0}^{\infty}\e_j V^{(j)}_{0}\right\|_{T} \leq \frac{1}{4c}.                                                                  
\end{equation}
Thus, by  Theorem~\ref{thm:propdiof}, the operator  $T^{(\infty)}$ has a purely point spectrum with simple eigenvalues $\left( \lambda_n(\e) \right)_{n \in \Z^d}$, $\lambda_n(\e)=\lambda_n+\sum_{j=0}^{+\infty}\e_j V^{(j)}_{n,n}$, such that for all $k \in \Z^d$, $k \neq 0$:
\begin{equation}\label{eq:dioflimit}
\sup_{n \in \Z^d} \frac{1}{\left| \lambda_{n+k}(\e)-\lambda_n(\e)  \right|} \leq 12c^2 |k|^{2\gamma},
\end{equation}
that is~(\ref{Diofanto3}). 

\subsection{Part $(3)$. Unitarity}
Now we prove that if $V=V^*$ then for all $n,m \in \Z^d$:
$$
\langle e_n, U^* U e_m \rangle= \delta_{n,m} \|U e_n\|^2.  
$$
In fact, since $T+ \e V$ is self-adjoint its eigenvalues are real, hence for all $n,m \in \Z^d$:
\begin{eqnarray}
\lambda_n(\e) \langle U e_n, U e_m \rangle & = &  \langle \lambda_n(\e) U e_n, U e_m \rangle \nonumber \\
& = &  \langle (T+ \e V) U e_n, U e_m \rangle \nonumber \\
& = &  \langle  U e_n, (T+ \e V)^*U e_m \rangle \nonumber \\
& = &  \langle  U e_n, (T+ \e V)U e_m \rangle \nonumber \\
& = &  \langle  U e_n, \lambda_m(\e)U e_m \rangle \nonumber \\
& = &  \lambda_m(\e) \langle  U e_n, U e_m \rangle, \nonumber
\end{eqnarray}
hence $\left(\lambda_n(\e)-\lambda_m(\e) \right)\langle U e_n, U e_m \rangle=0$. But since $\lambda_n(\e)\neq \lambda_m(\e)$ if $n \neq m$, we obtain $\langle U e_n, U e_m \rangle=\langle e_n, U^* U e_m \rangle=\|U e_n\|^2 \delta_{n,m}$. Therefore $U^*U$ is diagonal 
 with eigenvalues $\|U e_n\|^2 >0$, $n \in \Z^d$.
The  operator 
$$
\tilde{U}:=U (U^{*}U)^{-1/2}
$$
is thus unitary, and  since $T^{(\infty)}$ commutes with $(U^*U)^{1/2} $ we get
\begin{eqnarray*}
\tilde{U}^{-1}(T+ \e V)\tilde{U}&=&(U^*U)^{1/2}U^{-1}(T+ \e V)U (U^*U)^{-1/2} \\                                                                &=&(U^*U)^{1/2} T^{(\infty)}(U^*U)^{-1/2} \\                                                                &=&T^{(\infty)}.
\end{eqnarray*}
Therefore, the operator $T + \e V$ is unitarily equivalent to $T^{(\infty)}$. Moreover, the orthonormal eigenvectors of $T+\e V$ are 
$$
u_n(\e):= \frac{U e_n}{\|U e_n\|}, \qquad n \in \Z^d.
$$

Moreover, since $U \in M^{\alpha- \sigma}$, we have that for all $n, j \in \Z^d$:
$$
|\langle e_j, u_n(\e)\rangle|=\frac{|U_{j,n}|}{\|U e_n\|} \leq \frac{\| U\|_{\alpha-\sigma}}{\|U e_n\|} e^{-(\alpha-\sigma)|j-n|},
$$
that is~(\ref{eqn:exploc}) with
\begin{equation}\label{eqn:Cn}
C_n=\frac{\| U\|_{\alpha-\sigma}}{\|U e_n\|}.
\end{equation}

This concludes the proof of Theorem~\ref{thm:mainth2}.

\section*{Acknowledgments}
We would like to express our deep gratitude to Sandro Graffi for valuable and constructive comments, enthusiastic encouragement, and useful discussions during the development of this research work.
This work was partially supported by Istituto Nazionale di Fisica Nucleare (INFN), Italy through the project ``QUANTUM'' and the Italian National Group of Mathematical Physics (GNFM-INdAM).

\appendix

\section{Technical Lemmas}\label{sect:lemmas}
In this appendix there are some technical lemmas useful to prove Theorem~\ref{thm:mainth2}. Lemma~\ref{lemma:prodotto},~\ref{lemma:estinv} are essentially the ones  proved in~\cite{ref:Poeschel} but here the results are adapted to our definitions and notations. We decided to insert the proof here for the sake of completeness. 
\begin{lemma}\label{lemma:prodotto}
Let $\alpha >0$ and let $0< \delta < \alpha$. If $X \in M^{\alpha-\delta}$ and $Y \in M^{\alpha}$, then $X Y \in M^{\alpha-\delta}$ and
\begin{equation} \label{eqn:estprod}
\| X Y \|_{\alpha-\delta} \leq q(\delta)\| X\|_{\alpha-\delta} \|Y\|_{\alpha},
\end{equation}
where 
$$
q(\delta)= \sum_{k \in \Z^d}   e^{-\delta |k|}=  \left( \frac{1+ e^{-\delta}}{1- e^{-\delta}} \right)^d. 
$$
Moreover if $\delta <1 $ then $q(\delta)<\left( \frac{3}{\delta}\right)^d$.
\end{lemma}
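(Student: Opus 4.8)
The plan is to reduce the claim to a computation with the main diagonals of the two operators. First I would express, for a fixed $k\in\Z^d$, the $k$-th diagonal of the product $XY$: since $X$ and $Y$ are bounded, for every $n\in\Z^d$ the series $(XY)_{n,n+k}=\langle X^*e_n,Ye_{n+k}\rangle=\sum_{j\in\Z^d}X_{n,j}Y_{j,n+k}$ is absolutely convergent (it is the $\ell^2$-pairing of $X^*e_n$ with $Ye_{n+k}$), and the substitution $j=n+p$ gives $(XY)_{n,n+k}=\sum_{p\in\Z^d}(X_p)_n\,(\Theta_p Y_{k-p})_n$. Thus, as a sequence,
\[
(XY)_k=\sum_{p\in\Z^d}X_p\cdot(\Theta_p Y_{k-p}),
\]
the dot denoting the pointwise product in $\mathcal{M}_T$.

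Next I would estimate the $\|\cdot\|_T$-norm of this diagonal. By the Banach-algebra and shift-invariance properties collected in Lemma~\ref{lemma:contrnorm} (items $(ii)$--$(iv)$), $\|X_p\cdot(\Theta_p Y_{k-p})\|_T\le\|X_p\|_T\,\|\Theta_p Y_{k-p}\|_T=\|X_p\|_T\,\|Y_{k-p}\|_T$, and, by the very definitions of $\|\cdot\|_{\alpha-\delta}$ and $\|\cdot\|_\alpha$ (Definition~\ref{def:Malpha}), $\|X_p\|_T\le e^{-(\alpha-\delta)|p|}\|X\|_{\alpha-\delta}$ and $\|Y_{k-p}\|_T\le e^{-\alpha|k-p|}\|Y\|_\alpha$; in particular the series above converges absolutely in $\mathcal{M}_T$. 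Summing by the triangle inequality,
\[
\|(XY)_k\|_T\le\|X\|_{\alpha-\delta}\|Y\|_\alpha\sum_{p\in\Z^d}e^{-(\alpha-\delta)|p|}\,e^{-\alpha|k-p|}.
\]
The key elementary step is to pull $e^{-(\alpha-\delta)|k|}$ out of the exponential sum: writing $e^{-\alpha|k-p|}=e^{-(\alpha-\delta)|k-p|}e^{-\delta|k-p|}$ and using $|p|+|k-p|\ge|k|$ together with $\alpha-\delta>0$ yields $e^{-(\alpha-\delta)|p|}e^{-(\alpha-\delta)|k-p|}\le e^{-(\alpha-\delta)|k|}$, hence $\sum_p e^{-(\alpha-\delta)|p|}e^{-\alpha|k-p|}\le e^{-(\alpha-\delta)|k|}\sum_p e^{-\delta|k-p|}=e^{-(\alpha-\delta)|k|}q(\delta)$ after the reindexing $p\mapsto k-p$, with $q(\delta)=\sum_{k\in\Z^d}e^{-\delta|k|}=\bigl(\tfrac{1+e^{-\delta}}{1-e^{-\delta}}\bigr)^d$ by~\eqref{eq:sumk}. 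Multiplying by $e^{(\alpha-\delta)|k|}$ and taking the supremum over $k\in\Z^d$ gives $\|XY\|_{\alpha-\delta}\le q(\delta)\|X\|_{\alpha-\delta}\|Y\|_\alpha<+\infty$; since $XY$ is a bounded operator, this is precisely the statement $XY\in M^{\alpha-\delta}$ together with~\eqref{eqn:estprod}.

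Finally, for the bound $q(\delta)<(3/\delta)^d$ when $0<\delta<1$, it suffices to prove the scalar inequality $\tfrac{1+e^{-\delta}}{1-e^{-\delta}}<3/\delta$, i.e.\ $\tfrac{\delta}{2}\coth\!\bigl(\tfrac{\delta}{2}\bigr)<\tfrac32$; since $x\mapsto x\coth x$ is increasing on $(0,+\infty)$ (its derivative equals $\bigl(\tfrac12\sinh 2x-x\bigr)/\sinh^2 x>0$), it is enough to check this at $x=\tfrac12$, where $\tfrac12\coth\tfrac12<\tfrac32$ holds by a direct numerical estimate. I do not expect a genuine obstacle in this lemma; the one point requiring a little care is justifying the interchange of summations in the diagonal formula for $(XY)_k$, which is legitimate precisely because the double series is absolutely convergent, as the norm estimate above makes explicit.
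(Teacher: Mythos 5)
Your proof is correct and follows essentially the same route as the paper: decompose $(XY)_k=\sum_p X_p\,\Theta_pY_{k-p}$, apply the Banach-algebra and shift-invariance properties of $\|\cdot\|_T$, and extract $e^{-(\alpha-\delta)|k|}$ from the convolution sum via the triangle inequality $|p|+|k-p|\ge|k|$. The only difference is cosmetic (the paper normalizes $\|X\|_{\alpha-\delta}=\|Y\|_\alpha=1$ and reindexes the sum, while you keep the norms explicit), and you additionally supply the verification of $q(\delta)<(3/\delta)^d$, which the paper merely asserts.
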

\begin{proof}
It suffices to prove the assertion when $\| X\|_{\alpha-\delta}=\| Y\|_{\alpha}=1$. We observe that for all $m,k \in \Z^d$:
\begin{equation}
(X Y)_{m,m+k}= \sum_{l \in \Z^d} X_{m,m+l} Y_{m+l,m+k}=  \sum_{l \in \Z^d} (X_{l})_m (\Theta_l Y_{k-l})_m,
\end{equation}
so  that 
$$
(XY)_k=\sum_{l \in \Z^d} X_{l} \,\Theta_l Y_{k-l}, \quad \forall k \in \Z^d.
$$
Thus,
\begin{eqnarray}
\| (X Y)_k \|_{T} &\leq & \sum_{l \in \Z^d}\| X_{l} \,\Theta_l Y_{k-l}\|_{T} \nonumber \\
& \leq & \sum_{l \in \Z^d}\| X_{l}\|_{T} \| \Theta_l Y_{k-l}\|_{T}  \nonumber \\
& = & \sum_{l \in \Z^d}\| X_{l}\|_{T} \| Y_{k-l}\|_{T} \label{eqn:contrnorm} \\
& \leq & \sum_{l \in \Z^d} e^{-(\alpha-\delta)|l|}e^{-\alpha|k-l|} \nonumber \\
&=& \sum_{l \in \Z^d} e^{-(\alpha-\delta)|k-l|}e^{-\alpha|l|} \nonumber \\
& \leq & \sum_{l \in \Z^d} e^{-(\alpha-\delta)(|k|-|l|)}e^{-\alpha|l|} \nonumber \\
& = & e^{-(\alpha-\delta)|k|}  \sum_{l \in \Z^d} e^{-\delta|l|} \nonumber \\
& = & e^{-(\alpha-\delta)|k|} q(\delta), \nonumber
\end{eqnarray}
where in~(\ref{eqn:contrnorm}) we used Lemma~\ref{lemma:contrnorm} (iii). Therefore,
$$
\| X Y \|_{\alpha-\delta} = \sup_{k \in \Z^d} \| (X Y)_k \|_{T}\, e^{(\alpha-\delta)|k|} \leq q(\delta).
$$
The explicit expression of $q(\alpha)$ is obtained by using~\eqref{eq:sumk}, and it is easily shown to be bounded by $(3/\delta)^d$ for $\delta<1$.
\end{proof}

\begin{lemma} \label{lemma:estinv}
Let $0 < \delta <\alpha $, $\delta \leq 1$ and $X \in M^{\alpha}$. If $\| X-I\|_{\alpha} < \left( \frac{\delta}{3}\right)^d$, then  $X$ is invertible in $M^{\alpha - \delta}$ and
$$
\left\| X^{-1}-I \right\|_{\alpha-\delta} < \frac{\| X-I\|_{\alpha} }{1-\left( \frac{3}{\delta}\right)^d\| X-I\|_{\alpha} }.
$$
\end{lemma}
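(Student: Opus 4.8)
The plan is to build the inverse explicitly as a Neumann series and to bound it in $M^{\alpha-\delta}$ by telescoping the mixed submultiplicative estimate of Lemma~\ref{lemma:prodotto}. Set $Y:=X-I$. Since $I\in M^\infty\subset M^\alpha$ with $\|I\|_\alpha=\|{\bf 1}\|_{T}=1$ (Lemma~\ref{lemma:ineqnorm}(ii) and Lemma~\ref{lemma:contrnorm}(i)), we have $Y\in M^\alpha$, and the hypothesis reads $\|Y\|_\alpha<(\delta/3)^d$. Since $\delta\leq 1$, Lemma~\ref{lemma:prodotto} gives $q(\delta)<(3/\delta)^d$, so $\theta:=q(\delta)\,\|Y\|_\alpha<1$; this is the parameter controlling everything.

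First I would estimate the powers $Y^n$ in $\|\cdot\|_{\alpha-\delta}$. Writing $Y^n=Y^{n-1}\cdot Y$ and applying Lemma~\ref{lemma:prodotto} with the left factor in $M^{\alpha-\delta}$ and the right factor $Y$ kept at full weight in $M^{\alpha}$, then iterating down to $Y^1=Y$ and using $\|Y\|_{\alpha-\delta}\leq\|Y\|_\alpha$ (Lemma~\ref{lemma:ineqnorm}(i)), one obtains $\|Y^n\|_{\alpha-\delta}\leq q(\delta)^{n-1}\|Y\|_\alpha^{\,n}$ for every $n\geq1$. The crucial point is that one loses the exponential weight $\delta$ only \emph{once}, not once per factor. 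Summing, $\sum_{n\geq1}\|Y^n\|_{\alpha-\delta}\leq\|Y\|_\alpha\sum_{n\geq0}\theta^{\,n}=\|Y\|_\alpha/(1-\theta)<\infty$, so the series $Z:=\sum_{n\geq1}(-1)^nY^n$ converges absolutely in the Banach space $M^{\alpha-\delta}$ and satisfies $\|Z\|_{\alpha-\delta}\leq\|Y\|_\alpha/(1-q(\delta)\|Y\|_\alpha)<\|X-I\|_\alpha/(1-(3/\delta)^d\|X-I\|_\alpha)$, the last inequality because $q(\delta)<(3/\delta)^d$.

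It then remains to identify $I+Z$ with $X^{-1}$. Because $M^{\alpha-\delta}$ is not an algebra this cannot be read off a Banach-algebra Neumann series; instead I would argue at the level of $\mathcal{L}(\ell^2(\Z^d))$. For the partial sums $Z_N:=\sum_{n=1}^N(-1)^nY^n$ the telescoping identities $X(I+Z_N)=(I+Z_N)X=I+(-1)^NY^{N+1}$ hold as identities of bounded operators. Since $\|Y^{N+1}\|_{\alpha-\delta}\leq q(\delta)^N\|Y\|_\alpha^{N+1}\to0$, Lemma~\ref{lemma:ineqnorm}(v) gives $Y^{N+1}\to0$ and $Z_N\to Z$ in operator norm; letting $N\to\infty$ and using continuity of multiplication in $\mathcal{L}(\ell^2(\Z^d))$ yields $X(I+Z)=(I+Z)X=I$. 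Hence $X$ is invertible with $X^{-1}=I+Z\in M^{\alpha-\delta}$ and the stated norm bound (the degenerate case $X=I$ being trivial).

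The only genuinely delicate point is the one already flagged: one must avoid the naive estimate that would eat a fixed slice of the exponential weight at every multiplication, which would push the powers $Y^n$ out of any fixed space $M^{\beta}$. The fix is precisely the mixed bound $\|XY\|_{\alpha-\delta}\leq q(\delta)\|X\|_{\alpha-\delta}\|Y\|_{\alpha}$ of Lemma~\ref{lemma:prodotto}, applied always with $Y$ in the full-weight slot; the rest is geometric-series bookkeeping together with the elementary inequality $q(\delta)<(3/\delta)^d$ for $\delta\leq 1$.
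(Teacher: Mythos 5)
Your argument is correct and follows essentially the same route as the paper: expand $X^{-1}$ as a Neumann series in $Y=X-I$, iterate the mixed bound of Lemma~\ref{lemma:prodotto} (consuming the $\delta$--slice of analyticity only once, so $\|Y^n\|_{\alpha-\delta}\leq q(\delta)^{n-1}\|Y\|_\alpha^n$), and sum the geometric series using $q(\delta)<(3/\delta)^d$. Your added care in justifying the identification $X^{-1}=I+Z$ by passing to operator-norm convergence in $\mathcal{L}(\ell^2(\Z^d))$ is a nice touch the paper leaves tacit (``thus formally''), but it does not change the substance of the proof.
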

\begin{proof}
We write the Neumann series of $X^{-1}$:
$$
X^{-1}=(I-(I-X))^{-1}= \sum_{l =0}^{+\infty}(I-X)^l,
$$
thus formally
$$
\left\|X^{-1}-I\right\|_{\alpha-\delta}\leq \sum_{l =1}^{+\infty}\|(I-X)^l\|_{\alpha-\delta}.
$$
By Lemma~\ref{lemma:prodotto}, we have 
$$
\left\|(I-X)^2 \right\|_{\alpha-\delta}=\|(I-X) (I-X)\|_{\alpha-\delta}\leq q(\delta)\| I-X \|_{\alpha-\delta}\|I-X\|_{\alpha}< \left( \frac{3}{\delta}\right)^{d}\|I-X\|_{\alpha}^2
$$
and in general $\left\|(I-X)^l \right\|_{\alpha-\delta} < \left( \frac{3}{\delta}\right)^{d(l-1)}\|I-X\|_{\alpha}^l$. Therefore it results that
$$
\left\|X^{-1}-I\right\|_{\alpha-\delta} \leq \sum_{l=1}^{+\infty}\left\|(I-X)^l \right\|_{\alpha-\delta} < \frac{\| I-X\|_{\alpha} }{1-\left( \frac{3}{\delta}\right)^d \| I-X\|_{\alpha} }.
$$
\end{proof}

\begin{lemma} \label{prodinf}
Let $d \in\N^*$, $c \geq 1$, $\gamma >0$, $ 0< \sigma \leq 1$, and $(\sigma_\nu)_{\nu \in \N}$ with $\sigma_\nu=\sigma/2^{\nu+2}$. Define $\phi_0=1$, and for all $\ell \geq 1$
$$
\phi_\ell=\prod_{\nu=0}^{\ell-1}\Phi(\sigma_{\nu})^{\frac{1}{2^{\nu+1}}},
$$
where
$$
\Phi(x)=12 c^2 \left(\frac{2 \gamma}{e }\right)^{2 \gamma} x^{-4d-2 \gamma}, \qquad x>0.
$$
Then
\begin{enumerate}
\item
\begin{equation} \label{infprodphi}
\phi_\infty:=\lim_{\ell \to \infty} \phi_\ell = 12 c^2 \left( \frac{2 \gamma}{e}\right)^{2 \gamma} \left(\frac{4}{\sigma}\right)^{4d+2 \gamma};
\end{equation}
\item  For all $\ell\in \N$
\begin{equation}\label{philphiinf}
\Phi(\sigma_\ell) \phi_\ell^{2^\ell} \leq \phi_\infty^{2^\ell};
\end{equation}
\item The largest solution $\xi$ of the following system
\begin{equation}\label{eqn:sistema}
\begin{cases}
\frac{\xi }{4c} \phi_\infty < 1 \\
\frac{1}{2^{2(4d+2 \gamma)} 12 c^2 \left( \frac{2 \gamma}{e}\right)^{2\gamma}} \frac{\frac{\xi }{4c} \phi_\infty }{1-\frac{\xi }{4c} \phi_\infty } \leq \frac{1}{4c}  \\
\frac{3^d}{2^{6d-1}\left(1-\frac{\xi }{4c}\phi_{\infty} \right)} \leq 1 
\end{cases}
\end{equation}
is
\begin{equation}\label{defn:xi}
\xi= \frac{  \sigma^{4d+2\gamma}}{ 3c \left(\frac{2\gamma}{e }\right)^{2\gamma} 4^{4d+2\gamma}}\min\left\{ \frac{2^{4(2d+\gamma)} 3 c \left(\frac{2\gamma}{e }\right)^{2\gamma}  }{1+2^{4(2d+\gamma)} 3 c \left(\frac{2\gamma}{e }\right)^{2\gamma} }, 1-\frac{3^d}{2^{6d-1}}\right\}<1.
\end{equation}
\end{enumerate}
\end{lemma}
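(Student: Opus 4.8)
The three assertions of the lemma are mutually independent and entirely elementary; the plan is to dispatch them in turn, the only real work being the bookkeeping of explicit constants. For~\eqref{infprodphi} I would pass to logarithms: writing $K:=12c^2(2\gamma/e)^{2\gamma}$ and $N:=4d+2\gamma$, one has $\log\Phi(\sigma_\nu)=\log K-N\log\sigma_\nu$, and since $\log\sigma_\nu$ is affine in $\nu$, the series $\log\phi_\ell=\sum_{\nu=0}^{\ell-1}2^{-(\nu+1)}\log\Phi(\sigma_\nu)$ splits into a geometric and an arithmetico-geometric part, both summed in closed form from $\sum_{\nu\ge0}2^{-(\nu+1)}=1$ and $\sum_{k\ge1}k\,2^{-k}=2$; letting $\ell\to\infty$ and exponentiating yields~\eqref{infprodphi}. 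For~\eqref{philphiinf} only the monotonicity of $\Phi$ is needed: since $\Phi$ is strictly decreasing and $(\sigma_\nu)$ is decreasing, $\Phi(\sigma_\ell)\le\Phi(\sigma_\nu)$ for all $\nu\ge\ell$, whence
\[
\left(\frac{\phi_\infty}{\phi_\ell}\right)^{2^\ell}=\prod_{\nu=\ell}^{\infty}\Phi(\sigma_\nu)^{2^{-(\nu-\ell+1)}}\ \ge\ \prod_{\nu=\ell}^{\infty}\Phi(\sigma_\ell)^{2^{-(\nu-\ell+1)}}=\Phi(\sigma_\ell),
\]
using $\sum_{\nu\ge\ell}2^{-(\nu-\ell+1)}=1$; multiplying by $\phi_\ell^{2^\ell}$ gives~\eqref{philphiinf}.

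For the last assertion I would substitute $t:=\frac{\xi}{4c}\phi_\infty$ and reinterpret the three conditions in~\eqref{eqn:sistema} as constraints on the single scalar $t$. The first is $t<1$. Putting $B:=3c\,(2\gamma/e)^{2\gamma}\,2^{4(2d+\gamma)}$, the second condition is $\frac{t}{1-t}\le B$, equivalently $t\le\frac{B}{1+B}$; the third is $1-t\ge\frac{3^d}{2^{6d-1}}$, equivalently $t\le1-\frac{3^d}{2^{6d-1}}$, where the right-hand side is positive for every $d\ge1$ since $2\cdot3^d<64^d$. Both bounds are $<1$, so they absorb the first condition; as $t\mapsto t/(1-t)$ is increasing on $(-\infty,1)$, the feasible set is $\{t\le t^\ast\}$ with $t^\ast=\min\{\frac{B}{1+B},\,1-\frac{3^d}{2^{6d-1}}\}$, and the largest admissible value is $\xi=\frac{4c}{\phi_\infty}\,t^\ast$. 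Inserting $\phi_\infty$ from part (i), so that $\frac{4c}{\phi_\infty}=\frac{\sigma^{4d+2\gamma}}{3c\,(2\gamma/e)^{2\gamma}\,4^{4d+2\gamma}}$, reproduces exactly~\eqref{defn:xi}; and $\xi<1$ because $\sigma\le1$, while $3c\,(2\gamma/e)^{2\gamma}\,4^{4d+2\gamma}>1$ (using $c\ge1$, $4^{4d+2\gamma}\ge256$, and $(2\gamma/e)^{2\gamma}\ge e^{-1}$, the minimum over $\gamma>0$ being attained at $\gamma=1/2$), and the $\min$ factor is itself $<1$.

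The main obstacle is not conceptual but clerical: one has to keep the many explicit constants straight, and in the last part verify that the two upper bounds on $t$ coming from the second and third conditions are mutually consistent — so that the system is genuinely solvable — and that the resulting $\xi$ indeed lies in $(0,1)$ as asserted. The closed-form summation behind~\eqref{infprodphi} is the only other computation, and it is routine.
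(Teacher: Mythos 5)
Your strategy coincides with the paper's own proof in all three parts: for (i) the paper performs the same closed-form summation of the exponents, working multiplicatively with $\sum_{\nu\ge0}2^{-(\nu+1)}=1$ and $\sum_{\nu\ge0}\nu\,2^{-(\nu+1)}=1$ where you take logarithms; for (ii) your argument via the monotonicity of $\Phi$ and the normalized tail product $\prod_{\nu\ge\ell}\Phi(\sigma_\nu)^{2^{-(\nu-\ell+1)}}$ is exactly the paper's; for (iii) the paper likewise reduces the system to the single scalar $t=\xi\phi_\infty/(4c)$ and takes the minimum of the two resulting upper bounds, and your checks that both bounds lie in $(0,1)$ and that $\xi<1$ are, if anything, more explicit than the paper's.

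The one step you cannot wave through is the final number in part (i). Writing $N=4d+2\gamma$ and $K=12c^2(2\gamma/e)^{2\gamma}$, one has $\Phi(\sigma_\nu)=K\sigma^{-N}2^{N(\nu+2)}$, hence
\begin{equation*}
\log_2 \prod_{\nu\ge0}\Phi(\sigma_\nu)^{2^{-(\nu+1)}} \;=\; \log_2\!\left(K\sigma^{-N}\right) + N\sum_{\nu\ge0}\frac{\nu+2}{2^{\nu+1}} \;=\; \log_2\!\left(K\sigma^{-N}\right)+3N ,
\end{equation*}
so the limit is $K(8/\sigma)^{N}$, not $K(4/\sigma)^{N}$ as claimed in \eqref{infprodphi}. (The paper's own proof contains the same slip: in the last equality of its display the factor $\prod_{\nu}2^{N\nu/2^{\nu+1}}=2^{N}$ is silently dropped.) Your sketch, by asserting that ``exponentiating yields \eqref{infprodphi}'' without exhibiting the final bookkeeping, inherits this discrepancy: carried out carefully, your computation proves a different identity. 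The issue is harmless for the architecture of the lemma and of the KAM iteration --- one replaces $4$ by $8$ in $\phi_\infty$ and correspondingly $4^{4d+2\gamma}$ by $8^{4d+2\gamma}$ in \eqref{defn:xi}, which only shrinks the admissible $\xi$ and hence $\e^*$ --- but as written part (i) of your proof does not close on the stated formula.
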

\begin{proof}
First we prove $(\ref{infprodphi})$:
\begin{eqnarray}
\prod_{\nu=0}^{\infty}\Phi(\sigma_{\nu})^{\frac{1}{2^{\nu+1}}} & =& \prod_{\nu=0}^{\infty} \left( \frac{12 c^2 \left( \frac{2\gamma}{e}\right)^{2\gamma} 2^{2(4d+2\gamma)}}{\sigma^{4d+2\gamma}}\right)^{\frac{1}{2^{\nu+1}}} \prod_{\nu=0}^{\infty}  2^{\frac{(4d+2\gamma)\nu}{2^{\nu+1}}}  \label{eqn:sums} \\
& = &  \frac{12 c^2 \left( \frac{2\gamma}{e}\right)^{2\gamma} 2^{2(4d+2\gamma)}}{\sigma^{4d+2\gamma}}, \nonumber
\end{eqnarray}
where in~(\ref{eqn:sums}) we used that 
$$
\sum_{\nu =0}^{\infty} \frac{1}{2^{\nu + 1}}= 1, \qquad  \sum_{\nu =0}^{\infty} \frac{\nu}{2^{\nu + 1}}= 1.
$$

Now we prove $(\ref{philphiinf})$.  First we observe that $\Phi$ is a decreasing function and that $(\sigma_\nu)_{\nu \in \N}$ is a decreasing sequence, therefore $(\Phi(\sigma_\nu))_{\nu \in \N}$ is an increasing sequence.
Now we have
\begin{eqnarray}
\Phi(\sigma_\ell) \phi_\ell^{2^\ell} & = & \Phi(\sigma_\ell)^{\sum_{j=\ell}^\infty 2^{\ell-j-1}} \phi_\ell^{2^\ell} \nonumber \\
& = & \left( \Phi(\sigma_\ell)^{\sum_{j=\ell}^\infty 2^{-j-1}} \phi_\ell \right)^{2^\ell} \nonumber \\
& = & \left( \prod_{j=\ell}^{\infty}\Phi(\sigma_\ell)^{ \frac{1}{2^{j+1}}} \phi_\ell \right)^{2^\ell} \nonumber \\
& \leq & \left( \prod_{j=\ell}^{\infty}\Phi(\sigma_j)^{ \frac{1}{2^{j+1}}} \phi_\ell \right)^{2^\ell} \nonumber \\
& = & \left( \prod_{j=\ell}^{\infty}\Phi(\sigma_j)^{ \frac{1}{2^{j+1}}} \prod_{\nu=0}^{\ell-1} \Phi(\sigma_\nu)^{\frac{1}{2^{\nu+1}}} \right)^{2^\ell} \nonumber \\
& = & \left(\prod_{\nu=0}^{\infty} \Phi(\sigma_\nu)^{\frac{1}{2^{\nu+1}}} \right)^{2^\ell} \nonumber \\
&=& \phi_\infty^{2^\ell}.
\end{eqnarray}
Finally, the solution of the system~(\ref{eqn:sistema}) is
$$
\xi   \leq \frac{4c}{\phi_\infty} \min\left\{ \frac{2^{2(4d+2\gamma)}  3 c \left( \frac{2\gamma}{e}\right)^{2\gamma}}{1+2^{2(4d+2\gamma)} 3c \left( \frac{2\gamma}{e}\right)^{2\gamma}}, 1-\frac{3^d}{2^{6d-1}}\right\}.
$$
Using the value of $\phi_\infty$ we have that the largest solution is
$$
\xi= \frac{  \sigma^{4d+2\gamma}}{ 3c \left(\frac{2\gamma}{e }\right)^{2\gamma} 4^{4d+2\gamma}}\min\left\{ \frac{2^{4(2d+\gamma)} 3 c \left(\frac{2\gamma}{e }\right)^{2\gamma}  }{1+2^{4(2d+\gamma)} 3 c \left(\frac{2\gamma}{e }\right)^{2\gamma} }, 1-\frac{3^d}{2^{6d-1}}\right\}.
$$
Notice that, since $\sigma \leq 1$, $c \geq 1$ we have that 
$$
\xi < \frac{1}{ 3 \left(\frac{2\gamma}{e }\right)^{2\gamma} 4^{4d+2\gamma}}<1.
$$
\end{proof}

\end{document}